\documentclass{article}
\usepackage{arxiv}
\usepackage[utf8]{inputenc}
\usepackage[T1]{fontenc}
\usepackage{url}
\usepackage{mathptmx}
\usepackage{amsmath}
\usepackage{amsthm}
\usepackage{amssymb}
\usepackage[hidelinks]{hyperref}

\newtheoremstyle{thm-break}{10pt}{3pt}{\normalfont}{}{\bfseries}{}{\newline}{\thmname{#1}\thmnumber{ #2}\thmnote{ (#3)}\par\vspace{0.5\baselineskip}}
\newtheoremstyle{remark-example-space}{10pt}{3pt}{\normalfont}{}{\bfseries}{.}{0.5em}{}
\theoremstyle{thm-break}
\newtheorem{theorem}{Theorem}[section]
\theoremstyle{remark-example-space}
\newtheorem{remark}[theorem]{Remark}
\theoremstyle{thm-break}
\newtheorem{definition}[theorem]{Definition}
\theoremstyle{thm-break}
\newtheorem{proposition}[theorem]{Proposition}
\theoremstyle{thm-break}
\newtheorem{lemma}[theorem]{Lemma}
\theoremstyle{thm-break}
\newtheorem{corollary}[theorem]{Corollary}
\theoremstyle{remark-example-space}
\newtheorem{example}[theorem]{Example}

\newcounter{casei}
\newlength{\caselabelwidth}
\newenvironment{caselist}{%
  \begin{list}{(\roman{casei})}{%
    \usecounter{casei}%
    \setlength{\labelwidth}{\caselabelwidth}%
    \setlength{\labelsep}{0.4em}%
    \setlength{\leftmargin}{\dimexpr\caselabelwidth+\labelsep\relax}%
    \setlength{\itemsep}{0.35\baselineskip}%
    \setlength{\topsep}{0.35\baselineskip}%
    \setlength{\parsep}{0pt}%
    \setlength{\partopsep}{0pt}%
  }%
}{\end{list}}
\providecommand{\keywords}[1]{}
\renewcommand{\keywords}[1]{%
\begin{center}
\small\textbf{Keywords:} #1
\end{center}
}
\renewenvironment{proof}[1][Proof]{\par\noindent\textbf{#1. }\normalfont}{\hfill$\square$\par}
\newcommand{\doi}[1]{\href{https://doi.org/#1}{\nolinkurl{https://doi.org/#1}}}
\title{Spectral Sign Implication for Quantum Logic}
\author{%
\href{https://orcid.org/0000-0001-8878-9824}{Kenji Tokuo}\\
{\normalfont Department of Information Engineering, Oita College}\\
{\normalfont National Institute of Technology}\\
{\normalfont Oita 870-0152, Japan}\\
{\normalfont \texttt{tokuo@oita-ct.ac.jp}}%
}
\date{}
\renewcommand{\undertitle}{A preprint}

\begin{document}
\maketitle

\begin{abstract}
We study the spectral sign implication for quantum logic, defined by the nonnegative spectral projection of the operator obtained by subtracting the antecedent projection from the consequent projection. The construction agrees with classical material implication on commuting projections and compares arbitrary pairs of projections through the spectral structure of their difference. It satisfies Hardegree's four minimal implicative conditions, his law of contraposition, and a falsity condition. It differs from the standard polynomial implications in that its value need not belong to the ortholattice generated by its arguments. Within a uniform class of Borel constructions for pairs of projections, the operations satisfying entailment, contraposition, and the falsity condition correspond exactly to measurable choices of spectral branch. In the continuous subclass, these three conditions determine the spectral sign implication uniquely. In finite dimensions, the operation is also the largest among the acceptance projections of optimal projective tests for Helstrom discrimination between subspace states with priors proportional to rank. These results connect quantum implication with the operator theory of two projections and binary quantum discrimination.
\end{abstract}

\keywords{quantum logic, quantum implication, orthomodular lattices, spectral theory, operator algebras, quantum state discrimination}

\section{Introduction}
\label{sec:introduction}

Quantum logic represents experimental propositions by projections on a Hilbert space and organizes them by the lattice operations of closed subspaces \cite{birkhoff-von-neumann}. A persistent issue in this framework is the choice of implication because the classical material form need not internalize the projection order once distributivity fails. For example, if $P$ and $Q$ are distinct nonorthogonal line projections on $\mathbb C^2$, then $P^\perp\vee Q=I$ although $P\nleq Q$. Thus the classical lattice expression for material implication fails the entailment condition, which requires $P\Rightarrow Q=I$ exactly when $P\le Q$.

Polynomial implications on orthomodular lattices have been studied extensively. Herman, Marsden, and Piziak developed a systematic analysis of implication operations \cite{herman-marsden-piziak}, and Hardegree characterized the three polynomial material implications satisfying his four minimal implicative conditions \cite{hardegree}. More recent work has enlarged the class of quantum implications and has examined their logical and operational roles \cite{ozawa2017,ozawa2021,ozawa2026}. Rehder used operators derived from $PQP$ to define projection-valued connectives, including a material quasi-implication that coincides with the Sasaki implication \cite{rehder}. Younes and Schmitt studied a scalar-valued approach to quantum implication based on angles \cite{younes-schmitt}.

Throughout the introduction, $P$ and $Q$ denote projections, and $E^A(\Delta)$ denotes the spectral projection of a self-adjoint operator $A$ associated with a Borel set $\Delta$. Formal notation is fixed in Section~\ref{sec:preliminaries}. For projections, $P\le Q$ is equivalent to $Q-P\ge0$. Accordingly, we use $Q-P$ as the self-adjoint comparison operator. Passing from positivity to the nonnegative spectral projection, we obtain a projection-valued comparison for arbitrary pairs. The spectral and geometric properties of $Q-P$ belong to the established theory of differences of projections \cite{andruchow2014,bottcher-spitkovsky2010}. We call $E^{Q-P}([0,\infty))$ the \emph{truth projection} of the \emph{spectral sign implication}, which is formally defined in Section~\ref{sec:basic}. When $P$ and $Q$ commute, this value is the classical material implication. To our knowledge, this use of the spectral projection as a quantum logical connective has not previously been studied.

The main contribution of this paper is a characterization of this implication in terms of logical conditions and regularity. The operation satisfies Hardegree's four minimal implicative conditions, his law of contraposition, and a \emph{falsity condition}. It is not an ortholattice polynomial, even for two distinct nonorthogonal lines in $\mathbb C^2$. The standard decomposition for two projections represents operators on the generic summand by measurable $2\times2$ matrix fields over the spectrum of a central operator. On each \emph{generic fiber} of the uniform model, entailment and the falsity condition force a nontrivial truth projection, while contraposition restricts that projection to one of the two spectral branches of $Q-P$. Borel dependence permits measurable branch choices. Continuity prevents branch switching, and agreement with classical implication on commuting projections selects the positive branch.

This mechanism yields a complete classification within the uniform class of \emph{bounded Borel symbols}. The Borel operations satisfying entailment, contraposition, and the falsity condition correspond to Borel transversals, namely Borel sets $Z\subseteq[-1,1]$ such that $0,1\in Z$, $-1\notin Z$, and exactly one of $t$ and $-t$ belongs to $Z$ for every $0<t<1$. For continuous symbols satisfying these three conditions, only the positive spectral choice remains. Within the universal $C^*$-algebra generated by two projections, the same hypotheses also force the defining self-adjoint element to be a continuous function of $q-p$. The main results are Theorem~\ref{thm:branch-reduction}, Theorem~\ref{thm:symbol-classification}, Theorem~\ref{thm:continuous-uniqueness}, and Theorem~\ref{thm:observable-reduction}.

We also obtain an independent variational characterization from the operator difference. In finite dimensions, the implication is the largest projection maximizing $\operatorname{Tr}(R(Q-P))$. Using the standard Helstrom formulation of binary discrimination \cite{helstrom,holevo}, we show that, with normalized subspace states and priors proportional to rank, this projection is the largest among the acceptance projections of optimal projective tests. The De Morgan dual has a complementary extremal interpretation. On each principal angle plane of a generic pair, the De Morgan dual agrees with the midpoint projection known from the geometry of pairs of projections \cite{andruchow2014,davis}.

This paper is organized as follows. Section~\ref{sec:preliminaries} collects the standard facts about pairs of projections used throughout the subsequent sections. Section~\ref{sec:basic} examines the logical properties, qubit structure, structural limitations, and De Morgan geometry of the spectral sign implication. Section~\ref{sec:uniform} introduces the uniform construction framework and derives the branch reduction. Section~\ref{sec:classification} presents the Borel and continuous classifications, the observable reduction, and the independence results. Section~\ref{sec:variational} establishes the variational characterization and relates it to the Helstrom problem. Section~\ref{sec:qst} describes the connection with quantum set theory. Section~\ref{sec:conclusion} summarizes the main results and indicates directions for further study.

\section{Preliminaries}
\label{sec:preliminaries}

Throughout this paper, we let $\mathcal H$ be a complex Hilbert space and $\mathcal M$ a von Neumann algebra on $\mathcal H$, with $\operatorname{Proj}(\mathcal M)$ its projection lattice. We use capital letters $P,Q,R$ for projections and $A$ for a general self-adjoint operator, and reserve $B$ and $S$ for the comparison operators defined below. For a self-adjoint operator $A$ and a Borel set $\Delta\subseteq\mathbb R$, we write $E^A(\Delta)$ for the associated spectral projection. We take $I$ as the identity projection, set $P^\perp=I-P$, and use $M_2(\mathbb C)$ and $M_2(\mathbb C)_{\mathrm{sa}}$ for the algebra of $2\times2$ complex matrices and its self-adjoint part, respectively. For operators $T_1,\ldots,T_n$, we write $W^*(T_1,\ldots,T_n)$ for the von Neumann algebra they generate.

\begin{definition}[Generic position]
\label{def:generic-position}
Two projections $P$ and $Q$ are in \emph{generic position} if $P\wedge Q=P\wedge Q^\perp=P^\perp\wedge Q=P^\perp\wedge Q^\perp=0$.
\end{definition}

The terminology is due to Halmos \cite{halmos}. Every pair of projections admits an orthogonal decomposition into four common eigenspaces and a generic summand. For $i,j\in\{0,1\}$, let $\mathcal H_{ij}$ be the subspace on which $P=iI$ and $Q=jI$. Thus $\mathcal H_{11}=\operatorname{Ran}P\cap\operatorname{Ran}Q$, $\mathcal H_{10}=\operatorname{Ran}P\cap\ker Q$, $\mathcal H_{01}=\ker P\cap\operatorname{Ran}Q$, and $\mathcal H_{00}=\ker P\cap\ker Q$. On each of these four subspaces, both projections act as scalars and therefore commute. We denote the orthogonal complement of their direct sum by $\mathcal H_g$. The restrictions of $P$ and $Q$ to $\mathcal H_g$ are in generic position.

The next formula separates the three exceptional spectral values that occur on the commuting sectors.

\begin{lemma}[Exceptional spectral sectors]
\label{lem:exceptional-sectors}
For all projections $P,Q$, one has $E^{Q-P}(\{1\})=P^\perp\wedge Q$, $E^{Q-P}(\{-1\})=P\wedge Q^\perp$, and $E^{Q-P}(\{0\})=(P\wedge Q)\vee(P^\perp\wedge Q^\perp)$.
\end{lemma}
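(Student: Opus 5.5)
The plan is to identify each spectral projection with the kernel of a suitable operator and then translate the kernel condition into the lattice language. For a self-adjoint $A$ and a real $\lambda$, the projection $E^A(\{\lambda\})$ is the projection onto $\ker(A-\lambda I)$. So it suffices to determine $\ker(Q-P-\lambda I)$ for $\lambda\in\{1,-1,0\}$.

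\emph{The value $\lambda=1$.} Let $x$ satisfy $(Q-P)x=x$. Then
\[
\langle Qx,x\rangle-\langle Px,x\rangle=\|x\|^2 .
\]
Since $0\le\langle Qx,x\rangle\le\|x\|^2$ and $\langle Px,x\rangle\ge0$, this forces $\langle Qx,x\rangle=\|Qx\|^2=\|x\|^2$ and $\langle Px,x\rangle=\|Px\|^2=0$. Hence $Qx=x$ and $Px=0$, so $x\in\operatorname{Ran}Q\cap\ker P=\mathcal H_{01}$. Conversely, every vector of $\mathcal H_{01}$ satisfies $(Q-P)x=x$. The range of $P^\perp\wedge Q$ is $\ker P\cap\operatorname{Ran}Q$, which gives the first formula.

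\emph{The value $\lambda=-1$.} This follows from the first case with the roles of $P$ and $Q$ swapped, since $E^{Q-P}(\{-1\})=E^{P-Q}(\{1\})$.

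\emph{The value $\lambda=0$.} This is the main step, because here the kernel is not obviously one of the four sectors. Suppose $Qx=Px$ and set $y=Px=Qx$. Then $y\in\operatorname{Ran}P\cap\operatorname{Ran}Q=\mathcal H_{11}$. The vector $x-y$ satisfies $P(x-y)=Px-Py=y-y=0$, and similarly $Q(x-y)=0$. So $x-y\in\mathcal H_{00}$, and therefore $x\in\mathcal H_{11}\oplus\mathcal H_{00}$. Conversely, $Q-P$ vanishes on both $\mathcal H_{11}$ and $\mathcal H_{00}$. These two subspaces are orthogonal, since one lies in $\operatorname{Ran}P$ and the other in $\ker P$. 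Their orthogonal sum is already closed, so its projection is the join $(P\wedge Q)\vee(P^\perp\wedge Q^\perp)$. This yields the third formula.

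An alternative route would use the Halmos decomposition and show that on $\mathcal H_g$ the operator $Q-P$ has no eigenvalues in $\{-1,0,1\}$. The direct kernel argument above avoids the $2\times2$ model entirely, so I would use it.
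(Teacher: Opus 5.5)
Your proposal is correct and follows essentially the same argument as the paper. For $\lambda=1$ you use the same inequality squeeze, for $\lambda=-1$ the same $P\leftrightarrow Q$ swap, and for $\lambda=0$ the same decomposition $x=y+(x-y)$ with $y=Px=Qx$. You also spell out the orthogonality and closedness of $\mathcal H_{11}\oplus\mathcal H_{00}$, which the paper leaves implicit.
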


\begin{proof}
Suppose $(Q-P)x=x$. Then $\langle Qx,x\rangle-\langle Px,x\rangle=\|x\|^2$. Since $0\le\langle Qx,x\rangle\le\|x\|^2$ and $\langle Px,x\rangle\ge0$, equality forces $Qx=x$ and $Px=0$. The converse is immediate. Hence the first formula holds. The second follows by interchanging $P$ and $Q$ and using $E^{P-Q}(\{1\})=E^{Q-P}(\{-1\})$.

If $(Q-P)x=0$, put $y=Px=Qx$. Then $Py=y$ and $Qy=y$. Moreover, $P(x-y)=Q(x-y)=0$. Thus $x=y+(x-y)$ belongs to $(P\wedge Q)\mathcal H\oplus(P^\perp\wedge Q^\perp)\mathcal H$. The converse inclusion is immediate.
\end{proof}

The zero spectral sector gives the kernel formula used below.

\begin{lemma}[Kernel formula]
\label{lem:kernel}
For all projections $P,Q$, $\ker(Q-P)=(P\wedge Q)\mathcal H\oplus(P^\perp\wedge Q^\perp)\mathcal H$.
\end{lemma}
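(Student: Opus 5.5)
The kernel formula follows directly from Lemma~\ref{lem:exceptional-sectors}. For any self-adjoint operator $A$, the kernel of $A$ is the range of the spectral projection $E^A(\{0\})$. Applying this with $A=Q-P$ and invoking the third formula of Lemma~\ref{lem:exceptional-sectors} gives
\[
\ker(Q-P)=E^{Q-P}(\{0\})\mathcal H=\bigl((P\wedge Q)\vee(P^\perp\wedge Q^\perp)\bigr)\mathcal H .
\]

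It then remains to see that the join of these two projections has range equal to the orthogonal direct sum of their ranges. The projections $P\wedge Q$ and $P^\perp\wedge Q^\perp$ are orthogonal. Indeed, $P\wedge Q\le P$ and $P^\perp\wedge Q^\perp\le P^\perp$, and $P\perp P^\perp$. For orthogonal projections $R_1,R_2$, the join is $R_1+R_2$, and its range is $R_1\mathcal H\oplus R_2\mathcal H$. This gives the stated formula.

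Alternatively, one can avoid the spectral language and reuse the second paragraph of the proof of Lemma~\ref{lem:exceptional-sectors}. If $(Q-P)x=0$, set $y=Px=Qx$. Then $y\in\operatorname{Ran}P\cap\operatorname{Ran}Q$, and $x-y\in\ker P\cap\ker Q$. This proves one inclusion. The reverse inclusion is immediate, because $Q-P$ annihilates both summands. No step here is a real obstacle. The only point to state explicitly is the orthogonality that turns the lattice join into a direct sum.
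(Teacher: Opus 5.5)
Your proposal is correct and follows the paper's route. The paper proves the kernel formula in one line by reading off the zero spectral sector $E^{Q-P}(\{0\})=(P\wedge Q)\vee(P^\perp\wedge Q^\perp)$ from Lemma~\ref{lem:exceptional-sectors}, which is exactly your first paragraph. Your check that $P\wedge Q$ and $P^\perp\wedge Q^\perp$ are orthogonal, so that the join is an orthogonal direct sum, is a detail the paper leaves implicit. Your alternative argument is the same vector computation already used in the proof of that lemma.
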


\begin{proof}
This is the statement about the zero spectral sector in Lemma~\ref{lem:exceptional-sectors}.
\end{proof}

The following elementary operators will be used throughout this paper.

\begin{definition}[Comparison operators]
\label{def:comparison-operators}
Let $P,Q$ be projections and put $B=Q-P$ and $S=P+Q-I$.
\end{definition}

These comparison operators are standard in the operator theory of pairs of projections \cite{avron-seiler-simon}.

\begin{lemma}[Projection pair identities]
\label{lem:two-projection-identities}
The operators in Definition~\ref{def:comparison-operators} satisfy $BS=-SB$ and $B^2+S^2=I$. Both $B^2$ and $S^2$ commute with $P$ and $Q$. On the generic summand, $\ker B=\ker S=0$.
\end{lemma}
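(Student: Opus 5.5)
The plan is to verify the algebraic identities by direct expansion using $P^2=P$ and $Q^2=Q$, and then to obtain the kernel statement from Lemma~\ref{lem:kernel} together with a symmetric argument for $S$.

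\textbf{The identities.} First expand
\[
BS=(Q-P)(P+Q-I)=QP+Q-Q-P-PQ+P=QP-PQ,
\]
and similarly
\[
SB=(P+Q-I)(Q-P)=PQ-P+Q-QP-Q+P=PQ-QP,
\]
so $BS=-SB$. Next compute
\[
B^2=Q-QP-PQ+P,\qquad S^2=P+Q+I+PQ+QP-2P-2Q=I-P-Q+PQ+QP.
\]
Summing gives $B^2+S^2=I$. For the commutation claim, note that $PB^2=P-PQP$ and $B^2P=P-PQP$. Hence $B^2$ commutes with $P$, and by symmetry in $P$ and $Q$ it commutes with $Q$. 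Since $S^2=I-B^2$, the same holds for $S^2$.

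\textbf{The kernels.} Work on the generic summand $\mathcal H_g$. This is a reducing subspace for $P$ and $Q$, hence for $B$ and $S$, so the restrictions make sense and the kernel of $B|_{\mathcal H_g}$ is $\ker B\cap\mathcal H_g$. By Lemma~\ref{lem:kernel}, $\ker B=\mathcal H_{11}\oplus\mathcal H_{00}$, which is orthogonal to $\mathcal H_g$. Therefore $\ker B\cap\mathcal H_g=0$.

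For $S$, suppose $Sx=0$, that is, $Px=(I-Q)x=Q^\perp x$. This is exactly the kernel condition for the pair $(P,Q^\perp)$, since $Q^\perp-P$ annihilates $x$. Applying Lemma~\ref{lem:kernel} to that pair gives
\[
\ker S=(P\wedge Q^\perp)\mathcal H\oplus(P^\perp\wedge Q)\mathcal H=\mathcal H_{10}\oplus\mathcal H_{01},
\]
which is again orthogonal to $\mathcal H_g$. Hence $\ker S\cap\mathcal H_g=0$.

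The only point needing care is that $\mathcal H_g$ reduces $P$ and $Q$. This holds because each $\mathcal H_{ij}$ is invariant under both projections, so their orthogonal complement is as well. The rest is routine computation.
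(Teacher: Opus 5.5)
Your proposal is correct and follows essentially the same route as the paper. The paper also expands directly, uses $PB^2=B^2P=P-PQP$ together with $S^2=I-B^2$ for the commutation, and applies Lemma~\ref{lem:kernel} to the pairs $(P,Q)$ and $(P,Q^\perp)$, the latter via $S=P-Q^\perp$, for injectivity on $\mathcal H_g$.
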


\begin{proof}
Multiplying directly, we obtain $BS=(Q-P)(P+Q-I)=QP-PQ$ and $SB=PQ-QP$, so $BS=-SB$. Also, $B^2=P+Q-PQ-QP$ and $S^2=I-P-Q+PQ+QP$, hence $B^2+S^2=I$. Moreover, $PB^2=B^2P=P-PQP$. The same statement for $Q$ follows symmetrically. The commutation of $S^2=I-B^2$ follows as well. By Lemma~\ref{lem:kernel}, $\ker B=(P\wedge Q)\mathcal H\oplus(P^\perp\wedge Q^\perp)\mathcal H$, so $B$ is injective on the generic summand. The corresponding statement for $S=P-Q^\perp$ follows by replacing $Q$ with $Q^\perp$ in the same formula.
\end{proof}

The anticommutation relation gives a spectral symmetry on the generic summand. The lemma below is stated in terms of spectral projections, so it also applies when the spectrum has a continuous part.

\begin{lemma}[Spectral symmetry]
\label{lem:spectral-symmetry}
Let $P,Q$ be in generic position and let $B,S$ be as in Definition~\ref{def:comparison-operators}. Put $U=\operatorname{sgn}(S)$, where $\operatorname{sgn}$ denotes the usual sign function and is applied to $S$ by Borel functional calculus. Then $U$ is a self-adjoint unitary, $UBU=-B$, and $U E^B(\Delta)U=E^B(-\Delta)$ for every Borel set $\Delta\subseteq\mathbb R$.
\end{lemma}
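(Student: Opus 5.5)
The plan is to derive everything from the two identities in Lemma~\ref{lem:two-projection-identities}, namely $BS=-SB$ and $\ker S=0$ on the generic summand. Since $P,Q$ are in generic position, the whole space is the generic summand, so $S$ is an injective self-adjoint operator. Hence $E^S(\{0\})=0$, and the Borel function $\operatorname{sgn}$ takes only the values $\pm1$ almost everywhere with respect to the spectral measure of $S$. Consequently $U=\operatorname{sgn}(S)$ is self-adjoint, with $U^2=\operatorname{sgn}^2(S)=E^S(\mathbb R\setminus\{0\})=I$. So $U$ is a self-adjoint unitary and $U=U^*=U^{-1}$.

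The main step is to show that $U$ anticommutes with $B$. From $BS=-SB$ we get $BS^2=S^2B$, so $B$ commutes with $|S|=(S^2)^{1/2}$, because $|S|$ is a norm limit of polynomials in $S^2$. Next I would write $S=U|S|$ with $U$ and $|S|$ commuting. Then
\[
B U|S| = BS = -SB = -U|S|B = -UB|S|,
\]
so $(BU+UB)|S|=0$. Since $|S|$ is injective and self-adjoint, it has dense range, and the bounded operator $BU+UB$ vanishes there. Hence $BU=-UB$, which gives $UBU=-B$ after multiplying by $U$ on the right and using $U^2=I$.

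This step, passing from anticommutation with $S$ to anticommutation with $\operatorname{sgn}(S)$, is the one that needs care. An alternative route would use $U=\lim f_n(S)$ strongly for odd continuous $f_n$ approximating $\operatorname{sgn}$, together with the fact that $B$ anticommutes with every odd polynomial in $S$. The dense-range argument avoids such approximations, so I would use that one.

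For the spectral statement, I would note that $A\mapsto UAU$ is a $*$-automorphism of $B(\mathcal H)$, implemented by a unitary, so it commutes with Borel functional calculus: $U f(B) U = f(UBU)=f(-B)$ for every bounded Borel $f$. This follows because unitary conjugation of the spectral measure of $B$ gives a spectral measure for $UBU$, and uniqueness of the spectral resolution does the rest. Taking $f=\chi_\Delta$ gives $UE^B(\Delta)U=E^{-B}(\Delta)=\chi_\Delta(-B)=\chi_{-\Delta}(B)=E^B(-\Delta)$, which completes the proof.
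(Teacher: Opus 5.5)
Your proposal is correct and follows essentially the same route as the paper's proof: $B$ commutes with $|S|$, the polar decomposition $S=U|S|$ holds with $U$ a self-adjoint unitary because $\ker S=0$, the dense range of $|S|$ gives $BU=-UB$, and unitary covariance of the Borel functional calculus gives $UE^B(\Delta)U=E^B(-\Delta)$. Your added details, $U^2=E^S(\mathbb R\setminus\{0\})=I$ and $\chi_\Delta(-B)=\chi_{-\Delta}(B)$, fill in steps the paper leaves implicit.
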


\begin{proof}
By Lemma~\ref{lem:two-projection-identities}, $BS=-SB$, $B^2S=SB^2$, and $BS^2=S^2B$. Hence $B$ commutes with $|S|=(S^2)^{1/2}$. On the generic summand, $\ker S=0$. The polar decomposition is therefore $S=U|S|$ with $U$ a self-adjoint unitary. From $BU|S|=-UB|S|$ and the density of $\operatorname{Ran}|S|$, we obtain $BU=-UB$. Hence $UBU=-B$. By unitary covariance of the Borel functional calculus, $U E^B(\Delta)U=E^{UBU}(\Delta)=E^{-B}(\Delta)=E^B(-\Delta)$.
\end{proof}

\section{Spectral Sign Implication}
\label{sec:basic}

We begin with the definition and basic laws of the spectral sign implication. The qubit calculation comes next and is used in the subsequent analysis of structural limitations and the De Morgan dual.

\subsection{Basic Properties}
\label{subsec:basic-properties}

The starting point is the nonnegative spectral sector of the comparison operator.

\begin{definition}[Spectral sign implication]
\label{def:spectral-implication}
Let $P,Q\in\operatorname{Proj}(\mathcal M)$. The \emph{spectral sign implication} is
\[
P\Rightarrow_{\mathrm{sp}}Q=E^{Q-P}([0,\infty)).
\]
\end{definition}

The closed threshold includes the zero sector. The next proposition shows that this choice is forced if the operation is to agree with classical material implication on commuting projections.

\begin{proposition}[Boolean reduction]
\label{prop:boolean-reduction}
If $P$ and $Q$ commute, then $P\Rightarrow_{\mathrm{sp}}Q=P^\perp\vee Q$.
\end{proposition}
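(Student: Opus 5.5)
When $P$ and $Q$ commute, the comparison operator $Q-P$ decomposes along the four commuting sectors, and we can read off its nonnegative spectral projection directly. Set
\[
E_{11}=PQ,\qquad E_{10}=PQ^\perp,\qquad E_{01}=P^\perp Q,\qquad E_{00}=P^\perp Q^\perp .
\]
Because $P$ and $Q$ commute, these are mutually orthogonal projections summing to $I$. In the notation of Section~\ref{sec:preliminaries} they are the projections onto $\mathcal H_{11},\mathcal H_{10},\mathcal H_{01},\mathcal H_{00}$, and the generic summand $\mathcal H_g$ is $0$. Expanding $Q=QP+QP^\perp$ and $P=PQ+PQ^\perp$ gives
\[
Q-P=E_{01}-E_{10}=1\cdot E_{01}+(-1)\cdot E_{10}+0\cdot(E_{11}+E_{00}).
\]
This is a spectral resolution with eigenvalues in $\{-1,0,1\}$. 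By uniqueness of the spectral measure,
\[
E^{Q-P}([0,\infty))=E_{01}+E_{11}+E_{00}=I-E_{10}=I-PQ^\perp .
\]

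An alternative route avoids the direct expansion. Lemma~\ref{lem:exceptional-sectors} gives $E^{Q-P}(\{-1\})=P\wedge Q^\perp$, and it identifies the projections for the eigenvalues $0$ and $1$. Since $\mathcal H_g=0$, the spectral measure of $Q-P$ is supported on $\{-1,0,1\}$. Hence
\[
E^{Q-P}([0,\infty))=I-E^{Q-P}(\{-1\})=I-(P\wedge Q^\perp).
\]
For commuting projections $P\wedge Q^\perp=PQ^\perp$, so the two routes give the same answer.

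It remains to identify this projection with $P^\perp\vee Q$. By the De Morgan law in the projection lattice, $(P\wedge Q^\perp)^\perp=P^\perp\vee Q$, so $I-(P\wedge Q^\perp)=P^\perp\vee Q$. For commuting projections one can also verify this directly:
\[
P^\perp\vee Q=P^\perp+Q-P^\perp Q=I-P+PQ=I-PQ^\perp .
\]

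No step is a real obstacle. The only point needing care is that the closed threshold $[0,\infty)$ picks up the zero sector $E_{11}+E_{00}$, which is exactly the part of $P^\perp\vee Q$ coming from $P\wedge Q$ and $P^\perp\wedge Q^\perp$. An open threshold $(0,\infty)$ would lose it, which is the sense in which the text says the closed choice is forced.
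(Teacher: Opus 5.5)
Your proof is correct and follows the same route as the paper. Both decompose $Q-P$ over the four commuting sectors, note that it equals $I$ on $\mathcal H_{01}$, $-I$ on $\mathcal H_{10}$ and $0$ elsewhere, and identify the complement of $\mathcal H_{10}$ with $P^\perp\vee Q$; your explicit computations $Q-P=E_{01}-E_{10}$ and $P^\perp\vee Q=I-PQ^\perp$ just spell out details the paper leaves implicit.
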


\begin{proof}
The value of $Q-P$ is $0$ on $\mathcal H_{11}$ and $\mathcal H_{00}$. Its values on $\mathcal H_{10}$ and $\mathcal H_{01}$ are $-I$ and $I$, respectively. The nonnegative spectral projection therefore contains every sector except $\mathcal H_{10}$. This is exactly $P^\perp\vee Q$.
\end{proof}

The operation is local in the usual sense for a pair of projections. We state both parts explicitly.

\begin{proposition}[Locality]
\label{prop:locality}
The projection $P\Rightarrow_{\mathrm{sp}}Q$ belongs to $W^*(P,Q)$. If a projection $R$ commutes with $P$ and $Q$, then $(P\Rightarrow_{\mathrm{sp}}Q)\wedge R=((P\wedge R)\Rightarrow_{\mathrm{sp}}(Q\wedge R))\wedge R$.
\end{proposition}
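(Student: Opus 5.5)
The first assertion follows directly from the Borel functional calculus. The operator $B=Q-P$ lies in $W^*(P,Q)$, and every spectral projection $E^B(\Delta)$ of a self-adjoint element of a von Neumann algebra belongs to the von Neumann algebra generated by that element. Hence $E^{B}([0,\infty))\in W^*(B)\subseteq W^*(P,Q)$. If $\mathcal M$ is the ambient algebra, then $W^*(P,Q)\subseteq\mathcal M$, so the value is again in $\operatorname{Proj}(\mathcal M)$.

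For the second assertion I would reduce both sides to spectral projections of operators compressed by $R$. Since $R$ commutes with $P$ and $Q$, we have $P\wedge R=PR$ and $Q\wedge R=QR$, because products of commuting projections are projections onto the intersection of ranges. Put $B_R=QR-PR=BR$. The key observation is that $R$ commutes with $B$, so $R$ reduces $B$: relative to $\mathcal H=R\mathcal H\oplus R^\perp\mathcal H$ we have $B=B|_{R\mathcal H}\oplus B|_{R^\perp\mathcal H}$ and $B_R=B|_{R\mathcal H}\oplus 0$. Spectral projections respect reducing decompositions, so
\[
E^{B}([0,\infty))R=E^{B|_{R\mathcal H}}([0,\infty))\oplus 0=E^{B_R}([0,\infty))R.
\]
In the same way, $R$ commutes with $B_R$ and therefore with every $E^{B_R}(\Delta)$. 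Both $E^{B}([0,\infty))$ and $E^{B_R}([0,\infty))$ thus commute with $R$, so their meets with $R$ are the products with $R$, and the displayed identity is exactly the claimed equality.

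The main step needing care is the reduction claim that $E^{A}(\Delta)$ commutes with any operator commuting with $A$ and splits as the direct sum of the spectral projections of the restrictions. This is standard: $E^A(\Delta)$ belongs to $W^*(A)$, which lies in the bicommutant, and the restriction $A|_{R\mathcal H}$ has spectral measure $E^A(\cdot)|_{R\mathcal H}$, since the restriction map is a normal $*$-homomorphism on $\{R\}'$ and therefore commutes with Borel functional calculus. A consistency check is that on $R^\perp\mathcal H$ the compressed operator $B_R$ is $0$, so $E^{B_R}([0,\infty))$ contains $R^\perp$. This part is removed by the meet with $R$, which is why the statement intersects with $R$ on both sides.
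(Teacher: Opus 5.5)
Your proof is correct and follows essentially the same route as the paper. You get the first assertion from Borel functional calculus of $Q-P\in W^*(P,Q)$. For the second, you use the reducing decomposition $\mathcal H=R\mathcal H\oplus R^\perp\mathcal H$, on which $Q\wedge R-P\wedge R$ is the restriction of $Q-P$ plus zero. You also make explicit why both meets with $R$ reduce to products with $R$, a point the paper leaves implicit.
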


\begin{proof}
The first statement follows from Borel functional calculus because $Q-P\in W^*(P,Q)$. For the second statement, decompose $\mathcal H=R\mathcal H\oplus R^\perp\mathcal H$. The operators $P$ and $Q$ reduce this decomposition. On $R\mathcal H$, the difference of $P\wedge R$ and $Q\wedge R$ is the restriction of $Q-P$. On $R^\perp\mathcal H$, both restricted projections vanish. Taking the nonnegative spectral projection and then meeting with $R$, we obtain the asserted equality.
\end{proof}

The operation is also natural under normal morphisms of von Neumann algebras.

\begin{proposition}[Normal covariance]
\label{prop:normal-covariance}
Let $\mathcal N$ be a von Neumann algebra and let $\pi:\mathcal M\to\mathcal N$ be a normal unital $*$-homomorphism. Then $\pi(P\Rightarrow_{\mathrm{sp}}Q)=\pi(P)\Rightarrow_{\mathrm{sp}}\pi(Q)$.
\end{proposition}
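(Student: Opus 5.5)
The plan is to show that a normal unital $*$-homomorphism commutes with Borel functional calculus for self-adjoint elements. Then we apply this to $B=Q-P$ and the Borel function $f=\chi_{[0,\infty)}$. Since $\pi$ is linear, $\pi(Q-P)=\pi(Q)-\pi(P)$. It therefore suffices to prove that
\[
\pi\bigl(f(A)\bigr)=f(\pi(A))
\]
for every self-adjoint $A\in\mathcal M$ and every bounded Borel function $f$ on $\mathbb R$.

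The first step is the polynomial case. Because $\pi$ is a unital $*$-homomorphism, $\pi(p(A))=p(\pi(A))$ for every polynomial $p$. Since $\|\pi(A)\|\le\|A\|$, the spectrum of $\pi(A)$ lies in $[-\|A\|,\|A\|]$. Uniform approximation on this interval and norm continuity of $\pi$ (every $*$-homomorphism is contractive) then give $\pi(g(A))=g(\pi(A))$ for every continuous $g$.

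The second step extends this to Borel functions, using normality. Let $\mathcal F$ be the class of bounded Borel functions $f$ for which the identity holds. This class contains the continuous functions and is a vector space. It is also closed under bounded pointwise convergence of sequences. Indeed, if $f_n\to f$ boundedly and pointwise, then $f_n(A)\to f(A)$ strongly, hence $\sigma$-weakly because the sequence is bounded. Normality of $\pi$ gives $\pi(f_n(A))\to\pi(f(A))$ $\sigma$-weakly. Likewise, $f_n(\pi(A))\to f(\pi(A))$ in the weak operator topology, by dominated convergence applied to the spectral measures of $\pi(A)$ in $\mathcal N$. Limits agree, so $f\in\mathcal F$. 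The smallest class of functions containing $C(\mathbb R)$ and closed under bounded pointwise sequential limits is the class of all bounded Borel functions. Hence $\mathcal F$ consists of all bounded Borel functions.

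For the concrete case, one can avoid the monotone class argument. Take continuous functions $g_n$ with $0\le g_n\le1$ decreasing pointwise to $\chi_{[0,\infty)}$, for instance $g_n(t)=\min(1,\max(0,1+nt))$. Then $g_n(A)\downarrow E^A([0,\infty))$ strongly, and the monotone convergence needs only normality in its order-theoretic form: $\pi$ preserves infima of bounded decreasing nets. So $\pi(E^{Q-P}([0,\infty)))=\inf_n g_n(\pi(Q)-\pi(P))=E^{\pi(Q)-\pi(P)}([0,\infty))$.

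The main obstacle is the passage from continuous to Borel functions. Without normality this step genuinely fails, since a nonnormal representation need not preserve spectral projections. The limit must be justified on both sides: $\pi$ handles the limit in $\mathcal M$, and the spectral theorem in $\mathcal N$ handles the limit on the image side. This is also where we use that $E^{\pi(A)}$ is computed in $\mathcal N$, which is a von Neumann algebra and therefore closed under these strong limits.
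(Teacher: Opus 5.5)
Your proposal is correct, and its concrete part is essentially the paper's proof. The paper approximates $E^{Q-P}((-\infty,0))$ from below by an increasing sequence of continuous functions, uses normality to preserve the supremum, and then takes orthocomplements using unitality; you approximate $E^{Q-P}([0,\infty))$ from above and use preservation of infima, which is the same argument, while your monotone class extension to all bounded Borel functions is valid but goes beyond what the statement needs.
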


\begin{proof}
Put $A=Q-P$. Let $f_n:\mathbb R\to[0,1]$ be continuous for $n\ge1$, equal to $1$ on $(-\infty,-1/n]$, equal to $0$ on $[0,\infty)$, and linear on $[-1/n,0]$. Then $f_n(A)$ increases strongly to $E^A(({-\infty},0))$. By continuous functional calculus, $\pi(f_n(A))=f_n(\pi(A))$. Since $\pi$ is normal, it preserves the supremum of this bounded increasing sequence. Hence $\pi(E^A(({-\infty},0)))=E^{\pi(A)}(({-\infty},0))$. Taking orthocomplements and using unitality, we obtain $\pi(E^A([0,\infty)))=E^{\pi(A)}([0,\infty))$. Since $\pi(A)=\pi(Q)-\pi(P)$, the asserted covariance follows.
\end{proof}

We consider Hardegree's four minimal implicative conditions \cite{hardegree}, together with his law of contraposition (CN) and a condition on the bottom value. These conditions are stated explicitly in the following definition.

\begin{definition}[Implication conditions]
\label{def:implication-conditions}
A binary operation $\Rightarrow$ on projections satisfies the following conditions when the stated relations hold for all projections $P,Q$.
\begin{caselist}
\item \emph{Entailment} (E): $P\Rightarrow Q=I$ exactly when $P\le Q$.
\item \emph{Modus ponens} (MP): $P\wedge(P\Rightarrow Q)\le Q$.
\item \emph{Modus tollens} (MT): $Q^\perp\wedge(P\Rightarrow Q)\le P^\perp$.
\item The \emph{negation condition} (NG): $P\wedge Q^\perp\le(P\Rightarrow Q)^\perp$.
\item \emph{Contraposition} (CN): $P\Rightarrow Q=Q^\perp\Rightarrow P^\perp$.
\item The \emph{falsity condition} (F): $P\Rightarrow Q=0$ exactly when $P=I$ and $Q=0$.
\end{caselist}
\end{definition}

For the uniform constructions considered later, a condition is understood to hold for every pair of projections on every complex Hilbert space.

The falsity condition requires the operation to take the value $0$ exactly in the same case as classical material implication. It is not automatically satisfied by the standard polynomial quantum implications. Hardegree's three polynomial material implications satisfying all four minimal implicative conditions are the Sasaki implication, the contrapositive Sasaki implication, and the relevance implication. All three satisfy his nonstrictness condition (NS), namely $I\Rightarrow Q=Q$ and $P\Rightarrow0=P^\perp$ \cite{hardegree}. The first two satisfy (F), while the relevance implication can take the value $0$ on a generic qubit pair. Hence (F) does not follow from Hardegree's four minimal implicative conditions even when (NS) is added. The comparison in Subsection~\ref{subsec:qubit} gives an explicit example.

The spectral sign implication satisfies modus ponens and modus tollens in stronger equality forms.

\begin{theorem}[Basic implication laws]
\label{thm:basic-laws}
The spectral sign implication satisfies (E), (MP), (MT), (NG), and (F). More precisely,
\[
P\wedge(P\Rightarrow_{\mathrm{sp}}Q)=P\wedge Q,
\qquad
Q^\perp\wedge(P\Rightarrow_{\mathrm{sp}}Q)=P^\perp\wedge Q^\perp.
\]
\end{theorem}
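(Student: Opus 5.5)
Write $B=Q-P$ and $T=P\Rightarrow_{\mathrm{sp}}Q=E^B([0,\infty))$. The whole argument rests on two equality statements, $P\wedge T=P\wedge Q$ and $Q^\perp\wedge T=P^\perp\wedge Q^\perp$. The remaining conditions are quick consequences of them together with Lemma~\ref{lem:exceptional-sectors}.

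\textbf{The two equalities.} I would prove them by a quadratic-form argument.
\begin{itemize}
\item[] If $x\in\operatorname{Ran}P\cap\operatorname{Ran}T$, then $\langle Bx,x\rangle\ge0$ because $x$ lies in the nonnegative spectral subspace of $B$.
\item[] On the other hand, $\langle Bx,x\rangle=\langle Qx,x\rangle-\|x\|^2\le0$ because $Px=x$.
\item[] Hence $\langle Qx,x\rangle=\|x\|^2$, so $Qx=x$, which gives $P\wedge T\le P\wedge Q$.
\item[] Conversely, if $Px=Qx=x$, then $Bx=0$, so $x\in\operatorname{Ran}E^B(\{0\})\subseteq\operatorname{Ran}T$.
\end{itemize}
The second equality is symmetric. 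If $Qx=0$ and $x\in\operatorname{Ran}T$, then $0\le\langle Bx,x\rangle=-\langle Px,x\rangle$, so $Px=0$. The reverse inclusion again comes from the kernel, via Lemma~\ref{lem:kernel}.

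\textbf{Consequences of the equalities.}
\begin{itemize}
\item[] (MP) and (MT) follow immediately, since $P\wedge Q\le Q$ and $P^\perp\wedge Q^\perp\le P^\perp$.
\item[] (NG): Lemma~\ref{lem:exceptional-sectors} gives $P\wedge Q^\perp=E^B(\{-1\})$. This projection is orthogonal to $E^B([0,\infty))=T$, so $P\wedge Q^\perp\le T^\perp$.
\end{itemize}

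\textbf{Entailment (E).} If $P\le Q$, then $B\ge0$, so $E^B([0,\infty))=I$. Conversely, if $T=I$, the first equality gives $P=P\wedge I=P\wedge Q$, so $P\le Q$.

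\textbf{Falsity (F).}
\begin{itemize}
\item[] If $P=I$ and $Q=0$, then $B=-I$, so $T=0$.
\item[] Conversely, suppose $T=0$. Then $E^B((-\infty,0))=I$, so $\ker B=0$. Lemma~\ref{lem:kernel} then gives $P\wedge Q=P^\perp\wedge Q^\perp=0$.
\item[] We also have $\langle Bx,x\rangle<0$ for every nonzero $x$, that is, $\langle Qx,x\rangle<\langle Px,x\rangle$. Taking $x\in\operatorname{Ran}Q$ gives $\|x\|^2<\langle Px,x\rangle\le\|x\|^2$, a contradiction. Hence $Q=0$. Taking $x\in\ker P$ gives $\langle Qx,x\rangle<0$, impossible, so $P=I$.
\end{itemize}

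The one point I expect to need care is the strict inequality $\langle Bx,x\rangle<0$ used in the falsity argument. I would justify it as follows. Since $T=0$, the spectral measure of $B$ is concentrated on $(-\infty,0)$, so $\langle Bx,x\rangle=\int\lambda\,d\mu_x(\lambda)$ with $\mu_x$ supported on $(-\infty,0)$. For $x\ne0$ the measure $\mu_x$ has positive total mass $\|x\|^2$, so the integral is strictly negative. This works even when the spectrum accumulates at $0$.
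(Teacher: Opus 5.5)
Your proposal is correct and follows the paper's proof essentially step by step: the same quadratic-form argument gives the two equalities (and hence (MP) and (MT)), Lemma~\ref{lem:exceptional-sectors} gives (NG), and strict negativity of $\langle (Q-P)x,x\rangle$ on nonzero vectors gives (F). The only deviations are cosmetic. You obtain the converse of (E) from the modus ponens equality rather than from $Q-P\ge0\iff P\le Q$, and you get $Q=0$ in (F) directly from the strict inequality on $\operatorname{Ran}Q$ rather than through the kernel of $Q-I$. Your spectral-measure justification of that strict inequality also fills in a detail the paper attributes only to ``the spectral theorem''.
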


\begin{proof}
For (E), $P\Rightarrow_{\mathrm{sp}}Q=I$ holds exactly when $Q-P\ge0$. Since $P$ and $Q$ are projections, $Q-P\ge0$ is equivalent to $P\le Q$.

For (MP), put $R=P\Rightarrow_{\mathrm{sp}}Q$. If $x\in P\mathcal H\cap R\mathcal H$, then $\langle x,(Q-P)x\rangle\ge0$. Since $Px=x$, this becomes $\langle Qx,x\rangle\ge\|x\|^2$. For a projection, $\langle Qx,x\rangle=\|Qx\|^2\le\|x\|^2$. Thus $Qx=x$, and hence $P\wedge R\le P\wedge Q$. The reverse inclusion lies in $\ker(Q-P)$ and hence in $R$. Therefore the modus ponens equality holds.

For (MT), if $x\in Q^\perp\mathcal H\cap R\mathcal H$, then $\langle x,(Q-P)x\rangle\ge0$ becomes $-\langle Px,x\rangle\ge0$. Hence $Px=0$. The reverse inclusion again lies in the kernel of $Q-P$. Therefore the modus tollens equality holds.

For (NG), Lemma~\ref{lem:exceptional-sectors} gives $P\wedge Q^\perp=E^{Q-P}(\{-1\})$. This projection is dominated by $E^{Q-P}(({-\infty},0))=R^\perp$.

For (F), suppose $E^{Q-P}([0,\infty))=0$. By the spectral theorem, $\langle x,(Q-P)x\rangle<0$ for every nonzero $x$. If $0\ne x\in\ker P$, however, then $\langle x,(Q-P)x\rangle=\|Qx\|^2\ge0$, a contradiction. Hence $\ker P=0$, so $P=I$. If $Q\ne0$, choose $0\ne x\in\operatorname{Ran}Q$. Then $(Q-I)x=0$, so the nonnegative spectral subspace of $Q-I$ is nonzero, contradicting $E^{Q-I}([0,\infty))=0$. Hence $Q=0$. The converse follows from $E^{-I}([0,\infty))=0$.
\end{proof}

Contraposition follows directly from the form of the comparison operator.

\begin{proposition}[Contraposition]
\label{prop:contraposition}
For all projections $P,Q$, $P\Rightarrow_{\mathrm{sp}}Q=Q^\perp\Rightarrow_{\mathrm{sp}}P^\perp$.
\end{proposition}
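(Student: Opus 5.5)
The plan is to compute the comparison operator of the contrapositive pair directly and observe that it coincides with the comparison operator of the original pair. By Definition~\ref{def:spectral-implication}, the right-hand side is
\[
Q^\perp\Rightarrow_{\mathrm{sp}}P^\perp=E^{P^\perp-Q^\perp}([0,\infty)).
\]
Since $P^\perp=I-P$ and $Q^\perp=I-Q$, the identity cancels, and we get
\[
P^\perp-Q^\perp=(I-P)-(I-Q)=Q-P.
\]
So the self-adjoint operator whose nonnegative spectral projection defines $Q^\perp\Rightarrow_{\mathrm{sp}}P^\perp$ is literally the same operator $B=Q-P$ from Definition~\ref{def:comparison-operators}.

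Once this identity is in hand, the spectral measures agree, because they are determined by the operator itself: $E^{P^\perp-Q^\perp}=E^{Q-P}$. Evaluating both at the Borel set $[0,\infty)$ gives $Q^\perp\Rightarrow_{\mathrm{sp}}P^\perp=E^{Q-P}([0,\infty))=P\Rightarrow_{\mathrm{sp}}Q$.

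I expect no real obstacle. The only point worth stating explicitly is that $P^\perp$ and $Q^\perp$ again lie in $\operatorname{Proj}(\mathcal M)$, so the right-hand side is defined in the same setting. As a consistency check, the result agrees with Theorem~\ref{thm:basic-laws}: under the swap $(P,Q)\mapsto(Q^\perp,P^\perp)$, the modus ponens equality becomes the modus tollens equality.
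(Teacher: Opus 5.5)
Your proposal is correct and matches the paper's proof exactly: the comparison operator for the contrapositive pair is $P^\perp-Q^\perp=Q-P$, so the two nonnegative spectral projections are literally the same projection.
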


\begin{proof}
The comparison operator on the right is $P^\perp-Q^\perp=Q-P$.
\end{proof}

The values at the top and bottom projections are also explicit.

\begin{proposition}[Boundary laws]
\label{prop:boundary-laws}
The identities $0\Rightarrow_{\mathrm{sp}}Q=I$, $P\Rightarrow_{\mathrm{sp}}I=I$, $I\Rightarrow_{\mathrm{sp}}Q=Q$, and $P\Rightarrow_{\mathrm{sp}}0=P^\perp$ hold for all projections $P,Q$.
\end{proposition}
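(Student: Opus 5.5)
The plan is to reduce all four identities to the commuting case and then apply Proposition~\ref{prop:boolean-reduction}. In each identity one of the arguments is $0$ or $I$, and both of these commute with every projection. So $P\Rightarrow_{\mathrm{sp}}Q=P^\perp\vee Q$ holds in every case, and it remains to evaluate this lattice expression. For $0\Rightarrow_{\mathrm{sp}}Q$ we obtain $I\vee Q=I$. For $P\Rightarrow_{\mathrm{sp}}I$ we obtain $P^\perp\vee I=I$. For $I\Rightarrow_{\mathrm{sp}}Q$ we obtain $0\vee Q=Q$. For $P\Rightarrow_{\mathrm{sp}}0$ we obtain $P^\perp\vee 0=P^\perp$.

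As an independent check, the identities can also be read off from the spectral definition. In the first two cases the comparison operator is $Q-0=Q\ge0$ or $I-P=P^\perp\ge0$, so the nonnegative spectral projection is $I$. This is also the content of (E), since $0\le Q$ and $P\le I$. In the third case the operator is $Q-I=-Q^\perp$, whose spectrum lies in $\{-1,0\}$. Its zero eigenspace is $\operatorname{Ran}Q$, so $E^{Q-I}([0,\infty))=E^{-Q^\perp}(\{0\})=Q$. The fourth case follows from the third by Proposition~\ref{prop:contraposition}: $P\Rightarrow_{\mathrm{sp}}0=I\Rightarrow_{\mathrm{sp}}P^\perp=P^\perp$.

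No step is a genuine obstacle. The only point requiring care is the computation of a spectral projection of $\pm$ a projection, and for that it suffices to note that $E^{-R}(\{0\})=R^\perp$ for any projection $R$.
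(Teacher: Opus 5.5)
Your proof is correct, but your main route differs from the paper's. The paper's proof is essentially your ``independent check'': it lists the four comparison operators $Q$, $I-P$, $Q-I=-Q^\perp$, and $-P$ and reads off their nonnegative spectral projections directly. It handles the fourth case by the same computation as the third, namely $E^{-P}([0,\infty))=P^\perp$, rather than through Proposition~\ref{prop:contraposition}.

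Your primary argument instead uses the fact that $0$ and $I$ commute with every projection. You then invoke Proposition~\ref{prop:boolean-reduction}, so everything reduces to evaluating $P^\perp\vee Q$. This route gives a structural insight: the four boundary laws, and in particular Hardegree's nonstrictness condition, are not a special feature of the spectral construction. They hold for any operation that agrees with classical material implication on commuting pairs. The direct computation, by contrast, is self-contained. It needs only the spectral projections of $\pm$ a single projection, not the sector decomposition behind the Boolean reduction. Your route is also not circular, since Propositions~\ref{prop:boolean-reduction} and \ref{prop:contraposition} are proved independently of the boundary laws.
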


\begin{proof}
The four comparison operators are $Q$, $I-P$, $Q-I=-Q^\perp$, and $-P$. Their nonnegative spectral projections are precisely the four projections in the statement.
\end{proof}

The last two identities are Hardegree's nonstrictness condition (NS) \cite{hardegree}. Thus the spectral sign implication is nonstrict in his terminology. Hardegree notes that the polynomial condition (P) subsumes (NS). By Theorem~\ref{thm:nonpolynomial}, the spectral sign implication satisfies (NS) without satisfying (P).

The meet of the implication in both directions gives the projection onto the subspace on which the two projections agree.

\begin{proposition}[Logical equivalence]
\label{prop:logical-equivalence}
For all projections $P,Q$,
\[
(P\Rightarrow_{\mathrm{sp}}Q)\wedge(Q\Rightarrow_{\mathrm{sp}}P)
=(P\wedge Q)\vee(P^\perp\wedge Q^\perp).
\]
\end{proposition}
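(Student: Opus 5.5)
We would prove the identity by recognizing both factors as spectral projections of the same self-adjoint operator. Put $B=Q-P$. By Definition~\ref{def:spectral-implication}, $P\Rightarrow_{\mathrm{sp}}Q=E^{B}([0,\infty))$. The comparison operator for the reversed implication is $P-Q=-B$, and $E^{-B}(\Delta)=E^{B}(-\Delta)$ for every Borel set $\Delta$. Hence
\[
Q\Rightarrow_{\mathrm{sp}}P=E^{B}((-\infty,0]).
\]

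The key step is that spectral projections of a single self-adjoint operator commute and are multiplicative in the set argument. In particular, their meet equals their product, and
\[
E^{B}([0,\infty))\,E^{B}((-\infty,0])=E^{B}([0,\infty)\cap(-\infty,0])=E^{B}(\{0\}).
\]
So the left-hand side of Proposition~\ref{prop:logical-equivalence} equals $E^{Q-P}(\{0\})$.

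By Lemma~\ref{lem:exceptional-sectors}, $E^{Q-P}(\{0\})=(P\wedge Q)\vee(P^\perp\wedge Q^\perp)$. One could equally invoke the kernel formula of Lemma~\ref{lem:kernel}, since $E^{B}(\{0\})$ is the projection onto $\ker B$. This completes the argument.

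No step is a real obstacle. The only point to justify is that the lattice meet of two commuting projections is their product. This holds because, for commuting projections $E$ and $F$, the product $EF$ is a projection whose range is exactly $\operatorname{Ran}E\cap\operatorname{Ran}F$.
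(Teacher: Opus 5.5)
Your proposal is correct and follows the paper's proof essentially verbatim: both factors are identified as $E^{Q-P}([0,\infty))$ and $E^{Q-P}((-\infty,0])$, their meet is $E^{Q-P}(\{0\})$, and Lemma~\ref{lem:exceptional-sectors} finishes the argument. The small justifications you add, namely $E^{-B}(\Delta)=E^{B}(-\Delta)$ and that the meet of commuting projections is their product, are exactly the steps the paper leaves implicit.
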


\begin{proof}
Both factors are spectral projections of $B=Q-P$. The first is $E^B([0,\infty))$ and the second is $E^B(({-\infty},0])$. Their meet is $E^B(\{0\})$. Apply Lemma~\ref{lem:exceptional-sectors}.
\end{proof}

\subsection{Qubit Structure}
\label{subsec:qubit}

The case of $\mathbb C^2$ admits a useful geometric formula. This formula also shows that the operation is not an ortholattice polynomial.

\begin{definition}[Line projections]
\label{def:line-projections}
Let $\theta\in\mathbb R$. In $\mathbb C^2$, let $P_\theta$ denote the projection onto the complex line spanned by $(\cos\theta,\sin\theta)$. Angles are understood modulo $\pi$.
\end{definition}

The spectral projection can now be computed directly for two distinct lines.

\begin{proposition}[Signed angle formula]
\label{prop:signed-angle}
Let $\theta,\eta\in\mathbb R$, and choose $\delta=\eta-\theta$ with $-\pi/2<\delta<\pi/2$ and $\delta\ne0$. Then
\[
P_\theta\Rightarrow_{\mathrm{sp}}P_\eta
=P_{(\theta+\eta)/2+(\pi/4)\operatorname{sgn}(\delta)}.
\]
\end{proposition}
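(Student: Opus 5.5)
Our plan is to compute $Q-P$ explicitly for two real lines in $\mathbb C^2$ using the double-angle parametrization. Since $P_\theta=\tfrac12(I+\cos2\theta\,\sigma_z+\sin2\theta\,\sigma_x)$ with $\sigma_z=\operatorname{diag}(1,-1)$ and $\sigma_x$ the off-diagonal swap, we have
\[
P_\eta-P_\theta=\tfrac12\bigl((\cos2\eta-\cos2\theta)\sigma_z+(\sin2\eta-\sin2\theta)\sigma_x\bigr).
\]
Sum-to-difference identities turn this into
\[
P_\eta-P_\theta=\sin\delta\,\bigl(-\sin(\theta+\eta)\,\sigma_z+\cos(\theta+\eta)\,\sigma_x\bigr).
\]
The matrix $N=-\sin(2\phi)\sigma_z+\cos(2\phi)\sigma_x$, with $\phi=(\theta+\eta)/2$, is a traceless self-adjoint unitary. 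It equals $2P_{\phi+\pi/4}-I$, because $\cos(2\phi+\pi/2)=-\sin2\phi$ and $\sin(2\phi+\pi/2)=\cos2\phi$.

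From this the spectral decomposition can be read off directly. We have $Q-P=\sin\delta\,(P_{\phi+\pi/4}-P_{\phi-\pi/4})$, since $2P_{\phi+\pi/4}-I=P_{\phi+\pi/4}-P_{\phi+\pi/4}^\perp$ and $P_{\phi+\pi/4}^\perp=P_{\phi-\pi/4}$ on $\mathbb C^2$. The eigenvalues are therefore $\pm\sin\delta$, with eigenprojections $P_{\phi\pm\pi/4}$.

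Because $0<|\delta|<\pi/2$, the number $\sin\delta$ is nonzero and has the sign of $\delta$. Hence the kernel is trivial, consistent with Lemma~\ref{lem:kernel}, since the lines are distinct and not orthogonal. The nonnegative spectral projection is the eigenprojection for $|\sin\delta|$. This is $P_{\phi+\pi/4}$ when $\delta>0$ and $P_{\phi-\pi/4}$ when $\delta<0$, which is $P_{\phi+(\pi/4)\operatorname{sgn}\delta}$.

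The main care point is well-definedness modulo $\pi$. Replacing $\eta$ by $\eta+\pi$ changes $\phi$ by $\pi/2$ and forces a different representative of $\delta$. The statement fixes $\delta=\eta-\theta$ in $(-\pi/2,\pi/2)$, so we would first choose representatives of $\theta$ and $\eta$ realizing this $\delta$ and then apply the computation.

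We should also note that shifting $\theta$ and $\eta$ simultaneously by $\pi$ changes $\phi$ by $\pi$. This leaves the line unchanged, so the formula is consistent. The trigonometric identity step is routine; the only real obstacle is this bookkeeping of representatives together with the sign of $\sin\delta$.
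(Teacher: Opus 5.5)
Your proof is correct, but it reaches the result by a different computation from the paper's. The paper first conjugates by the rotation through $-\theta$, using unitary covariance of the functional calculus, to reduce to the pair $P_0,P_\delta$. It then writes out $P_\delta-P_0$ with $c=\cos\delta$, $s=\sin\delta$, notes that its square is $s^2I$, and checks by substitution which unit vector spans the positive eigenspace in each of the cases $s>0$ and $s<0$, before rotating back.

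You instead expand each $P_\theta$ in the Pauli basis with double angles and use sum-to-product identities. This gives the closed-form identity $P_\eta-P_\theta=\sin(\eta-\theta)\,(2P_{(\theta+\eta)/2+\pi/4}-I)$, from which both eigenvalues and both eigenprojections can be read off at once.

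Your route gives an identity valid for all real $\theta,\eta$, with no rotation step and no case split beyond the sign of $\sin\delta$. In particular, the representative bookkeeping you flag as the main obstacle is automatic. The identity holds for every choice of real representatives, and the interval condition on $\delta$ is used only to replace $\operatorname{sgn}(\sin\delta)$ by $\operatorname{sgn}(\delta)$.

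The paper's route works directly in the standard position $P_0,P_\delta$. This is the same normalization as the canonical fibers $(p_x,q_x)$ and the relation $b_x^2=(1-x)I$ that the later branch-reduction argument relies on.
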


\begin{proof}
Conjugation by the real rotation through angle $-\theta$ reduces the calculation to $\theta=0$ and $\eta=\delta$. Write $c=\cos\delta$ and $s=\sin\delta$. Then $P_\delta-P_0=\begin{pmatrix}-s^2&cs\\cs&s^2\end{pmatrix}$ and its square is $s^2I$. Substituting directly, we find that when $s>0$, the unit vector at angle $\pi/4+\delta/2$ is an eigenvector with eigenvalue $s$, while when $s<0$, the unit vector at angle $-\pi/4+\delta/2$ is an eigenvector with eigenvalue $-s$. In either case, this vector spans the positive eigenspace. After rotating back through angle $\theta$, we obtain the formula.
\end{proof}

The signed angle formula immediately gives the obstruction to polynomial representation.

\begin{theorem}[Nonpolynomial behavior]
\label{thm:nonpolynomial}
Let $P=P_0$ and $Q=P_\alpha$ with $0<\alpha<\pi/2$. Then $P\Rightarrow_{\mathrm{sp}}Q=P_{\pi/4+\alpha/2}$ does not belong to the ortholattice generated by $P$ and $Q$. Hence the spectral sign implication is not an ortholattice term operation.
\end{theorem}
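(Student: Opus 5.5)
First, I will identify the value of the implication using Proposition~\ref{prop:signed-angle}. Take $\theta=0$ and $\eta=\alpha$, so that $\delta=\alpha\in(0,\pi/2)$. The proposition then gives $P\Rightarrow_{\mathrm{sp}}Q=P_{\alpha/2+\pi/4}$.

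Second, I will compute the ortholattice generated by $P$ and $Q$ explicitly. Since $0<\alpha<\pi/2$, the lines $P_0$ and $P_\alpha$ are distinct and not orthogonal. The same holds for every pair drawn from $P_0,P_\alpha,P_0^\perp=P_{\pi/2},P_\alpha^\perp=P_{\alpha+\pi/2}$, except the complementary pairs, which are orthogonal. In $\mathbb C^2$, two distinct lines have meet $0$ and join $I$. I claim that $\{0,I,P_0,P_{\pi/2},P_\alpha,P_{\alpha+\pi/2}\}$ is closed under $\wedge$, $\vee$, and ${}^\perp$:
\begin{itemize}
\item meets and joins of a line with $0$ or $I$ are trivial;
\item meets and joins of two distinct lines are $0$ and $I$;
\item orthocomplements permute the six elements.
\end{itemize}
This set is therefore the ortholattice generated by $P$ and $Q$, namely the horizontal sum $\mathrm{MO}_2$.

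Third, I will show that $P_{\pi/4+\alpha/2}$ is none of these six projections. It is a line, so it is neither $0$ nor $I$. Lines $P_\phi$ and $P_\psi$ coincide exactly when $\phi\equiv\psi \pmod \pi$. The angle $\pi/4+\alpha/2$ lies strictly in $(\pi/4,\pi/2)$. The angles $0$ and $\pi/2$ are excluded immediately. Equality with $\alpha$ would require $\alpha=\pi/2$, which is excluded. Equality with $\alpha+\pi/2$ would require $\alpha/2=-\pi/4 \pmod \pi$, which is impossible for $\alpha\in(0,\pi/2)$, since then $\alpha+\pi/2\in(\pi/2,\pi)$. So the value lies outside the generated ortholattice.

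Finally, any ortholattice term $t(x,y)$ evaluated at $(P,Q)$ lies in the ortholattice generated by $P$ and $Q$. Hence no term operation can agree with $\Rightarrow_{\mathrm{sp}}$ on this pair, and the spectral sign implication is not an ortholattice term operation.

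The only point needing care is the closure claim in the second step. It reduces to the elementary fact that in $\mathbb C^2$ the subspace lattice has height $2$, so distinct lines meet in $0$ and join to $I$. I expect no real obstacle here.
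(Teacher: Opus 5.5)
Your proof is correct and follows the paper's own argument. Both use the signed angle formula to get $P_{\pi/4+\alpha/2}$, identify the generated ortholattice as the six-element set $\{0,I,P,P^\perp,Q,Q^\perp\}$, and check that $\pi/4+\alpha/2$ differs modulo $\pi$ from $0,\pi/2,\alpha,\alpha+\pi/2$; your explicit verification that this set is closed under $\wedge$, $\vee$, and ${}^\perp$, which the paper only asserts, is a welcome addition.
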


\begin{proof}
Two distinct nonorthogonal lines in $\mathbb C^2$ generate the ortholattice with six elements $\{0,I,P,P^\perp,Q,Q^\perp\}$. By Proposition~\ref{prop:signed-angle}, the resulting line has angle $\pi/4+\alpha/2$. When $0<\alpha<\pi/2$, this angle is distinct modulo $\pi$ from $0$, $\pi/2$, $\alpha$, and $\alpha+\pi/2$. The resulting projection therefore lies outside the generated ortholattice.
\end{proof}

For comparison, Hardegree's three polynomial material implications are the Sasaki implication $P^\perp\vee(P\wedge Q)$, the contrapositive Sasaki implication $(P\vee Q)^\perp\vee Q$, and the relevance implication $(P\wedge Q)\vee(P^\perp\wedge Q)\vee(P^\perp\wedge Q^\perp)$ \cite{hardegree}. When $P=P_0$ and $Q=P_{\pi/6}$, these three implications have values $P_{\pi/2}$, $P_{\pi/6}$, and $0$, respectively, whereas the spectral sign implication has value $P_{\pi/3}$ by Proposition~\ref{prop:signed-angle}. This example separates the present operation from all three polynomial material implications.

\subsection{Structural Limitations}
\label{subsec:limits}

The following results describe some logical limitations of the spectral sign implication and are independent of the characterization developed later. The implication laws in Theorem~\ref{thm:basic-laws} do not imply the order monotonicities associated with residuated implications, and the operation also fails Hardegree's transitivity condition. These failures already occur on $\mathbb C^2$.

\begin{proposition}[Consequent monotonicity failure]
\label{prop:consequent-failure}
There are projections $P,Q,R$ with $Q\le R$ and $P\Rightarrow_{\mathrm{sp}}Q\nleq P\Rightarrow_{\mathrm{sp}}R$.
\end{proposition}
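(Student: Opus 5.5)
The plan is to give an explicit counterexample on $\mathbb C^2$ in which the consequent $Q$ is the zero projection and $R$ is a line. The boundary laws and the signed angle formula then compute both sides with no further work.

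First, take $P=P_0$, $Q=0$, and $R=P_\alpha$ with a fixed angle $0<\alpha<\pi/2$, for instance $\alpha=\pi/6$. Clearly $Q=0\le R$.

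Next, compute both implications:
\begin{itemize}
\item By Proposition~\ref{prop:boundary-laws}, $P\Rightarrow_{\mathrm{sp}}0=P^\perp=P_{\pi/2}$.
\item By Proposition~\ref{prop:signed-angle}, with $\theta=0$ and $\delta=\eta=\alpha>0$, we get $P\Rightarrow_{\mathrm{sp}}R=P_{\pi/4+\alpha/2}$. This is the value already recorded in Theorem~\ref{thm:nonpolynomial}.
\end{itemize}

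Finally, both values are rank-one projections on $\mathbb C^2$. One line projection is dominated by another only if the two lines coincide. Since $0<\alpha<\pi/2$, we have $\pi/4<\pi/4+\alpha/2<\pi/2$, so the angle $\pi/4+\alpha/2$ is not congruent to $\pi/2$ modulo $\pi$. Hence $P_{\pi/2}\nleq P_{\pi/4+\alpha/2}$, which is the asserted failure. For $\alpha=\pi/6$ this reads $P_{\pi/2}\nleq P_{\pi/3}$.

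No step is a genuine obstacle. The only point needing care is to choose $\delta$ in the range $(-\pi/2,\pi/2)\setminus\{0\}$ required by Proposition~\ref{prop:signed-angle}, so that the sign of $\delta$ is unambiguous.
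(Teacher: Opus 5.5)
Your proposal is correct and matches the paper's proof: it uses the same counterexample $P=P_0$, $Q=0$, $R=P_\alpha$ with $0<\alpha<\pi/2$. Like the paper, you compute $P\Rightarrow_{\mathrm{sp}}0=P_{\pi/2}$ and $P\Rightarrow_{\mathrm{sp}}R=P_{\pi/4+\alpha/2}$ and conclude from the fact that these are distinct lines.
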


\begin{proof}
Take $P=P_0$, $Q=0$, and $R=P_\alpha$ with $0<\alpha<\pi/2$. Then $P\Rightarrow_{\mathrm{sp}}Q=P_{\pi/2}$, while Proposition~\ref{prop:signed-angle} yields $P\Rightarrow_{\mathrm{sp}}R=P_{\pi/4+\alpha/2}$. The two projections are onto different lines, so neither is below the other.
\end{proof}

The failure of consequent monotonicity has an immediate consequence for residuation.

\begin{corollary}[Residuation obstruction]
\label{cor:residuation}
On $\operatorname{Proj}(M_2(\mathbb C))$, the spectral sign implication cannot be the right residual of any binary operation. More precisely, there is no binary operation $\otimes$ on $\operatorname{Proj}(M_2(\mathbb C))$ such that $P\otimes R\le Q$ exactly when $R\le P\Rightarrow_{\mathrm{sp}}Q$ for all projections $P,Q,R$.
\end{corollary}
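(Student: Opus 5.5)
The plan is to argue by contradiction, using the standard fact that a right residual is monotone in its second argument, and then to contradict Proposition~\ref{prop:consequent-failure}. Suppose there were a binary operation $\otimes$ on $\operatorname{Proj}(M_2(\mathbb C))$ with $P\otimes R\le Q$ exactly when $R\le P\Rightarrow_{\mathrm{sp}}Q$, for all projections $P,Q,R$.

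The first step is to show that $P\Rightarrow_{\mathrm{sp}}(\cdot)$ is order preserving. Fix $P$ and take $Q\le R$. Put $T=P\Rightarrow_{\mathrm{sp}}Q$. Since $T\le T$, the adjunction applied with the triple $(P,Q,T)$ gives $P\otimes T\le Q$. Because $Q\le R$, it follows that $P\otimes T\le R$. Applying the adjunction again, now in the reverse direction with the triple $(P,R,T)$, yields $T\le P\Rightarrow_{\mathrm{sp}}R$. So the residuation property forces $P\Rightarrow_{\mathrm{sp}}Q\le P\Rightarrow_{\mathrm{sp}}R$ whenever $Q\le R$.

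The second step is to contradict this with the counterexample already constructed. Proposition~\ref{prop:consequent-failure} lives entirely in $\operatorname{Proj}(M_2(\mathbb C))$: it uses $P=P_0$, $Q=0$ and $R=P_\alpha$ with $0<\alpha<\pi/2$. There $Q\le R$ holds, yet $P\Rightarrow_{\mathrm{sp}}Q=P_{\pi/2}$ is not below $P\Rightarrow_{\mathrm{sp}}R=P_{\pi/4+\alpha/2}$. This contradicts the monotonicity derived in the first step, so no such $\otimes$ exists.

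No step is a genuine obstacle. The only point needing care is that the adjunction is assumed for \emph{all} triples of projections, so it may be instantiated at the particular $R=T$ used above. It is also worth noting that the argument never uses any property of $\otimes$ beyond the adjunction itself.
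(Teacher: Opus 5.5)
Your proposal is correct and follows the paper's own proof: the paper also notes that a right adjoint $Q\mapsto P\Rightarrow_{\mathrm{sp}}Q$ must be monotone and then invokes Proposition~\ref{prop:consequent-failure}. Your derivation of that monotonicity from the adjunction, by instantiating it at $T=P\Rightarrow_{\mathrm{sp}}Q$, just spells out the step the paper leaves implicit.
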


\begin{proof}
With $P$ fixed, a right adjoint $Q\mapsto P\Rightarrow Q$ is monotone. Proposition~\ref{prop:consequent-failure} rules out this property.
\end{proof}

The logical role of the spectral sign implication therefore differs in general from that of implications based on residuation.

Hardegree's residual condition (R) requires a binary operation $+$ such that $a+b\le c$ exactly when $a\le b\to c$ \cite{hardegree}. Setting $a+b=b\otimes a$ converts (R) into the residuation property ruled out by Corollary~\ref{cor:residuation}, so (R) fails on $\operatorname{Proj}(M_2(\mathbb C))$. The spectral sign implication satisfies the other relevant conditions (E), (MP), and (NG) on every projection lattice. Hardegree's representation theorem \cite[Theorem~2]{hardegree} therefore does not provide a general representation of the present operation.

\begin{proposition}[Antecedent antitonicity failure]
\label{prop:antecedent-failure}
There are projections $P,Q,R$ with $P\le R$ and $R\Rightarrow_{\mathrm{sp}}Q\nleq P\Rightarrow_{\mathrm{sp}}Q$.
\end{proposition}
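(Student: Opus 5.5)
We want projections $P\le R$ with $R\Rightarrow_{\mathrm{sp}}Q\nleq P\Rightarrow_{\mathrm{sp}}Q$. The plan is to mirror the proof of Proposition~\ref{prop:consequent-failure}. It can even be derived from it through Proposition~\ref{prop:contraposition}, but a direct qubit example is cleaner.

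Work on $\mathbb C^2$, take $P=0$, $R=P_0$, and $Q=P_\alpha$ with $0<\alpha<\pi/2$. Then $P\le R$ holds trivially. By Proposition~\ref{prop:boundary-laws}, $P\Rightarrow_{\mathrm{sp}}Q=0\Rightarrow_{\mathrm{sp}}Q=I$, so this choice gives nothing. The zero antecedent is too small, and we should make the antecedent on the left large instead.

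Take instead $R=I$, $P=P_0$ and $Q=P_\alpha$ with $0<\alpha<\pi/2$, so that $P\le R$. By Proposition~\ref{prop:boundary-laws}, $R\Rightarrow_{\mathrm{sp}}Q=I\Rightarrow_{\mathrm{sp}}Q=Q=P_\alpha$. By Proposition~\ref{prop:signed-angle} with $\theta=0$ and $\eta=\alpha$, $P\Rightarrow_{\mathrm{sp}}Q=P_{\pi/4+\alpha/2}$.

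It remains to check that these two line projections are distinct, since distinct rank-one projections are incomparable. We need $\alpha\not\equiv\pi/4+\alpha/2\pmod\pi$, that is, $\alpha/2\ne\pi/4$ modulo $\pi$. This holds because $0<\alpha/2<\pi/4$. Hence $P_\alpha\nleq P_{\pi/4+\alpha/2}$, as required.

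The only potential obstacle is choosing the example so that neither side collapses to $I$ via (E). The choice $R=I$ avoids this, because $I\nleq Q$, and Proposition~\ref{prop:boundary-laws} evaluates that side outright.
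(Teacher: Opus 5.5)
Your final example $R=I$, $P=P_0$, $Q=P_\alpha$ with $0<\alpha<\pi/2$ is exactly the paper's, evaluated the same way via Proposition~\ref{prop:boundary-laws} and Proposition~\ref{prop:signed-angle}. Your check that $\alpha\not\equiv\pi/4+\alpha/2\pmod\pi$ makes explicit the paper's remark that the two lines differ, and the discarded first attempt does no harm; the contraposition route from Proposition~\ref{prop:consequent-failure} that you mention is also valid.
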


\begin{proof}
Take $P=P_0$, $R=I$, and $Q=P_\alpha$ with $0<\alpha<\pi/2$. By Proposition~\ref{prop:boundary-laws}, $R\Rightarrow_{\mathrm{sp}}Q=P_\alpha$, whereas Proposition~\ref{prop:signed-angle} yields $P\Rightarrow_{\mathrm{sp}}Q=P_{\pi/4+\alpha/2}$. These are projections onto different lines, so the required order relation fails.
\end{proof}

Hardegree's transitivity condition (T) requires $(P\Rightarrow Q)\wedge(Q\Rightarrow R)\le P\Rightarrow R$ \cite{hardegree}. Under (E), condition (T) implies antitonicity in the antecedent.

\begin{corollary}[Transitivity failure]
\label{cor:transitivity}
There are projections $P,Q,R$ with $(P\Rightarrow_{\mathrm{sp}}Q)\wedge(Q\Rightarrow_{\mathrm{sp}}R)\nleq P\Rightarrow_{\mathrm{sp}}R$.
\end{corollary}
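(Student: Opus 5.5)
The plan is to reuse the counterexample from Proposition~\ref{prop:antecedent-failure}, together with the remark that (T) combined with (E) yields antitonicity in the antecedent. I will give a direct instance. Take $P=P_0$, $Q=I$, and $R=P_\alpha$ with $0<\alpha<\pi/2$. I compute each term with the results already established.

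First, since $P\le I$, entailment (Theorem~\ref{thm:basic-laws}) gives $P\Rightarrow_{\mathrm{sp}}Q=P_0\Rightarrow_{\mathrm{sp}}I=I$; equivalently, this is the boundary law $P\Rightarrow_{\mathrm{sp}}I=I$ of Proposition~\ref{prop:boundary-laws}. Second, by the boundary law $I\Rightarrow_{\mathrm{sp}}Q=Q$, we have $Q\Rightarrow_{\mathrm{sp}}R=I\Rightarrow_{\mathrm{sp}}P_\alpha=P_\alpha$. The meet on the left-hand side of (T) is therefore $I\wedge P_\alpha=P_\alpha$.

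Third, by Proposition~\ref{prop:signed-angle} with $\theta=0$ and $\eta=\alpha$, so that $\delta=\alpha\in(0,\pi/2)$, we get $P\Rightarrow_{\mathrm{sp}}R=P_{\pi/4+\alpha/2}$. It remains to check that $P_\alpha\nleq P_{\pi/4+\alpha/2}$. Both are rank-one projections, so the inequality would force equality. Equality would require $\alpha\equiv\pi/4+\alpha/2 \pmod{\pi}$, that is, $\alpha/2\equiv\pi/4$, i.e.\ $\alpha\equiv\pi/2 \pmod{2\pi}$, which is excluded by $0<\alpha<\pi/2$.

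There is no real obstacle here. The only point requiring care is the angle check modulo $\pi$, and the interval restriction on $\alpha$ settles it. Concretely, $\alpha=\pi/6$ gives the lines $P_{\pi/6}$ and $P_{\pi/3}$, which are distinct.
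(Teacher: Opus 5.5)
Your proof is correct and is essentially the paper's argument. The paper invokes Proposition~\ref{prop:antecedent-failure}, whose witness after relabeling is exactly your $P=P_0$, $Q=I$, $R=P_\alpha$, and uses (E) to collapse the meet to $Q\Rightarrow_{\mathrm{sp}}R$; you carry out that same instance explicitly, including the angle check.
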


\begin{proof}
By Proposition~\ref{prop:antecedent-failure}, there are projections $P\le Q$ and $R$ with $Q\Rightarrow_{\mathrm{sp}}R\nleq P\Rightarrow_{\mathrm{sp}}R$. Condition (E) gives $P\Rightarrow_{\mathrm{sp}}Q=I$. Hence $(P\Rightarrow_{\mathrm{sp}}Q)\wedge(Q\Rightarrow_{\mathrm{sp}}R)=Q\Rightarrow_{\mathrm{sp}}R\nleq P\Rightarrow_{\mathrm{sp}}R$.
\end{proof}

Hardegree proved that an orthomodular lattice admitting an implication satisfying (E), (MP), (T), and (NS) is Boolean \cite[Theorem~1]{hardegree}. Since the spectral sign implication satisfies (E), (MP), and (NS), condition (T) fails on every non-Boolean projection lattice.

The spectral threshold also causes norm discontinuity at the diagonal. The continuity considered later concerns the defining matrix symbol, whereas the following statement concerns norm continuity of the resulting binary operation in its projection arguments.

\begin{proposition}[Diagonal discontinuity]
\label{prop:diagonal-discontinuity}
For every projection $P_\theta$ onto a line in $\mathbb C^2$, the map $(X,Y)\mapsto X\Rightarrow_{\mathrm{sp}}Y$ is not norm continuous at $(P_\theta,P_\theta)$.
\end{proposition}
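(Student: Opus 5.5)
The plan is to exhibit two sequences of pairs converging in norm to $(P_\theta,P_\theta)$ whose implication values stay far apart, or else one sequence whose values stay away from the value at the limit point. At the diagonal itself, (E) or the Boundary case of Proposition~\ref{prop:boolean-reduction} gives $P_\theta\Rightarrow_{\mathrm{sp}}P_\theta=E^{0}([0,\infty))=I$. So I will approach along pairs of distinct lines $(P_\theta,P_{\theta+\delta})$ with $\delta\to0$.

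By Proposition~\ref{prop:signed-angle}, for $0<|\delta|<\pi/2$ the value is
\[
P_\theta\Rightarrow_{\mathrm{sp}}P_{\theta+\delta}=P_{\theta+\delta/2+(\pi/4)\operatorname{sgn}(\delta)},
\]
which is a rank-one projection. The first step is to note that $\|P_{\theta+\delta}-P_\theta\|=|\sin\delta|\to0$, which follows from the eigenvalue computation in that proof: the square of the difference is $\sin^2\delta\,I$. Hence $(P_\theta,P_{\theta+\delta})\to(P_\theta,P_\theta)$ in norm. The second step is to observe that a rank-one projection and $I$ satisfy $\|I-P_\phi\|=\|P_\phi^\perp\|=1$. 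The image therefore stays at distance $1$ from the value $I$ at the limit point, and this already proves discontinuity.

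For a sharper statement, I would also compare the two one-sided limits. As $\delta\to0^+$ the values converge to $P_{\theta+\pi/4}$, and as $\delta\to0^-$ they converge to $P_{\theta-\pi/4}$. These two lines are orthogonal, so the distance between the limiting projections is $1$. This shows that no redefinition at the diagonal could restore continuity.

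There is no real obstacle here. The only care needed is to confirm that the diagonal value is $I$, which is immediate from (E), and to compute the norm of a difference of line projections, which comes from the identity $(P_\delta-P_0)^2=s^2I$ already derived in Proposition~\ref{prop:signed-angle}.
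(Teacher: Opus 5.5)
Your proposal is correct and follows the paper's argument: approach the diagonal along $(P_\theta,P_{\theta+\delta})$, use the signed angle formula to get rank-one values with one-sided limits $P_{\theta\pm\pi/4}$, and compare them with the diagonal value $I$. You add two details the paper omits: the explicit norm convergence $\|P_{\theta+\delta}-P_\theta\|=|\sin\delta|\to0$, and the remark that the two one-sided limits are orthogonal, so no redefinition at the diagonal could restore continuity.
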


\begin{proof}
Fix a projection $P_\theta$ onto a line and let $\eta\to\theta$ through values with $|\eta-\theta|<\pi/2$. By Proposition~\ref{prop:signed-angle}, the limits of $P_\theta\Rightarrow_{\mathrm{sp}}P_\eta$ from the two sides are $P_{\theta+\pi/4}$ and $P_{\theta-\pi/4}$. By Proposition~\ref{prop:boundary-laws}, $P_\theta\Rightarrow_{\mathrm{sp}}P_\theta=I$. Neither limit equals $I$.
\end{proof}

\subsection{De Morgan Geometry}
\label{subsec:dual}

The De Morgan dual of the spectral sign implication uses the opposite convention at the zero spectral value.

\begin{definition}[Spectral dual conjunction]
\label{def:dual-conjunction}
Let $P,Q$ be projections. Define the \emph{spectral dual conjunction} by $P*_{\mathrm{sp}}Q=(P\Rightarrow_{\mathrm{sp}}Q^\perp)^\perp$.
\end{definition}

The next proposition expresses the dual directly in terms of the second comparison operator.

\begin{proposition}[Spectral formula]
\label{prop:dual-formula}
For all projections $P,Q$, one has $P*_{\mathrm{sp}}Q=E^{P+Q-I}((0,\infty))$. If $P$ and $Q$ commute, then $P*_{\mathrm{sp}}Q=P\wedge Q$.
\end{proposition}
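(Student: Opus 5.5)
The plan is to unwind the definition and compute the comparison operator for the pair $(P,Q^\perp)$. By Definition~\ref{def:dual-conjunction} and Definition~\ref{def:spectral-implication},
\[
P*_{\mathrm{sp}}Q=\bigl(E^{Q^\perp-P}([0,\infty))\bigr)^\perp .
\]
Since $Q^\perp-P=I-Q-P=-(P+Q-I)=-S$, the spectral projection $E^{Q^\perp-P}([0,\infty))$ equals $E^{-S}([0,\infty))$. By the change of variables $\lambda\mapsto-\lambda$ in the Borel functional calculus, $E^{-S}(\Delta)=E^{S}(-\Delta)$, so this is $E^{S}((-\infty,0])$.

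Taking the orthocomplement and using $E^S(\mathbb R)=I$ gives
\[
I-E^S((-\infty,0])=E^S((0,\infty)).
\]
This is the first formula. The only point needing care is that the closed threshold in the implication becomes an open threshold here. It arises because the zero sector $E^S(\{0\})$ is removed together with the negative part when we complement.

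For the commuting case, the cleanest route is Proposition~\ref{prop:boolean-reduction}. If $P$ and $Q$ commute, then so do $P$ and $Q^\perp$. Hence
\[
P\Rightarrow_{\mathrm{sp}}Q^\perp=P^\perp\vee Q^\perp,
\]
and De Morgan's law in the projection lattice gives $(P^\perp\vee Q^\perp)^\perp=P\wedge Q$. As a sanity check, one can compute sector by sector: $S=P+Q-I$ equals $I$ on $\mathcal H_{11}$, equals $0$ on $\mathcal H_{10}$ and $\mathcal H_{01}$, and equals $-I$ on $\mathcal H_{00}$. Its strictly positive spectral projection is therefore the projection onto $\mathcal H_{11}=\operatorname{Ran}(P\wedge Q)$, which agrees with the first route.

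There is no real obstacle here. The one step to state explicitly is the identity $E^{-A}(\Delta)=E^{A}(-\Delta)$ for self-adjoint $A$, which follows from uniqueness of the spectral measure. Everything else is bookkeeping with complements of spectral sets.
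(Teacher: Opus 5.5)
Your proof is correct and matches the paper's: the first formula comes from $Q^\perp-P=-(P+Q-I)$ followed by complementation (the paper complements before reflecting the spectral set, which is immaterial), and your sector-by-sector ``sanity check'' on $\mathcal H_{11},\mathcal H_{10},\mathcal H_{01},\mathcal H_{00}$ is exactly the paper's argument for the commuting case. Your primary route for that case is an equally valid shortcut: you apply Proposition~\ref{prop:boolean-reduction} to the commuting pair $(P,Q^\perp)$ and then use De Morgan's law, which reuses that result instead of recomputing the sectors.
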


\begin{proof}
By Definition~\ref{def:spectral-implication}, $P\Rightarrow_{\mathrm{sp}}Q^\perp=E^{I-P-Q}([0,\infty))$. Taking the orthocomplement produces $E^{I-P-Q}(({-\infty},0))$, which equals $E^{P+Q-I}((0,\infty))$. If $P$ and $Q$ commute, the operator $P+Q-I$ has value $1$ on $\mathcal H_{11}$ and value $-1$ on $\mathcal H_{00}$. It has value $0$ on both $\mathcal H_{10}$ and $\mathcal H_{01}$. Hence its positive spectral projection is $P\wedge Q$.
\end{proof}

The open threshold in Proposition~\ref{prop:dual-formula} is forced by the Boolean reduction. Replacing $(0,\infty)$ by $[0,\infty)$, we include the two Boolean zero sectors and obtain $P\vee Q$ in the commuting case. For the implication, replacing the closed threshold $[0,\infty)$ by $(0,\infty)$, we obtain $P^\perp\wedge Q$ instead of material implication.

\begin{proposition}[Kernel correction]
\label{prop:kernel-correction}
For all projections $P,Q$,
\[
P\Rightarrow_{\mathrm{sp}}Q
=(P^\perp*_{\mathrm{sp}}Q)\vee E^{Q-P}(\{0\}).
\]
Equivalently,
\[
P\Rightarrow_{\mathrm{sp}}Q
=(P^\perp*_{\mathrm{sp}}Q)\vee(P\wedge Q)\vee(P^\perp\wedge Q^\perp).
\]
If $P,Q$ are in generic position, then $P\Rightarrow_{\mathrm{sp}}Q=P^\perp*_{\mathrm{sp}}Q$.
\end{proposition}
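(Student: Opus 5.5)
We rewrite both sides as spectral projections of the single operator $B=Q-P$ and compare them. By Proposition~\ref{prop:dual-formula} applied to the pair $(P^\perp,Q)$, we have $P^\perp*_{\mathrm{sp}}Q=E^{P^\perp+Q-I}((0,\infty))$. Since $P^\perp+Q-I=Q-P=B$, this is $E^{B}((0,\infty))$. The left-hand side is $E^B([0,\infty))$ by Definition~\ref{def:spectral-implication}.

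The first identity now follows from additivity of the spectral measure. The Borel set $[0,\infty)$ is the disjoint union of $(0,\infty)$ and $\{0\}$. Hence $E^B([0,\infty))=E^B((0,\infty))+E^B(\{0\})$. Two orthogonal projections have their sum equal to their join, so this sum equals $E^B((0,\infty))\vee E^B(\{0\})$. This is the first displayed formula.

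For the second formula, we substitute Lemma~\ref{lem:exceptional-sectors}, which gives $E^{Q-P}(\{0\})=(P\wedge Q)\vee(P^\perp\wedge Q^\perp)$. Associativity of the join then yields the stated expression.

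Finally, suppose $P,Q$ are in generic position. Then $P\wedge Q=P^\perp\wedge Q^\perp=0$ by Definition~\ref{def:generic-position}, so $E^B(\{0\})=0$; equivalently, $\ker B=0$ by Lemma~\ref{lem:kernel}. The join therefore collapses to $P^\perp*_{\mathrm{sp}}Q$. None of these steps is a real obstacle. The only point that needs care is the operator identity $P^\perp+Q-I=Q-P$, together with the observation that a join of orthogonal projections equals their sum.
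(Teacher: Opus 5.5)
Your proposal is correct and follows the paper's proof essentially step for step. You identify $P^\perp*_{\mathrm{sp}}Q$ with $E^{Q-P}((0,\infty))$ via Proposition~\ref{prop:dual-formula}, split $[0,\infty)$ into $(0,\infty)$ and $\{0\}$ to get an orthogonal join, substitute Lemma~\ref{lem:exceptional-sectors}, and use generic position to kill the kernel term.
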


\begin{proof}
By Proposition~\ref{prop:dual-formula}, $P^\perp*_{\mathrm{sp}}Q=E^{Q-P}((0,\infty))$. The nonnegative spectral projection is the orthogonal join of this projection and $E^{Q-P}(\{0\})$. The second formula follows from Lemma~\ref{lem:exceptional-sectors}. If $P$ and $Q$ are in generic position, then $Q-P$ has zero kernel by Definition~\ref{def:generic-position} and Lemma~\ref{lem:kernel}.
\end{proof}

From the generic qubit calculation, we obtain a midpoint formula. The corresponding result in finite dimensions follows from the principal angle decomposition of two subspaces.

\begin{proposition}[Principal angle midpoint]
\label{prop:midpoint}
Let $P=P_0$ and $Q=P_\alpha$ with $0<\alpha<\pi/2$. Then $P*_{\mathrm{sp}}Q=P_{\alpha/2}$. More generally, let $P,Q$ be projections of finite rank in generic position with equal rank and principal angles $\theta_1,\ldots,\theta_k$. On each principal angle plane, $P*_{\mathrm{sp}}Q$ is the projection onto the line at half the corresponding principal angle.
\end{proposition}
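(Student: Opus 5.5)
The plan is to reduce the first claim to the signed angle formula, and the second to the first by a block decomposition along principal angle planes.

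\emph{Qubit case.} By Definition~\ref{def:dual-conjunction}, $P*_{\mathrm{sp}}Q=(P_0\Rightarrow_{\mathrm{sp}}P_\alpha^\perp)^\perp$. Since $P_\alpha^\perp=P_{\alpha+\pi/2}$, I will represent the angle modulo $\pi$ as $\eta=\alpha-\pi/2\in(-\pi/2,0)$. Proposition~\ref{prop:signed-angle} then applies with $\theta=0$ and $\delta=\alpha-\pi/2<0$, and gives $P_0\Rightarrow_{\mathrm{sp}}P_\alpha^\perp=P_{(\alpha-\pi/2)/2-\pi/4}=P_{\alpha/2-\pi/2}$. Its orthocomplement is $P_{\alpha/2}$.

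As a cross-check, I will use Proposition~\ref{prop:dual-formula}. The operator $P_0+P_\alpha-I$ should have the vector at angle $\alpha/2$ as an eigenvector with eigenvalue $\cos\alpha>0$. This follows because $P_0+P_\alpha$ is symmetric about the bisector.

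\emph{Finite rank case.} I will use the standard principal angle (Jordan) decomposition. If $P,Q$ have equal rank $k$ and are in generic position in the ambient space, then $\mathcal H$ splits into $k$ mutually orthogonal two-dimensional planes $\mathcal K_1,\ldots,\mathcal K_k$. These come from principal vector pairs $u_j\in\operatorname{Ran}P$, $v_j\in\operatorname{Ran}Q$ with $\langle u_j,v_j\rangle=\cos\theta_j$ and $0<\theta_j<\pi/2$.

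Each plane reduces both $P$ and $Q$. In an orthonormal basis of $\mathcal K_j$ with first vector $u_j$, after rephasing $v_j$ to make the inner product real, $P|_{\mathcal K_j}=P_0$ and $Q|_{\mathcal K_j}=P_{\theta_j}$. Generic position is what excludes $\theta_j\in\{0,\pi/2\}$ and forces these planes to exhaust $\mathcal H$, which is therefore finite dimensional.

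\emph{Assembling the blocks.} Since the operator $P+Q-I$ is block diagonal with respect to $\bigoplus_j\mathcal K_j$, its spectral projection $E^{P+Q-I}((0,\infty))$ is the direct sum of the blockwise spectral projections. Alternatively, this is the locality statement of Proposition~\ref{prop:locality} applied to each plane projection, which commutes with $P$ and $Q$. By the qubit case, the block on $\mathcal K_j$ is the projection onto the line at angle $\theta_j/2$.

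\emph{Main obstacle.} The only delicate point is the reduction to principal angle planes. I must verify that the planes are mutually orthogonal and invariant under both projections. This follows because $u_j$ are eigenvectors of $PQP$ with eigenvalues $\cos^2\theta_j$, and $v_j=Qu_j/\|Qu_j\|$ satisfies $Pv_j=\cos\theta_j\,u_j$.

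I would also need to interpret ``the line at half the principal angle'' in the coordinates $(u_j,\cdot)$. That interpretation is handled by the rephasing above, which preserves the real qubit picture of Definition~\ref{def:line-projections}.
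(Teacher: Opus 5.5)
Your proposal is correct and follows essentially the same route as the paper: a qubit computation, then blockwise assembly of $E^{P+Q-I}((0,\infty))$ over the principal angle planes. The paper diagonalizes $P+Q-I$ directly, which is your cross-check, and your derivation through Proposition~\ref{prop:signed-angle} with $\eta=\alpha-\pi/2$ is also valid. The one step you only assert is that the planes exhaust $\mathcal H$. The paper settles it by noting that $(\operatorname{Ran}P+\operatorname{Ran}Q)^\perp=\ker P\cap\ker Q=\{0\}$, so the finite-dimensional sum $\operatorname{Ran}P+\operatorname{Ran}Q$ is all of $\mathcal H$. In your construction, equal rank then makes the $v_j$ a basis of $\operatorname{Ran}Q$, so the planes contain $\operatorname{Ran}P+\operatorname{Ran}Q=\mathcal H$.
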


\begin{proof}
In the qubit case, $P+Q-I=\begin{pmatrix}\cos^2\alpha&\sin\alpha\cos\alpha\\\sin\alpha\cos\alpha&-\cos^2\alpha\end{pmatrix}$. Its positive eigenspace is the line at angle $\alpha/2$.

In the general case, put $\operatorname{rank}P=\operatorname{rank}Q=k$. Since $P$ and $Q$ are in generic position, $\operatorname{Ran}P\cap\operatorname{Ran}Q=\{0\}$ and $\ker P\cap\ker Q=\{0\}$. Since $(\operatorname{Ran}P+\operatorname{Ran}Q)^\perp=\ker P\cap\ker Q=\{0\}$, the sum $\operatorname{Ran}P+\operatorname{Ran}Q$ is dense in $\mathcal H$. This sum is finite dimensional and therefore closed, so $\operatorname{Ran}P+\operatorname{Ran}Q=\mathcal H$ and $\dim\mathcal H=2k$. By the principal angle decomposition \cite{amrein-sinha,davis}, there is an orthogonal decomposition $\mathcal H=\bigoplus_{j=1}^k\mathcal K_j$ into planes that reduce $P$ and $Q$. In suitable orthonormal coordinates on $\mathcal K_j$, the restrictions are $P_0$ and $P_{\theta_j}$, where $0<\theta_j<\pi/2$ because generic position excludes the cases $\theta_j=0$ and $\theta_j=\pi/2$. The operator $P+Q-I$ and its spectral projections reduce the same orthogonal decomposition. Applying the qubit calculation on each $\mathcal K_j$, we obtain the projection onto the line at angle $\theta_j/2$ as the positive spectral projection on that plane.
\end{proof}

This midpoint interpretation concerns the dual conjunction. Proposition~\ref{prop:kernel-correction} describes the exact relation between that geometry and the implication.

\begin{proposition}[Nonassociativity]
\label{prop:nonassociativity}
The operation $*_{\mathrm{sp}}$ is not associative.
\end{proposition}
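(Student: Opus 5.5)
The plan is to exhibit an explicit counterexample among line projections in $\mathbb C^2$, where Proposition~\ref{prop:midpoint} makes $*_{\mathrm{sp}}$ computable in closed form. The first step is a rotation-covariant midpoint rule for two lines. Let $\theta,\eta\in\mathbb R$ and choose $\delta=\eta-\theta$ with $-\pi/2<\delta<\pi/2$ and $\delta\ne0$. I claim that
\[
P_\theta*_{\mathrm{sp}}P_\eta=P_{(\theta+\eta)/2}.
\]
The case $0<\delta<\pi/2$ follows in three moves. Conjugate by the real rotation through $-\theta$. Use covariance of the Borel functional calculus under unitary conjugation, as in Proposition~\ref{prop:normal-covariance} applied to the inner automorphism $\operatorname{Ad}(V)$ for the rotation $V$. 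Then invoke the qubit part of Proposition~\ref{prop:midpoint}. The case $-\pi/2<\delta<0$ follows in the same way after additionally conjugating by the real reflection $(x,y)\mapsto(x,-y)$. This reflection sends $P_\phi$ to $P_{-\phi}$, so the pair $(P_0,P_\delta)$ becomes $(P_0,P_{-\delta})$ with $-\delta\in(0,\pi/2)$.

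The second step is to take $P=P_0$, $Q=P_{\pi/4}$ and $R=P_{\pi/2}$, and compute both bracketings.

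For the left bracketing, the rule gives $P*_{\mathrm{sp}}Q=P_{\pi/8}$. The pair $(P_{\pi/8},P_{\pi/2})$ has angle difference $3\pi/8\in(0,\pi/2)$, so
\[
(P*_{\mathrm{sp}}Q)*_{\mathrm{sp}}R=P_{5\pi/16}.
\]

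For the right bracketing, the rule gives $Q*_{\mathrm{sp}}R=P_{3\pi/8}$. The pair $(P_0,P_{3\pi/8})$ has angle difference $3\pi/8$, so
\[
P*_{\mathrm{sp}}(Q*_{\mathrm{sp}}R)=P_{3\pi/16}.
\]

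The angles $5\pi/16$ and $3\pi/16$ are distinct modulo $\pi$. Hence the two results are projections onto different lines, and associativity fails.

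The only point requiring care is the choice of example. At every stage the angle difference must stay strictly inside $(0,\pi/2)$ in absolute value. This keeps each intermediate pair generic, so the midpoint rule applies and no degenerate case arises. Degenerate cases would be orthogonal lines, which give $0$, or equal lines, which give the line itself. The choice of angles $0$, $\pi/4$, $\pi/2$ keeps all four differences at $\pi/8$ or $3\pi/8$, which avoids these cases. If I wanted to avoid relying on covariance, I could instead verify the two final values directly. This means computing the positive eigenvector of $P_\phi+P_\psi-I$ for the two relevant pairs, which is a routine $2\times2$ calculation analogous to the proof of Proposition~\ref{prop:midpoint}.
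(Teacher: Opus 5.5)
Your proof is correct and follows the same route as the paper: covariance of $*_{\mathrm{sp}}$ under a common rotation reduces every step to the qubit midpoint formula of Proposition~\ref{prop:midpoint}, and an explicit triple of lines then separates the two bracketings (the paper uses $P_0,P_{\pi/6},P_{\pi/3}$ and obtains $P_{5\pi/24}$ versus $P_{\pi/8}$, your $P_0,P_{\pi/4},P_{\pi/2}$ gives $P_{5\pi/16}$ versus $P_{3\pi/16}$, and since all four differences in your example are positive, the reflection case of your midpoint rule is never needed). One small slip in your closing remark: the four angle differences that actually occur are $\pi/4$ and $3\pi/8$, not $\pi/8$ and $3\pi/8$, but this does not affect the computation.
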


\begin{proof}
The spectral formula in Proposition~\ref{prop:dual-formula} is covariant under simultaneous unitary conjugation. Hence the midpoint formula in Proposition~\ref{prop:midpoint} remains valid after a common rotation. We have $P_0*_{\mathrm{sp}}P_{\pi/6}=P_{\pi/12}$ and $P_{\pi/6}*_{\mathrm{sp}}P_{\pi/3}=P_{\pi/4}$. Applying the rotated midpoint formula once more, we obtain $(P_0*_{\mathrm{sp}}P_{\pi/6})*_{\mathrm{sp}}P_{\pi/3}=P_{5\pi/24}$ and $P_0*_{\mathrm{sp}}(P_{\pi/6}*_{\mathrm{sp}}P_{\pi/3})=P_{\pi/8}$. These projections are distinct.
\end{proof}

\section{Uniform Constructions}
\label{sec:uniform}

The characterization requires a uniform framework for constructions from arbitrary pairs of projections. This setting allows the argument to pass from global operations to generic fibers.

\subsection{Symbol Class}
\label{subsec:symbols}

The standard theory of two projections provides both continuous and measurable models for the required uniform constructions.

\begin{definition}[Canonical fibers]
\label{def:canonical-fibers}
Let $x\in[0,1]$. Define
\[
p_x=\begin{pmatrix}1&0\\0&0\end{pmatrix},\qquad
q_x=\begin{pmatrix}x&\sqrt{x(1-x)}\\\sqrt{x(1-x)}&1-x\end{pmatrix}.
\]
The pair $(p_x,q_x)$ is the \emph{canonical fiber} at $x$. For $0<x<1$, this canonical fiber is in generic position and will be called a \emph{generic fiber}. Put $b_x=q_x-p_x$ and $s_x=p_x+q_x-I$.
\end{definition}

Direct calculation gives $b_x^2=(1-x)I$ and $s_x^2=xI$. When $0<x<1$, the pair $p_x,q_x$ generates $M_2(\mathbb C)$. The two eigenvalues of $b_x$ are $\pm\sqrt{1-x}$. Writing $x=\cos^2\theta$ with $0\le\theta\le\pi/2$, we have $p_x=P_0$ and $q_x=P_\theta$, while $\sqrt{1-x}=\sin\theta$. Thus the parameter $\sqrt{1-x}$ is the sine of the angle between the two canonical lines.

By the Raeburn--Sinclair model, the universal unital $C^*$-algebra $C^*(p,q)$ generated by two projections is isomorphic to the algebra of continuous functions $f:[0,1]\to M_2(\mathbb C)$ whose values at $0$ and $1$ are diagonal \cite{raeburn-sinclair}. In this representation, the universal generators are the functions $x\mapsto p_x$ and $x\mapsto q_x$ from Definition~\ref{def:canonical-fibers}. The diagonal endpoint requirement in the following definition is the corresponding boundary condition, while Borel measurability allows the same fiber description in the measurable setting.

\begin{definition}[Bounded Borel symbol for two projections]
\label{def:borel-symbol}
A \emph{bounded Borel symbol for two projections} is a bounded Borel map $\Phi:[0,1]\to M_2(\mathbb C)_{\mathrm{sa}}$ such that $\Phi(0)$ and $\Phi(1)$ are diagonal. Such a symbol is called a \emph{continuous symbol for two projections} when $\Phi$ is continuous.
\end{definition}

We next describe the uniform construction associated with such a symbol. On the generic summand of a pair $P,Q$, write the Halmos form with first coordinate $\operatorname{Ran}P$ as $P=\begin{pmatrix}I&0\\0&0\end{pmatrix}$ and $Q=\begin{pmatrix}H&K\\K&I-H\end{pmatrix}$, where $K=(H(I-H))^{1/2}$. Then $S^2=\operatorname{diag}(H,H)$, so $H$ is the representative of the intrinsic central operator $S^2$ on the multiplicity space. This convention agrees with Definition~\ref{def:canonical-fibers}, where $s_x^2=xI$. The algebra $W^*(P,Q)$ on the generic summand is a type $I_2$ algebra over $W^*(H)$ and is represented by measurable $2\times2$ matrix fields \cite{spitkovsky1994}.

\begin{definition}[Uniform symbol construction]
\label{def:uniform-symbol-construction}
Let $\Phi=(\phi_{ij})$ be a bounded Borel symbol for two projections, let $P,Q$ be projections, and let $H$ be the central parameter in the Halmos realization described above. Define the component of $A_\Phi(P,Q)$ on the generic summand by applying the Borel functional calculus of $H$ to each scalar entry. On $\mathcal H_{10},\mathcal H_{01},\mathcal H_{11},\mathcal H_{00}$, use the first entry of $\Phi(0)$, the second entry of $\Phi(0)$, the first entry of $\Phi(1)$, and the second entry of $\Phi(1)$, respectively. The first diagonal slot corresponds to $\operatorname{Ran}P$ and the second to $\ker P$.
\end{definition}

The next lemma shows that this construction is independent of the Halmos realization and covariant under simultaneous unitary conjugation.

\begin{lemma}[Symbol covariance]
\label{lem:symbol-covariance}
The operator $A_\Phi(P,Q)$ is independent of the chosen Halmos realization. It satisfies $A_\Phi(VPV^*,VQV^*)=V A_\Phi(P,Q)V^*$ for every unitary $V$.
\end{lemma}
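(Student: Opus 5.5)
The plan is to describe every Halmos realization intrinsically, so that $A_\Phi(P,Q)$ can be written using only operators built from $P$ and $Q$. On the four commuting sectors $\mathcal H_{ij}$ there is nothing to prove. These subspaces are determined by $P$ and $Q$ alone, since $\mathcal H_{11}=\operatorname{Ran}(P\wedge Q)$ and so on. The construction places a scalar multiple of the identity on each of them, and that scalar is fixed by $\Phi(0)$ and $\Phi(1)$. So all the work is on the generic summand $\mathcal H_g$, which is itself intrinsically defined as the complement of the four sectors.

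On $\mathcal H_g$, a Halmos realization is a unitary $W:\mathcal K\oplus\mathcal K\to\mathcal H_g$, with $\mathcal K$ a multiplicity space. It carries $\operatorname{Ran}P$ onto the first summand, $P$ onto $\operatorname{diag}(I,0)$, and $Q$ onto the block form with entries $H$, $K=(H(I-H))^{1/2}$, $K$, $I-H$. Suppose $W'$ is a second realization with data $\mathcal K'$ and $H'$. Then $V=W'^{*}W$ is a unitary that intertwines the two block forms of $P$ and of $Q$.

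The first step is to show that $V$ is diagonal, $V=\operatorname{diag}(V_1,V_2)$. This holds because $V$ commutes with $\operatorname{diag}(I,0)$. Next, intertwining the upper left corner of $Q$ gives $V_1H=H'V_1$, and intertwining the lower right corner $I-H$ gives $V_2H=H'V_2$.

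The second step compares the off-diagonal corners. Intertwining them gives $V_1K=K'V_2$. Using $V_2H=H'V_2$ we get $K'V_2=V_2K$, so $V_1K=V_2K$. Generic position makes $\ker K=0$, because $\ker H=\ker(I-H)=0$ there; this is the generic-summand part of Lemma~\ref{lem:two-projection-identities}, expressed in the model. Hence $K$ has dense range, and therefore $V_1=V_2=:V_0$, with $V_0$ intertwining $H$ and $H'$.

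The third step transports the functional calculus. Borel functional calculus is covariant under unitary equivalence, so $V_0\phi_{ij}(H)V_0^*=\phi_{ij}(H')$ for every entry. It follows that the matrix field $(\phi_{ij}(H))$ is carried by $V$ onto $(\phi_{ij}(H'))$. Therefore $W(\phi_{ij}(H))W^*=W'(\phi_{ij}(H'))W'^*$, which proves independence from the realization.

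Covariance then follows from what has been shown. Let $V$ be a unitary on $\mathcal H$. It maps the sectors $\mathcal H_{ij}$ and the generic summand of the pair $(P,Q)$ onto the corresponding subspaces of $(VPV^*,VQV^*)$. If $W$ is a Halmos realization for $(P,Q)$, then $VW$ is a Halmos realization for the conjugated pair, with the same $H$. By independence of the realization, we may compute $A_\Phi(VPV^*,VQV^*)$ using $VW$, and this gives exactly $VA_\Phi(P,Q)V^*$ on each summand.

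The main obstacle is the dense-range step that forces $V_1=V_2$. Everything else is bookkeeping. Note also that the argument needs the canonical choice $K=(H(I-H))^{1/2}\ge0$; this is what makes $K'V_2=V_2K$ valid, and it is why the off-diagonal sign convention is built into the Halmos form.
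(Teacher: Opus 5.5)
Your proof is correct and follows essentially the same route as the paper: a block-diagonal intertwiner, the identity $V_1=V_2$ obtained from the off-diagonal block using injectivity from generic position, transport of the Borel functional calculus of $H$, and then conjugation by $V$ for covariance. The only differences are cosmetic. You cancel on the right using the dense range of $K$ (via $V_2H=H'V_2$), while the paper cancels on the left using injectivity of $K'$ (via $W_1H=H'W_1$), and you also justify explicitly why the intertwiner is block diagonal, which the paper takes for granted.
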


\begin{proof}
We use the standard Halmos decomposition with multiplicity \cite{bottcher-spitkovsky2010,spitkovsky1994}. Consider two Halmos realizations of the generic summand, with parameters $H$ and $H'$. Put $K=(H(I-H))^{1/2}$ and $K'=(H'(I-H'))^{1/2}$ for their off-diagonal blocks. Let $W_1\oplus W_2$ be a unitary intertwining the two realizations. From the diagonal and off-diagonal blocks of $Q$, respectively, we obtain $W_1H=H'W_1$ and $W_1K=K'W_2$. The first relation implies $W_1K=K'W_1$ by continuous functional calculus. By Lemma~\ref{lem:two-projection-identities}, the operators corresponding to $Q-P$ and $P+Q-I$ in the second Halmos realization are injective. Their squares are $\operatorname{diag}(I-H',I-H')$ and $\operatorname{diag}(H',H')$, so $\ker(I-H')=\ker H'=0$. Hence $K'$ is injective, and therefore $W_1=W_2$. Write the common unitary as $W$. The relation $WH=H'W$ implies $Wf(H)=f(H')W$ for every bounded Borel function $f$. Hence $W\phi_{ij}(H)=\phi_{ij}(H')W$ for each entry $\phi_{ij}$, so $W\oplus W$ intertwines the two matrix fields obtained from $\Phi$. Hence $A_\Phi(P,Q)$ is independent of the realization. Simultaneous conjugation by $V$ carries the four intersection sectors, the generic summand, and the central parameter to the corresponding data for $VPV^*,VQV^*$. Applying the same symbol on each corresponding summand, we obtain the covariance formula.
\end{proof}

We now apply the fixed spectral threshold to this covariant self-adjoint construction.

\begin{definition}[Symbol implication]
\label{def:symbol-implication}
Let $\Phi$ be a bounded Borel symbol for two projections, and let $P,Q$ be projections. Put $A=A_\Phi(P,Q)$ and $P\Rightarrow_\Phi Q=E^A([0,\infty))$. We call this value the \emph{truth projection} produced by $\Phi$ for the pair $(P,Q)$.
\end{definition}

The measurable matrix field form is intrinsic to each algebra $W^*(P,Q)$. Definition~\ref{def:borel-symbol} imposes a uniformity requirement because the same symbol acts on every pair. When $\Phi$ is continuous, the Raeburn--Sinclair model \cite{raeburn-sinclair} gives a unique self-adjoint element $a_\Phi\in C^*(p,q)$. Let $\pi_{P,Q}$ denote the unique unital $*$-representation determined by $\pi_{P,Q}(p)=P$ and $\pi_{P,Q}(q)=Q$. Then $A_\Phi(P,Q)=\pi_{P,Q}(a_\Phi)$ for every pair $P,Q$. On the generic summand, both sides apply the same matrix field to the spectral decomposition of the central parameter. On the four commuting sectors, both constructions use the corresponding diagonal endpoint entries. Thus continuous symbols and self-adjoint elements of the universal algebra describe the same uniform constructions.

The spectral sign implication belongs to this class. The map $x\mapsto b_x$ is continuous, and the endpoint values $b_0=\operatorname{diag}(-1,1)$ and $b_1=0$ are diagonal. Hence $\Phi_{\mathrm{sp}}(x)=b_x$ is a continuous symbol, with $A_{\Phi_{\mathrm{sp}}}(P,Q)=Q-P$. Its symbol implication is exactly $\Rightarrow_{\mathrm{sp}}$.

Three related descriptions will be used below. In the first, the symbol $\Phi$ specifies the uniform self-adjoint construction to which the spectral threshold is applied. The second uses a Borel set $Z$, introduced in Subsection~\ref{subsec:selectors}, to specify the resulting truth projection as a spectral selector of $Q-P$. In the continuous formulation using the universal algebra, a self-adjoint element $a\in C^*(p,q)$ specifies the same type of construction through its representations. These descriptions appear successively as the argument becomes more rigid.

\subsection{Branch Reduction}
\label{subsec:branch}

The first step reduces every admissible truth projection on a generic fiber to one of the two spectral branches of $Q-P$. The measurable branch choices are classified in Subsection~\ref{subsec:selectors}, and continuity is imposed in Subsection~\ref{subsec:continuous}. Only (E), (CN), and (F) are used in this part of the argument. The stronger equalities for modus ponens and modus tollens, together with condition (NG), are properties of the spectral sign implication and are not assumed below. The proof begins with the endpoint constraints and the symmetry of a generic fiber.

The endpoint values of a symbol are already strongly constrained by (E) and (F).

\begin{lemma}[Endpoint orientation]
\label{lem:endpoint-orientation}
Let $\Phi$ be a bounded Borel symbol for two projections and suppose $\Rightarrow_\Phi$ satisfies (E) and (F). Write $\Phi(0)=\operatorname{diag}(c_{10},c_{01})$ and $\Phi(1)=\operatorname{diag}(c_{11},c_{00})$. Then $c_{10}<0$ and $c_{01},c_{11},c_{00}\ge0$.
\end{lemma}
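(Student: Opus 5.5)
The plan is to test the symbol implication on pairs of projections that live entirely in one of the four commuting sectors. On such a pair, $A_\Phi(P,Q)$ is a scalar multiple of the identity, so (E) and (F) can be read off directly from the sign of that scalar. By Definition~\ref{def:uniform-symbol-construction}, on $\mathcal H_{ij}$ the operator $A_\Phi(P,Q)$ acts as the scalar $c_{ij}$. The truth projection on that sector is therefore $I$ if $c_{ij}\ge0$ and $0$ if $c_{ij}<0$. Because the conditions are required on every Hilbert space, I will work on $\mathcal H=\mathbb C$ with scalar projections $P,Q\in\{0,1\}$. Each of the four choices places the whole space in exactly one sector.

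The four cases are as follows.
\begin{caselist}
\item For $\mathcal H_{01}$, take $P=0$, $Q=I$. Here $P\le Q$, so (E) forces $P\Rightarrow_\Phi Q=I$. Since $A=c_{01}I$, this requires $c_{01}\ge0$.
\item For $\mathcal H_{11}$, take $P=Q=I$. Then $P\le Q$ and $A=c_{11}I$, so (E) gives $c_{11}\ge0$.
\item For $\mathcal H_{00}$, take $P=Q=0$. Again (E) applies, and $A=c_{00}I$ gives $c_{00}\ge0$.
\item For $\mathcal H_{10}$, take $P=I$, $Q=0$. By (F) the value must be $0$. Since $A=c_{10}I$ on $\mathcal H\neq0$, we get $E^{c_{10}I}([0,\infty))=0$, which holds exactly when $c_{10}<0$.
\end{caselist}
Alternatively, (E) applied to $P=I$, $Q=0$ on $\mathbb C$ already gives that the value is not $I$, and since the value is scalar it must be $0$. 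Either route yields $c_{10}<0$.

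I expect no real obstacle. The only point to check is that for these scalar pairs the generic summand is $0$, so that $A_\Phi$ reduces to the endpoint entries exactly as Definition~\ref{def:uniform-symbol-construction} prescribes. This holds because on $\mathbb C$ the entire space is one of the $\mathcal H_{ij}$.
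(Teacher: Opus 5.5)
Your proposal is correct and follows essentially the same approach as the paper: it tests the four one-sector pairs $(I,0)$, $(0,I)$, $(I,I)$, $(0,0)$, using (F) to get $c_{10}<0$ and (E) to get nonnegativity of the other three entries. Your side remark is also valid: on a nonzero space the scalar truth projection for $(I,0)$ can only be $0$ or $I$, so (E) alone already forces $c_{10}<0$.
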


\begin{proof}
The pair $(I,0)$ corresponds to the first slot at $x=0$. Condition (F) requires its truth projection to be zero, so $c_{10}<0$. The pair $(0,I)$ corresponds to the second slot at $x=0$ and satisfies $0\le I$. Condition (E) therefore requires $c_{01}\ge0$. The pairs $(I,I)$ and $(0,0)$ correspond to the two slots at $x=1$ and satisfy the order relation, so (E) requires $c_{11},c_{00}\ge0$.
\end{proof}

The generic fibers also have a symmetry under complementation.

\begin{lemma}[Complement swap]
\label{lem:complement-swap}
Let $x\in(0,1)$, put $t=\sqrt{1-x}$, and let $v_x=b_x/t$. Then $v_x$ is a self-adjoint unitary and $v_xp_xv_x=I-q_x$, $v_xq_xv_x=I-p_x$.
\end{lemma}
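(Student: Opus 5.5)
The plan is to verify the three claims directly from the identities already recorded for the canonical fibers, using the comparison operators $b_x$ and $s_x$ of Definition~\ref{def:canonical-fibers}.

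\emph{$v_x$ is a self-adjoint unitary.} Since $b_x=q_x-p_x$ is a difference of self-adjoint matrices, it is self-adjoint, and so is $v_x=b_x/t$ because $t>0$. Note that $t>0$ precisely because $x<1$. From the direct calculation $b_x^2=(1-x)I=t^2I$, we get $v_x^2=I$. Together with self-adjointness this gives $v_x^*v_x=I$, so $v_x$ is unitary.

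\emph{The swap identities.} Rather than multiplying matrices entrywise, I would use the algebra of two projections. For projections $p,q$ put $b=q-p$. Then
\[
bpb=(q-p)p(q-p)=qpq-qp-pq+p.
\]
Expanding $b^2=p+q-pq-qp$ gives $qp+pq=p+q-b^2$, so
\[
bpb=qpq-q+b^2.
\]
Also $qpq=q(I-(q-p)^2)q$ up to a rearrangement. More directly, $qb^2q=q(p+q-pq-qp)q=qpq+q-qpq-qpq=q-qpq$. On the canonical fiber $b^2=t^2I$, so $qpq=q-t^2q=(1-t^2)q$. Substituting, $bpb=(1-t^2)q-q+t^2I=t^2(I-q)$. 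Dividing by $t^2$ yields $v_xp_xv_x=I-q_x$.

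For the second identity, interchange the roles of $p$ and $q$. Since $(p-q)=-b$, the identity $bqb=t^2(I-p)$ follows by the same computation. Dividing by $t^2$ gives $v_xq_xv_x=I-p_x$.

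\emph{Expected obstacle.} There is no real obstacle. The only care needed is to use $t\ne0$, that is $x<1$, when dividing, and to keep the centrality $b_x^2=t^2I$ in play so that the products collapse. As a sanity check, the result can also be confirmed by direct $2\times2$ multiplication with $p_x=\operatorname{diag}(1,0)$. Alternatively, one can view it as the qubit instance of Lemma~\ref{lem:spectral-symmetry} with the roles of $B$ and $S$ exchanged.
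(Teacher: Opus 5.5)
Your proof is correct, but it takes a different route from the paper, which simply multiplies the explicit $2\times2$ matrices of Definition~\ref{def:canonical-fibers}. You instead work in the algebra of two projections. The identities $bpb=qpq-q+b^2$ and $qb^2q=q-qpq$ hold for every pair of projections, and the canonical fiber enters only through $b_x^2=t^2I$.

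The paper's check is the quickest for one explicit fiber. Yours exposes the mechanism and generalizes. Since $B^2$ commutes with $Q$ by Lemma~\ref{lem:two-projection-identities}, your identities give $BPB=B^2(I-Q)$ and $BQB=B^2(I-P)$ for an arbitrary pair. On the generic summand, take $V=\operatorname{sgn}(B)$ and use that $|B|$ is injective and commutes with $P$, $Q$, and $V$. This yields $VPV=Q^\perp$ and $VQV=P^\perp$ at the operator level, a companion to Lemma~\ref{lem:spectral-symmetry}.

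Two minor points. First, the aside $qpq=q(I-(q-p)^2)q$ is an exact identity, so the hedge ``up to a rearrangement'' is unnecessary. Second, your closing alternative via Lemma~\ref{lem:spectral-symmetry} with $B$ and $S$ exchanged yields only $v_xs_xv_x=-s_x$. You also need $v_xb_xv_x=b_x$, and then you add and subtract the two relations to get both swap formulas.
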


\begin{proof}
Since $b_x^2=t^2I$, the operator $v_x$ is a self-adjoint unitary. Multiplying the matrices in Definition~\ref{def:canonical-fibers}, we verify the two conjugation formulas.
\end{proof}

This symmetry is the key step in reducing the possible truth projections.

\begin{theorem}[Branch reduction]
\label{thm:branch-reduction}
Let $\Phi$ be a bounded Borel symbol for two projections. Suppose $\Rightarrow_\Phi$ satisfies (E), (CN), and (F). For every $x\in(0,1)$, put $t=\sqrt{1-x}$, $R_x^+=E^{b_x}(\{t\})$, and $R_x^-=E^{b_x}(\{-t\})$. Then the truth projection $E^{\Phi(x)}([0,\infty))$ is either $R_x^+$ or $R_x^-$. The choice defines a Borel map $\varepsilon_\Phi:(0,1)\to\{+,-\}$, called the \emph{branch function} of $\Phi$.
\end{theorem}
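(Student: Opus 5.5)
The plan is to work fiber by fiber: first show that the truth projection on a generic fiber has rank one, then use (CN) together with Lemma~\ref{lem:complement-swap} to pin it to an eigenline of $b_x$, and finally check Borel dependence on $x$.

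\textbf{Realizing the fiber.} Fix $x\in(0,1)$. Regard $P=p_x$ and $Q=q_x$ as a pair of projections on $\mathbb C^2$. This pair is in generic position, so $\mathcal H=\mathcal H_g$. In the Halmos form with first coordinate $\operatorname{Ran}P$, the multiplicity space is $\mathbb C$ and the central parameter is the scalar $H=x$, as the convention $s_x^2=xI$ confirms. By Definition~\ref{def:uniform-symbol-construction} and Lemma~\ref{lem:symbol-covariance}, $A_\Phi(p_x,q_x)=\Phi(x)$. Hence $T_x:=E^{\Phi(x)}([0,\infty))$ is exactly $p_x\Rightarrow_\Phi q_x$, and the hypotheses (E), (CN), (F) constrain $T_x$ directly.

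\textbf{Rank one.} Since $p_x\nleq q_x$, condition (E) gives $T_x\ne I$. Since $(p_x,q_x)\ne(I,0)$, condition (F) gives $T_x\ne0$. So $T_x$ is a projection of rank one in $M_2(\mathbb C)$.

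\textbf{Contraposition.} Let $v_x=b_x/t$. By Lemma~\ref{lem:complement-swap}, $(q_x^\perp,p_x^\perp)=(v_xp_xv_x,\,v_xq_xv_x)$. Since $v_x$ is a self-adjoint unitary, Lemma~\ref{lem:symbol-covariance} gives
\[
A_\Phi(q_x^\perp,p_x^\perp)=v_x\Phi(x)v_x .
\]
Unitary covariance of the Borel functional calculus then gives $q_x^\perp\Rightarrow_\Phi p_x^\perp=v_xT_xv_x$. Condition (CN) forces $T_x=v_xT_xv_x$, so $T_x$ commutes with $v_x$. Hence the line $\operatorname{Ran}T_x$ is invariant under $v_x$, and so it is spanned by an eigenvector of $v_x$. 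The eigenvalues $\pm1$ of $v_x$ are simple, with eigenprojections $E^{b_x}(\{\pm t\})=R_x^\pm$. Therefore $T_x\in\{R_x^+,R_x^-\}$.

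\textbf{Measurability.} First, $R_x^+=(I+v_x)/2$ depends continuously on $x\in(0,1)$. Second, $x\mapsto T_x$ is Borel. To see this, choose continuous $f_n:\mathbb R\to[0,1]$ decreasing pointwise to $\mathbf 1_{[0,\infty)}$. Each map $A\mapsto f_n(A)$ is continuous on $M_2(\mathbb C)_{\mathrm{sa}}$, and $f_n(A)\to E^A([0,\infty))$ by the spectral theorem, since the spectrum is finite. So $A\mapsto E^A([0,\infty))$ is a pointwise limit of continuous maps, hence Borel, and it is composed with the Borel map $\Phi$. It follows that
\[
\{x:\varepsilon_\Phi(x)=+\}=\{x:\operatorname{Tr}(T_xR_x^+)=1\}
\]
is Borel, because the trace equals $1$ or $0$ according as $T_x=R_x^+$ or $T_x=R_x^-$.

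\textbf{Main obstacle.} The only delicate point is the identification $A_\Phi(p_x,q_x)=\Phi(x)$ and the transfer of (CN) through conjugation by $v_x$. The key is to apply Lemma~\ref{lem:symbol-covariance} to the conjugated pair. This avoids recomputing the Halmos parameter of $(q_x^\perp,p_x^\perp)$ directly.
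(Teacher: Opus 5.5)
Your proposal is correct and follows essentially the same route as the paper's proof. You get rank one from (E) and (F), commutation with $v_x$ from (CN) via Lemmas~\ref{lem:complement-swap} and \ref{lem:symbol-covariance}, and Borel dependence via continuous approximations of $\mathbf 1_{[0,\infty)}$ combined with the test $\operatorname{Tr}(T_xR_x^+)\in\{0,1\}$.
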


\begin{proof}
Let $x\in(0,1)$. For the canonical pair $(p_x,q_x)$, the construction of $A_\Phi$ uses multiplicity space $\mathbb C$ and central parameter $H=x$, so $A_\Phi(p_x,q_x)=\Phi(x)$. Put $R_x=E^{\Phi(x)}([0,\infty))$. Since $0<x<1$, the pair $(p_x,q_x)$ is a generic fiber. In particular, $p_x\nleq q_x$, so (E) requires $R_x\ne I$. The pair is not $(I,0)$, so (F) requires $R_x\ne0$. Hence $R_x$ has rank one.

By Lemmas~\ref{lem:complement-swap} and \ref{lem:symbol-covariance}, $(I-q_x)\Rightarrow_\Phi(I-p_x)=v_xR_xv_x$. Condition (CN) requires $R_x=v_xR_xv_x$. Thus $R_x$ commutes with $v_x$. The two eigenspaces of $v_x$ are exactly the two eigenspaces of $b_x$. A nontrivial projection commuting with $v_x$ must therefore equal $R_x^+$ or $R_x^-$.

Define $\varepsilon_\Phi(x)=+$ in the first case and $\varepsilon_\Phi(x)=-$ in the second. The map $x\mapsto\Phi(x)$ is Borel. To verify the required measurability of the spectral threshold, let $g_n:\mathbb R\to[0,1]$ be continuous, equal to $0$ on $(-\infty,-1/n]$, equal to $1$ on $[0,\infty)$, and linear on $[-1/n,0]$. For every self-adjoint $2\times2$ matrix $C$, the matrices $g_n(C)$ converge to $E^C([0,\infty))$. The map $C\mapsto g_n(C)$ is continuous for each fixed $n$. Hence $C\mapsto E^C([0,\infty))$ is Borel, and therefore $x\mapsto R_x$ is Borel. The projection $R_x^+$ is continuous on $(0,1)$. Hence the function $x\mapsto\operatorname{Tr}(R_xR_x^+)$ is Borel and takes only the values $0$ and $1$. Its value is $1$ exactly on the positive branch. Thus $\varepsilon_\Phi$ is Borel.
\end{proof}

The branch reduction also constrains the defining observable on generic fibers.

\begin{proposition}[Generic observable form]
\label{prop:generic-observable-form}
Under the hypotheses of Theorem~\ref{thm:branch-reduction}, one has $\Phi(x)b_x=b_x\Phi(x)$ for every $x\in(0,1)$. Consequently, there are Borel scalar functions $\alpha,\beta:(0,1)\to\mathbb R$ such that $\Phi(x)=\alpha(x)I+\beta(x)b_x$.
\end{proposition}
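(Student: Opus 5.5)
The plan is to read off commutation directly from the conclusion of Theorem~\ref{thm:branch-reduction}, and then to write $\Phi(x)$ in the basis $\{I,b_x\}$ of the commutant of $b_x$ in $M_2(\mathbb C)$.

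Fix $x\in(0,1)$ and put $t=\sqrt{1-x}>0$ and $R_x=E^{\Phi(x)}([0,\infty))$. A spectral projection of a self-adjoint matrix commutes with that matrix, so $\Phi(x)R_x=R_x\Phi(x)$. By Theorem~\ref{thm:branch-reduction}, $R_x$ has rank one and equals $R_x^+$ or $R_x^-$. Since $R_x^++R_x^-=I$, the matrix $\Phi(x)$ also commutes with $I-R_x$, which is the other branch projection. Hence $\Phi(x)$ commutes with both $R_x^+$ and $R_x^-$. Because $b_x=t(R_x^+-R_x^-)$, this gives $\Phi(x)b_x=b_x\Phi(x)$.

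For the second assertion, note that $b_x$ has the two distinct eigenvalues $\pm t$. Its commutant in $M_2(\mathbb C)$ is therefore the span of $R_x^+$ and $R_x^-$, which is the same as the span of $I$ and $b_x$. Since $\Phi(x)$ is self-adjoint, it has the form $\lambda_+R_x^++\lambda_-R_x^-$ with real $\lambda_\pm$. This is $\alpha(x)I+\beta(x)b_x$ with $\alpha(x)=(\lambda_++\lambda_-)/2$ and $\beta(x)=(\lambda_+-\lambda_-)/(2t)$.

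The one point needing care is Borel measurability of $\alpha$ and $\beta$, because the eigenvalue labelling above depends on $x$. To avoid this, I will express the coefficients by traces. Since $\operatorname{Tr}b_x=0$ and $b_x^2=t^2I$, we have $\operatorname{Tr}(b_x^2)=2t^2$, so
\[
\alpha(x)=\tfrac12\operatorname{Tr}\Phi(x),\qquad \beta(x)=\frac{\operatorname{Tr}(\Phi(x)b_x)}{2(1-x)}.
\]
These are compositions of the Borel map $x\mapsto\Phi(x)$ with continuous maps, and $x\mapsto b_x$ is continuous with $1-x>0$ on $(0,1)$. So both coefficients are Borel and real-valued, and no obstacle remains beyond this bookkeeping.
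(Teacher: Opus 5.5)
Your proposal is correct and takes essentially the same route as the paper. Commutation of $\Phi(x)$ with its own spectral projection $R_x\in\{R_x^+,R_x^-\}$ gives commutation with $b_x$, the commutant of the nonscalar matrix $b_x$ is $\operatorname{span}\{I,b_x\}$, and Borel measurability follows from the trace formulas $\alpha(x)=\frac12\operatorname{Tr}\Phi(x)$ and $\beta(x)=\operatorname{Tr}(b_x\Phi(x))/(2(1-x))$, exactly as in the paper's proof.
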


\begin{proof}
The projection $R_x=E^{\Phi(x)}([0,\infty))$ has rank one. Both $R_x\mathbb C^2$ and its orthogonal complement are spectral subspaces of the self-adjoint matrix $\Phi(x)$. Hence $\Phi(x)$ is diagonal with respect to the decomposition $R_x\mathbb C^2\oplus(I-R_x)\mathbb C^2$. Theorem~\ref{thm:branch-reduction} says that this is also the eigenspace decomposition of $b_x$. Thus the two matrices commute. Since $b_x$ is not scalar, its commutant in $M_2(\mathbb C)$ is $\operatorname{span}\{I,b_x\}$. Writing $\Phi(x)=\alpha(x)I+\beta(x)b_x$ and using $\operatorname{Tr}b_x=0$ and $b_x^2=(1-x)I$, we obtain $\alpha(x)=\frac12\operatorname{Tr}\Phi(x)$ and $\beta(x)=\operatorname{Tr}(b_x\Phi(x))/(2(1-x))$. These formulas show that $\alpha$ and $\beta$ are Borel on $(0,1)$.
\end{proof}

At the observable level, the endpoint corresponding to the spectral value zero of $Q-P$ carries an additional degree of freedom. The two sectors $P\wedge Q$ and $P^\perp\wedge Q^\perp$ both have spectral value zero for $Q-P$, while a general matrix symbol may assign them different nonnegative scalar values. Both scalar values contribute to the truth projection because they are nonnegative. Thus, after the generic reduction above, a possible difference between these two zero sectors is the only remaining obstruction to representing the observable as a Borel function of $Q-P$. It does not affect the truth projection.

\section{Classification}
\label{sec:classification}

The branch reduction restricts each generic fiber to one of the two spectral branches of $Q-P$. The next stage treats measurable branch selections and then determines the additional restriction imposed by continuity.

\subsection{Borel Classification}
\label{subsec:selectors}

A Borel subset of $[-1,1]$ can represent a Borel branch function. We first define the sets that encode complete branch choices.

\begin{definition}[Borel transversal]
\label{def:borel-transversal}
A Borel set $Z\subseteq[-1,1]$ is a \emph{Borel transversal} if $0,1\in Z$, $-1\notin Z$, and exactly one of $t$ and $-t$ belongs to $Z$ for every $0<t<1$.
\end{definition}

The selector construction is defined for arbitrary Borel sets before conditions (E) and (F) are imposed.

\begin{definition}[Borel spectral selector]
\label{def:borel-selector}
Let $Z\subseteq[-1,1]$ be Borel. The operation $P\Rightarrow_ZQ=E^{Q-P}(Z)$ is called the \emph{Borel spectral selector} associated with $Z$.
\end{definition}

Every Borel spectral selector satisfies (CN) because the comparison operator is unchanged by $(P,Q)\mapsto(Q^\perp,P^\perp)$. The next two propositions show that the endpoint requirements and the branch condition in Definition~\ref{def:borel-transversal} are exactly those imposed jointly by (E) and (F). We state the two criteria separately.

\begin{proposition}[Entailment criterion]
\label{prop:selector-e}
The operation $\Rightarrow_Z$ satisfies (E) if and only if $0,1\in Z$, $-1\notin Z$, and $|\{t,-t\}\cap Z|\le1$ for every $0<t<1$.
\end{proposition}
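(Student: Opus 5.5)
The plan is to prove the two directions separately. Both rest on one reformulation of (E). For projections, $P\le Q$ is equivalent to $Q-P\ge0$, which in turn is equivalent to $\sigma(Q-P)\subseteq[0,1]$. Also, $E^{Q-P}(Z)=I$ holds exactly when $E^{Q-P}([-1,1]\setminus Z)=0$. So (E) for $\Rightarrow_Z$ says: the spectral measure of $Q-P$ gives no weight to $[-1,1]\setminus Z$ if and only if $\sigma(Q-P)\subseteq[0,1]$.

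For necessity, I will test $\Rightarrow_Z$ on small explicit pairs.
\begin{itemize}
\item The pair $(I,I)$ satisfies $I\le I$ and has $Q-P=0$. Since $E^0(Z)=I$ forces $0\in Z$, we get $0\in Z$.
\item The pair $(0,I)$ has $Q-P=I$, so (E) forces $1\in Z$.
\item The pair $(I,0)$ violates $P\le Q$. Its comparison operator is $-I$, so $E^{-I}(Z)\ne I$, which means $-1\notin Z$.
\item For $0<t<1$, take the generic fiber $(p_x,q_x)$ with $x=1-t^2$. Then $p_x\nleq q_x$ and $b_x$ has the two simple eigenvalues $\pm t$. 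If both $t,-t\in Z$, then $E^{b_x}(Z)=I$, contradicting (E). Hence $|\{t,-t\}\cap Z|\le1$.
\end{itemize}

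For sufficiency, assume the three conditions and let $P,Q$ be arbitrary.
\begin{itemize}
\item If $P\le Q$, then $\sigma(Q-P)\subseteq[0,1]$. I must show that $E^{Q-P}([0,1]\setminus Z)=0$, and this is the main obstacle. The conditions only put $0$ and $1$ into $Z$, so points of $(0,1)\setminus Z$ could in principle carry spectral weight.
\item To handle this, I will use the Halmos decomposition. When $P\le Q$, we have $P\wedge Q^\perp=P$ restricted appropriately, and in particular $\mathcal H_{10}=\operatorname{Ran}P\cap\ker Q=0$.
\item Moreover, $P\le Q$ means $\operatorname{Ran}P\subseteq\operatorname{Ran}Q$, so $\operatorname{Ran}P\subseteq\operatorname{Ran}P\cap\operatorname{Ran}Q=\mathcal H_{11}$. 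Hence $P$ commutes with $Q$ and the generic summand is zero, since the generic part of $\operatorname{Ran}P$ must vanish.
\item Consequently $Q-P$ is a commuting difference taking only the values $0$ (on $\mathcal H_{11}\oplus\mathcal H_{00}$) and $1$ (on $\mathcal H_{01}$). Since $0,1\in Z$, we get $E^{Q-P}(Z)=I$.
\end{itemize}

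Conversely, suppose $P\nleq Q$; I want to show $E^{Q-P}(Z)\ne I$.
\begin{itemize}
\item If $\mathcal H_{10}\ne0$, then Lemma~\ref{lem:exceptional-sectors} shows $E^{Q-P}(\{-1\})\ne0$, and this spectral projection is orthogonal to $E^{Q-P}(Z)$ because $-1\notin Z$. So $E^{Q-P}(Z)\ne I$.
\item Otherwise $\mathcal H_{10}=0$, and the generic summand must be nonzero, since if it vanished $P$ would lie in $\mathcal H_{11}$ and we would have $P\le Q$.
\item On the generic summand, Lemma~\ref{lem:spectral-symmetry} gives $UE^B(\Delta)U=E^B(-\Delta)$ with $B$ injective there, and with $\pm1$ not eigenvalues by Lemma~\ref{lem:exceptional-sectors}.
\item Put $Z_+=Z\cap(0,1)$. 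Because $Z$ contains at most one of $\pm t$, the sets $Z_+$, $-Z_+$, and $Z\cap(-1,0)$ can be arranged so that $Z\cap((-1,0)\cup(0,1))\subseteq W$ for some Borel $W$ with $W\cap(-W)=\emptyset$. One can take $W=(Z\cap(0,1))\cup(Z\cap(-1,0))$ itself, since the condition $|\{t,-t\}\cap Z|\le1$ gives $W\cap(-W)=\emptyset$ directly.
\item Write $E_g=E^B(W)$ on the generic summand. If $E_g=I_g$, then $E^B(-W)=UE_gU=I_g$ as well. But $E^B(W)E^B(-W)=E^B(W\cap(-W))=0$, so $I_g=0$, a contradiction.
\item Hence $E^{Q-P}(Z)\ne I$ on the generic summand.
\end{itemize}

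Together these cases give (E). The delicate points are checking that the generic summand carries no spectral weight at $0$ or $\pm1$, and that both symmetric copies are genuine spectral projections of the same operator on $\mathcal H_g$; Lemma~\ref{lem:spectral-symmetry} handles the latter.
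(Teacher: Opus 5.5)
Your proof is correct and follows essentially the paper's argument. Necessity comes from test pairs. Sufficiency comes from the spectral symmetry $UE^B(\Delta)U=E^B(-\Delta)$ on the generic summand, together with the at-most-one condition and the vanishing there of the singleton spectral projections at $0,\pm1$. You run it contrapositively with $W\cap(-W)=\emptyset$, while the paper runs it directly with $Z^c\cup(-Z^c)\supseteq(-1,1)\setminus\{0\}$.

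One simplification: when $P\le Q$, the operator $Q-P$ is itself a projection, so its spectrum lies in $\{0,1\}\subseteq Z$ at once. The Halmos detour is therefore unnecessary, and so is the slip it contains ($P\wedge Q^\perp=P$ should read $P\wedge Q^\perp=0$).
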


\begin{proof}
Assume first that (E) holds. The pairs $(0,0)$, $(0,I)$, and $(I,0)$ show respectively that $0\in Z$, $1\in Z$, and $-1\notin Z$. Let $0<t<1$. There is a generic qubit pair whose comparison operator has eigenvalues $\pm t$. If both values belonged to $Z$, the truth projection would be $I$, contrary to (E). Thus at most one of $t$ and $-t$ belongs to $Z$.

Conversely, assume the stated conditions and suppose $E^{Q-P}(Z)=I$. Decompose the pair into the four commuting sectors and the generic summand. On the generic summand, write $B=Q-P$. From the equality, $E^B(Z^c)=0$. By Lemma~\ref{lem:spectral-symmetry}, $E^B(-Z^c)=0$. The condition that at most one of $t$ and $-t$ belongs to $Z$ implies $Z^c\cup(-Z^c)\supseteq(-1,1)\setminus\{0\}$. By Lemma~\ref{lem:exceptional-sectors}, the singleton spectral projections at $0$, $1$, and $-1$ vanish on the generic summand. Therefore the generic summand must be zero.

Only the commuting sectors remain. The value of $B$ on $\mathcal H_{10}$ is $-1$, which does not belong to $Z$. Since the truth projection is $I$, this sector must be zero. Thus $P\le Q$. The reverse implication follows because $P\le Q$ makes $Q-P$ a projection with spectrum contained in $\{0,1\}$, and both values belong to $Z$.
\end{proof}

The corresponding criterion for the zero truth value is complementary.

\begin{proposition}[Falsity criterion]
\label{prop:selector-f}
The operation $\Rightarrow_Z$ satisfies (F) if and only if $0,1\in Z$, $-1\notin Z$, and $|\{t,-t\}\cap Z|\ge1$ for every $0<t<1$.
\end{proposition}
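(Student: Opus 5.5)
The argument mirrors the proof of Proposition~\ref{prop:selector-e}, with the roles of $Z$ and $Z^c$ interchanged. For necessity, assume (F) holds and test it on the commuting pairs. For the pair $(I,0)$ we have $Q-P=-I$, and (F) requires the truth projection to vanish, so $-1\notin Z$. For the pair $(0,I)$ we have $Q-P=I$, and since $(0,I)\neq(I,0)$ the value must be nonzero; hence $1\in Z$. Likewise the pair $(0,0)$, or $(I,I)$, has $Q-P=0$, and nonvanishing forces $0\in Z$. Now fix $0<t<1$ and choose a generic qubit pair, for example the canonical fiber $(p_x,q_x)$ with $x=1-t^2$. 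Its comparison operator has the two simple eigenvalues $\pm t$. If neither eigenvalue lay in $Z$, the truth projection would be $0$ although the pair is not $(I,0)$, contradicting (F). So $|\{t,-t\}\cap Z|\ge1$.

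For sufficiency, assume the stated conditions. First, the pair $(I,0)$ gives $E^{-I}(Z)=0$ because $-1\notin Z$. Conversely, suppose $E^{Q-P}(Z)=0$. I would show that $P=I$ and $Q=0$ by decomposing $\mathcal H$ into the four commuting sectors and the generic summand. On the sectors $\mathcal H_{01}$, $\mathcal H_{11}$ and $\mathcal H_{00}$, the operator $Q-P$ takes the values $1$, $0$ and $0$, all of which lie in $Z$. These sectors are contained in the range of $E^{Q-P}(Z)$, so they must be zero. On the generic summand, write $B=Q-P$. Then $E^B(Z)=0$, and Lemma~\ref{lem:spectral-symmetry} gives $E^B(-Z)=UE^B(Z)U=0$. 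The branch condition implies $Z\cup(-Z)\supseteq(-1,1)\setminus\{0\}$. The spectrum of $B$ lies in $[-1,1]$, and by Lemma~\ref{lem:exceptional-sectors} the singleton spectral projections at $0$ and $\pm1$ vanish on the generic summand. Therefore $E^B$ is the zero projection-valued measure on that summand, and the summand must be $0$. Only $\mathcal H_{10}$ remains, so $\mathcal H=\mathcal H_{10}$; that is, $P=I$ and $Q=0$.

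I expect the main delicate step to be the generic summand. There the spectral measure of $B$ may be continuous, so the argument has to run through the Borel symmetry of Lemma~\ref{lem:spectral-symmetry} rather than through eigenvectors. Care is also needed to account for the endpoints $0$ and $\pm1$ separately, using Lemma~\ref{lem:exceptional-sectors}. Once these points are handled, everything else is sector bookkeeping.
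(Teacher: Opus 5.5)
Your proof is correct and follows the paper's argument essentially step for step. For necessity you use the same test pairs and a generic qubit fiber. For sufficiency you use the same decomposition into the commuting sectors and the generic summand, and you eliminate the generic part with Lemma~\ref{lem:spectral-symmetry} and Lemma~\ref{lem:exceptional-sectors} exactly as the paper does; treating the commuting sectors before the generic summand is an immaterial reordering.
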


\begin{proof}
Assume first that (F) holds. The pair $(I,0)$ shows that $-1\notin Z$. If $0\notin Z$, the pair $P=Q=0$ would have truth projection zero. If $1\notin Z$, the pair $P=0$, $Q=I$ would have truth projection zero. Hence $0,1\in Z$. Let $0<t<1$ and use a generic qubit pair with comparison eigenvalues $\pm t$. If neither value belonged to $Z$, the truth projection would be zero, contrary to (F). Thus at least one of $t$ and $-t$ belongs to $Z$.

Conversely, assume the stated conditions and suppose $E^{Q-P}(Z)=0$. Decompose the pair into the four commuting sectors and the generic summand. On the generic summand, write $B=Q-P$. From the equality, $E^B(Z)=0$. By Lemma~\ref{lem:spectral-symmetry}, $E^B(-Z)=0$. The condition that at least one of $t$ and $-t$ belongs to $Z$ implies $Z\cup(-Z)\supseteq(-1,1)\setminus\{0\}$. Again by Lemma~\ref{lem:exceptional-sectors}, the singleton spectral projections at $0$, $1$, and $-1$ vanish on the generic summand. Hence the generic summand is zero.

On the commuting sectors, $0\in Z$ eliminates $\mathcal H_{11}$ and $\mathcal H_{00}$, while $1\in Z$ eliminates $\mathcal H_{01}$. The only possible sector is $\mathcal H_{10}$. Thus $P=I$ and $Q=0$. The converse follows from $-1\notin Z$.
\end{proof}

From the preceding propositions, we obtain the exact Borel classification.

\begin{theorem}[Borel transversal classification]
\label{thm:borel-classification}
The operation $\Rightarrow_Z$ satisfies (E), (CN), and (F) if and only if $Z$ is a Borel transversal.
\end{theorem}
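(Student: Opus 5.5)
The proof combines the two preceding criteria with the observation, made just before Proposition~\ref{prop:selector-e}, that every Borel spectral selector satisfies (CN). That observation is immediate: the pair $(Q^\perp,P^\perp)$ has comparison operator $P^\perp-Q^\perp=Q-P$, so $Q^\perp\Rightarrow_Z P^\perp=E^{Q-P}(Z)=P\Rightarrow_Z Q$. Hence (CN) imposes no condition on $Z$. The theorem therefore reduces to showing that $\Rightarrow_Z$ satisfies both (E) and (F) exactly when $Z$ is a Borel transversal.

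For the forward direction, assume (E), (CN), and (F).
\begin{caselist}
\item Proposition~\ref{prop:selector-e} gives $0,1\in Z$, $-1\notin Z$, and $|\{t,-t\}\cap Z|\le1$ for every $0<t<1$.
\item Proposition~\ref{prop:selector-f} gives $|\{t,-t\}\cap Z|\ge1$ for every $0<t<1$.
\item Together these say that exactly one of $t$ and $-t$ lies in $Z$ for each $0<t<1$.
\end{caselist}
Since $Z$ is Borel by the standing hypothesis of Definition~\ref{def:borel-selector}, it satisfies Definition~\ref{def:borel-transversal}.

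For the converse, let $Z$ be a Borel transversal. The condition ``exactly one'' implies both the $\le1$ and the $\ge1$ conditions. The endpoint requirements $0,1\in Z$ and $-1\notin Z$ are common to both criteria. Thus the converse halves of Propositions~\ref{prop:selector-e} and \ref{prop:selector-f} give (E) and (F), and (CN) holds as noted above.

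Essentially nothing in this argument is a real obstacle. The only point needing a remark is the inputs to the two criteria. Their forward halves used a generic qubit pair whose comparison operator has eigenvalues $\pm t$ for each $0<t<1$; this exists by taking the canonical fiber at $x=1-t^2$. Their converse halves used the spectral symmetry of Lemma~\ref{lem:spectral-symmetry}. Both were already supplied in those proofs, so the theorem is their formal conjunction.
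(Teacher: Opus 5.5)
Your proposal is correct and matches the paper's proof: (CN) holds automatically because $P^\perp-Q^\perp=Q-P$, and the remaining equivalence is the conjunction of Propositions~\ref{prop:selector-e} and \ref{prop:selector-f}. Your two side remarks are also accurate: $Z$ is Borel by the standing hypothesis of the selector definition, and the canonical fiber at $x=1-t^2$ supplies the qubit pair with comparison eigenvalues $\pm t$.
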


\begin{proof}
Condition (CN) is automatic for every $Z$. The remaining equivalence is the conjunction of Propositions~\ref{prop:selector-e} and \ref{prop:selector-f}.
\end{proof}

Every Borel transversal is realized by a symbol in the class introduced in Definition~\ref{def:borel-symbol}.

\begin{proposition}[Borel realization]
\label{prop:borel-realization}
Let $Z$ be a Borel transversal. Then there is a bounded Borel symbol $\Phi_Z$ for two projections such that $P\Rightarrow_{\Phi_Z}Q=P\Rightarrow_ZQ$ for every pair $P,Q$.
\end{proposition}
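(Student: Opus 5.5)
The natural candidate is $\Phi_Z(x)=\chi(b_x)$, where $\chi=\mathbf 1_Z-\mathbf 1_{Z^c}$ is the $\pm1$-valued indicator of $Z$ on $[-1,1]$. Then $E^{\chi(B)}([0,\infty))=E^B(Z)$ for every self-adjoint $B$. So the plan is first to check that $\Phi_Z$ is a bounded Borel symbol, and then to show that $A_{\Phi_Z}(P,Q)=\chi(Q-P)$ for every pair.

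For the symbol property, write $b_x$ explicitly on $(0,1)$. There the spectrum is $\{\pm t\}$ with $t=\sqrt{1-x}$, and the spectral projections are
\[
R_x^\pm=\tfrac12\bigl(I\pm b_x/t\bigr),
\]
which depend continuously on $x\in(0,1)$. Hence
\[
\Phi_Z(x)=\chi(t)R_x^++\chi(-t)R_x^-.
\]
Since $x\mapsto t$ is continuous, $\chi(\pm\sqrt{1-x})$ is Borel, so $\Phi_Z$ is Borel on $(0,1)$. It is bounded with norm at most $1$. At the endpoints, $b_0=\operatorname{diag}(-1,1)$ gives $\Phi_Z(0)=\operatorname{diag}(\chi(-1),\chi(1))=\operatorname{diag}(-1,1)$, and $b_1=0$ gives $\Phi_Z(1)=\chi(0)I=I$. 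Both are diagonal, so the endpoint condition of Definition~\ref{def:borel-symbol} holds.

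Next, compare $A_{\Phi_Z}(P,Q)$ with $\chi(Q-P)$ sector by sector. On $\mathcal H_{10},\mathcal H_{01},\mathcal H_{11},\mathcal H_{00}$, the operator $Q-P$ acts as $-1,1,0,0$. The construction in Definition~\ref{def:uniform-symbol-construction} uses the entries $-1,1,1,1$ of $\Phi_Z(0),\Phi_Z(1)$, and these equal $\chi(-1),\chi(1),\chi(0),\chi(0)$. So the two operators agree there.

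The generic summand is the main obstacle. In the Halmos form, $B=Q-P$ is the matrix field $x\mapsto b_x$ evaluated at $H$, that is, $B=\Phi_{\mathrm{sp}}(H)$ entrywise. We must show that applying $\chi$ to $B$ equals applying the field $x\mapsto\chi(b_x)$ entrywise through the calculus of $H$. Generic position gives $\ker H=\ker(I-H)=0$. Define $T=(I-H)^{1/2}$, which is injective, and $V=BT^{-1}$, defined via the functional calculus of $H$ on the dense range of $T$. Because $B$ commutes with $T$ and $B^2=T^2\oplus T^2$, the operator $V$ extends to a self-adjoint unitary. The field counterpart of $V$ is $v_x$ from Lemma~\ref{lem:complement-swap}.

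The spectral projections of $V$ are $F_\pm=\tfrac12(I\pm V)$, and these are given entrywise by the Borel field $R_x^\pm$. Then $B=T(F_+-F_-)$, where $T$ commutes with $F_\pm$. This gives
\[
\chi(B)=\chi(T)F_++\chi(-T)F_-.
\]
To justify this, note that $B$ restricted to $F_\pm\mathcal H$ equals $\pm T$ there. Each term $\chi(\pm T)F_\pm$ is the entrywise calculus of $H$ applied to $\chi(\pm\sqrt{1-x})R_x^\pm$. Summing, $\chi(B)$ is the entrywise calculus applied to $\Phi_Z$. If necessary, this identification can be confirmed first on polynomials in $B$, which correspond entrywise to polynomials in $b_x$ since the calculus of $H$ is a $*$-homomorphism into the matrix fields. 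It then extends to Borel $\chi$ by a monotone class argument using normality of the entrywise calculus.

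Finally, with $A_{\Phi_Z}(P,Q)=\chi(Q-P)$ established, we have $E^{\chi(Q-P)}([0,\infty))=E^{Q-P}(\chi^{-1}(\{1\}))=E^{Q-P}(Z)$. This is the claimed identity. Note that the argument never uses that $Z$ is a transversal beyond $Z$ being Borel; the transversal property is only needed elsewhere.
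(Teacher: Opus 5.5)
Your proposal is correct and follows the paper's proof. It uses the same symbol $\Phi_Z(x)=\chi(b_x)$ and the same decomposition $\Phi_Z(x)=\chi(t)R_x^++\chi(-t)R_x^-$ for measurability. It then makes the same sector-by-sector identification $A_{\Phi_Z}(P,Q)=\chi(Q-P)$, followed by $E^{\chi(Q-P)}([0,\infty))=E^{Q-P}(Z)$.

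Your treatment of the generic summand via $V=\operatorname{sgn}(B)$ and $\chi(B)=\chi(|B|)F_++\chi(-|B|)F_-$ spells out the step the paper compresses into ``fiberwise functional calculus.'' Your closing remark is also accurate: only Borel measurability of $Z$ is used, since the endpoint values $\operatorname{diag}(\chi(-1),\chi(1))$ and $\chi(0)I$ are diagonal for any Borel $Z$.
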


\begin{proof}
Define $f_Z:[-1,1]\to\{-1,1\}$ by $f_Z(t)=1$ when $t\in Z$ and $f_Z(t)=-1$ otherwise. Put $\Phi_Z(x)=f_Z(b_x)$ for $x\in[0,1]$. Let $0\le x<1$ and put $t=\sqrt{1-x}$. Define $R_x^\pm=\frac12(I\pm b_x/t)$ on this interval. Then $\Phi_Z(x)=f_Z(t)R_x^++f_Z(-t)R_x^-$. The coefficients are Borel functions of $x$, and $R_x^\pm$ are continuous on $[0,1)$. Hence the entries of $\Phi_Z$ are Borel on $[0,1)$. At $x=1$, $\Phi_Z(1)=f_Z(0)I$, so $\Phi_Z$ is a bounded Borel map on $[0,1]$. Moreover, $b_0=\operatorname{diag}(-1,1)$ and $b_1=0$. Thus both endpoint values of $\Phi_Z$ are diagonal, and $\Phi_Z$ is a bounded Borel symbol.

On the generic summand, fiberwise functional calculus yields $A_{\Phi_Z}(P,Q)=f_Z(Q-P)$. The same equality holds on the four commuting sectors because the endpoint values of $Q-P$ are $-1$, $1$, and $0$, exactly as prescribed by the endpoint slots of the symbol. Since $f_Z$ is nonnegative exactly on $Z$, the spectral mapping theorem yields $E^{f_Z(Q-P)}([0,\infty))=E^{Q-P}(Z)$. Hence $P\Rightarrow_{\Phi_Z}Q=P\Rightarrow_ZQ$ for every pair.
\end{proof}

The fixed threshold used in the construction can also be justified within the Borel class.

\begin{remark}[Choice of spectral threshold]
\label{rem:borel-threshold}
The threshold $[0,\infty)$ causes no loss of generality for the bounded Borel symbol class. Let $W\subseteq\mathbb R$ be Borel and suppose the truth projection is defined by $E^{A_\Phi(P,Q)}(W)$. Define $h_W$ to be $1$ on $W$ and $-1$ on its complement, and put $\Psi(x)=h_W(\Phi(x))$ by matrix functional calculus. The map $C\mapsto h_W(C)$ on $M_2(\mathbb C)_{\mathrm{sa}}$ is Borel. Indeed, on the open set of nonscalar matrices the ordered eigenvalues and their spectral projections are continuous, while $h_W(\lambda I)=h_W(\lambda)I$ on scalar matrices. Since functional calculus preserves diagonal matrices, $\Psi(0)$ and $\Psi(1)$ are diagonal. Hence $\Psi$ is again a bounded Borel symbol. The symbol calculus commutes with Borel functional calculus on the generic matrix field and on each endpoint sector, so $A_\Psi(P,Q)=h_W(A_\Phi(P,Q))$. Therefore $E^{A_\Phi(P,Q)}(W)=E^{A_\Psi(P,Q)}([0,\infty))$. This reduction need not preserve continuity, so it does not reduce an arbitrary Borel threshold to the same continuous symbol class.
\end{remark}

Varying the Borel transversal, we obtain many distinct operations.

\begin{corollary}[Cardinality of Borel selectors]
\label{cor:borel-cardinality}
There are continuum many distinct Borel spectral selectors satisfying (E), (CN), and (F).
\end{corollary}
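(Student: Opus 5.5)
To prove the corollary we must do two things: exhibit continuum many Borel transversals $Z$ whose selectors $\Rightarrow_Z$ are pairwise distinct as operations, and check that there are no more than continuum many such selectors. Theorem~\ref{thm:borel-classification} tells us that each Borel transversal gives an operation satisfying (E), (CN), and (F), so the real content is the counting.

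For the lower bound, we index transversals by subsets of a fixed countable set. Fix a sequence $t_n=1/(n+2)$ in $(0,1)$. For $N\subseteq\mathbb N$ we put
\[
Z_N=\{0,1\}\cup\bigl((0,1)\setminus\{t_n:n\in N\}\bigr)\cup\{-t_n:n\in N\}.
\]
This set is Borel, since it differs from the Borel set $\{0\}\cup(0,1]$ by a countable set. It contains $0$ and $1$ but not $-1$. For each $0<t<1$ it contains exactly one of $t$ and $-t$: it contains $-t$ when $t=t_n$ with $n\in N$, and $t$ otherwise. So $Z_N$ is a Borel transversal.

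Next we show that distinct sets $N\neq N'$ give distinct operations. Choose $n$ in the symmetric difference, say $n\in N\setminus N'$. As in the proof of Proposition~\ref{prop:selector-e}, there is a generic qubit pair whose comparison operator $Q-P$ has eigenvalues $\pm t_n$; explicitly, take $P=P_0$ and $Q=P_\alpha$ with $\sin\alpha=t_n$. Then $E^{Q-P}(Z_N)$ is the eigenprojection for $-t_n$, while $E^{Q-P}(Z_{N'})$ is the eigenprojection for $t_n$. These are different rank-one projections, so $\Rightarrow_{Z_N}\neq\Rightarrow_{Z_{N'}}$. Since there are $2^{\aleph_0}$ subsets of $\mathbb N$, this gives at least continuum many distinct selectors.

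For the upper bound, every selector is determined by its Borel set $Z\subseteq[-1,1]$, and there are only continuum many Borel subsets of $[-1,1]$. This is a standard fact, because the Borel $\sigma$-algebra is generated by countably many intervals. Hence there are at most continuum many selectors, and with the lower bound, exactly continuum many.

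The only point needing care is the distinctness step. Different sets $Z$ could in principle induce the same operation, so we must witness the difference on an actual pair of projections. The explicit qubit pair above does this.
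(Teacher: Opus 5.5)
Your proof is correct and is essentially the paper's argument: your $Z_N$ is exactly the paper's $Z_T$ with $T=(0,1)\setminus\{t_n:n\in N\}$, and you witness distinctness with the same generic qubit pair whose comparison operator has eigenvalues $\pm t$. The only differences are minor: you index by subsets of a countable set, so the lower bound needs only $|2^{\mathbb N}|=2^{\aleph_0}$ rather than a count of Borel subsets of $(0,1)$, and you state the upper bound from the cardinality of the Borel $\sigma$-algebra explicitly, which the paper leaves implicit.
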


\begin{proof}
Let $T\subseteq(0,1)$ be Borel and put $Z_T=\{0,1\}\cup T\cup\{-t:t\in(0,1)\setminus T\}$. Each $Z_T$ is a Borel transversal, so Theorem~\ref{thm:borel-classification} gives a Borel spectral selector satisfying (E), (CN), and (F). There are continuum many Borel subsets of $(0,1)$. If $T_1\ne T_2$, choose $t$ in their symmetric difference and take a generic qubit pair with comparison eigenvalues $\pm t$. The corresponding truth projections are the projections onto opposite eigenspaces, so the operations are distinct.
\end{proof}

Combining the branch reduction with the selector criterion, we obtain the complete classification for the original bounded Borel symbol class.

\begin{theorem}[Borel symbol classification]
\label{thm:symbol-classification}
An operation arising from a bounded Borel symbol for two projections satisfies (E), (CN), and (F) if and only if it is a Borel spectral selector $\Rightarrow_Z$ for some Borel transversal $Z$.
\end{theorem}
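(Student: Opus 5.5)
The proof has two directions. The converse direction is already available. If $Z$ is a Borel transversal, Proposition~\ref{prop:borel-realization} produces a bounded Borel symbol $\Phi_Z$ with $\Rightarrow_{\Phi_Z}=\Rightarrow_Z$, and Theorem~\ref{thm:borel-classification} shows that $\Rightarrow_Z$ satisfies (E), (CN), and (F). So every such selector arises from a symbol and has the three properties.

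For the forward direction, let $\Phi$ be a bounded Borel symbol with $\Rightarrow_\Phi$ satisfying (E), (CN), and (F). I will build $Z$ from the endpoint data and the branch function. Set
\[
Z=\{0,1\}\cup\{\sqrt{1-x}: x\in(0,1),\ \varepsilon_\Phi(x)=+\}\cup\{-\sqrt{1-x}: x\in(0,1),\ \varepsilon_\Phi(x)=-\}.
\]
The map $x\mapsto\sqrt{1-x}$ is a homeomorphism of $(0,1)$ onto $(0,1)$. Theorem~\ref{thm:branch-reduction} says $\varepsilon_\Phi$ is Borel, so $Z$ is Borel. By construction, $0,1\in Z$, $-1\notin Z$, and exactly one of $\pm t$ lies in $Z$ for each $0<t<1$. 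Hence $Z$ is a Borel transversal.

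It remains to show $P\Rightarrow_\Phi Q=E^{Q-P}(Z)$ for every pair. I split along the Halmos decomposition. Both sides are computed sector by sector: the left by Definition~\ref{def:uniform-symbol-construction}, the right because $Q-P$ reduces the decomposition.

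On the commuting sectors, Lemma~\ref{lem:endpoint-orientation} gives $c_{10}<0$ and $c_{01},c_{11},c_{00}\ge0$. So the truth projection contains exactly $\mathcal H_{01},\mathcal H_{11},\mathcal H_{00}$. These are the sectors where $Q-P$ takes the values $1$ or $0$, which lie in $Z$. The excluded sector $\mathcal H_{10}$ has value $-1\notin Z$. The two sides therefore agree there.

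On the generic summand the argument is fiberwise. Write the space as $L^2$-type matrix fields over the spectral measure of $H$, which sits in $(0,1)$ up to a null set since $\ker H=\ker(I-H)=0$. Then $A_\Phi(P,Q)$ is the decomposable operator $\Phi(H)$. Its nonnegative spectral projection is the decomposable field $x\mapsto E^{\Phi(x)}([0,\infty))=R_x^{\varepsilon_\Phi(x)}$; this uses the Borel measurability of $C\mapsto E^C([0,\infty))$ shown in the proof of Theorem~\ref{thm:branch-reduction}. The same fiberwise functional calculus applied to $Q-P$, whose fiber is $b_x$, gives $E^{Q-P}(Z)$ as the field $x\mapsto E^{b_x}(Z)$. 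By the choice of $Z$, this equals $R_x^{\varepsilon_\Phi(x)}$.

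The main obstacle is justifying this last step rigorously: that spectral projections of a decomposable self-adjoint operator are computed fiberwise with multiplicity. I plan to avoid direct integrals. Instead I write $E^{\Phi(H)}([0,\infty))$ as the strong limit of $g_n(\Phi(H))$, with $g_n$ the continuous approximants from the proof of Theorem~\ref{thm:branch-reduction}. Continuous functional calculus of a matrix field in the $2\times2$ matrices over the abelian algebra $W^*(H)$ acts entrywise-compatibly, by polynomial approximation. Dominated (bounded, pointwise) convergence in the Borel calculus of $H$ then gives the fiberwise limit. The same argument handles $E^{Q-P}(Z)$ through $f_Z(Q-P)$, as in Proposition~\ref{prop:borel-realization}.
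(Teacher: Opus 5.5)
Your proposal is correct and follows the paper's proof essentially step for step. It uses the same transversal $Z$ built from the branch function $\varepsilon_\Phi$ with $0,1$ added and $-1$ omitted, handles the commuting sectors by Lemma~\ref{lem:endpoint-orientation}, compares fiberwise on the generic summand, and obtains the converse from Proposition~\ref{prop:borel-realization} and Theorem~\ref{thm:borel-classification}. Your strong-limit argument spells out what the paper calls the direct integral form, with one caveat: $f_Z$ is only Borel, so a single approximating sequence does not suffice, and you need the standard closure under bounded pointwise limits (or you can quote $A_{\Phi_Z}(P,Q)=f_Z(Q-P)$ from Proposition~\ref{prop:borel-realization} and apply your $g_n$ argument to the symbol $\Phi_Z$).
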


\begin{proof}
Suppose first that $\Phi$ is a bounded Borel symbol and $\Rightarrow_\Phi$ satisfies (E), (CN), and (F). Define $Z$ on $(-1,1)\setminus\{0\}$ by requiring $t\in Z$ when $\varepsilon_\Phi(1-t^2)=+$ and $-t\in Z$ when $\varepsilon_\Phi(1-t^2)=-$ for each $t\in(0,1)$. The branch function is Borel by Theorem~\ref{thm:branch-reduction}, so the resulting subset of $(-1,1)\setminus\{0\}$ is Borel. Add $0$ and $1$, and omit $-1$. The endpoint truth values follow from Lemma~\ref{lem:endpoint-orientation}. On the generic summand, the direct integral form yields $E^{A_\Phi(P,Q)}([0,\infty))=E^{Q-P}(Z)$. Combining this equality with the endpoint truth values, we obtain $P\Rightarrow_\Phi Q=P\Rightarrow_ZQ$ on every summand. The set $Z$ is a Borel transversal by construction.

Conversely, let $Z$ be a Borel transversal. Proposition~\ref{prop:borel-realization} provides a bounded Borel symbol $\Phi_Z$ whose operation equals $\Rightarrow_Z$. Theorem~\ref{thm:borel-classification} shows that this operation satisfies (E), (CN), and (F).
\end{proof}

\subsection{Continuous Rigidity}
\label{subsec:continuous}

Continuity removes the remaining freedom in the branch choice.

\begin{theorem}[Continuous uniqueness]
\label{thm:continuous-uniqueness}
Let $\Phi:[0,1]\to M_2(\mathbb C)_{\mathrm{sa}}$ be a continuous symbol for two projections. If $\Rightarrow_\Phi$ satisfies (E), (CN), and (F), then $P\Rightarrow_\Phi Q=P\Rightarrow_{\mathrm{sp}}Q$ for all projections $P,Q$.
\end{theorem}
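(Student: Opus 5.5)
By Theorem~\ref{thm:symbol-classification}, $\Rightarrow_\Phi$ coincides with a Borel spectral selector $\Rightarrow_Z$ for a Borel transversal $Z$. That transversal is built from the branch function $\varepsilon_\Phi$ of Theorem~\ref{thm:branch-reduction} together with the endpoint values $0,1\in Z$ and $-1\notin Z$. The spectral sign implication is the selector for $Z_{\mathrm{sp}}=[0,1]$, that is, the constant branch $\varepsilon\equiv+$. So it suffices to prove that continuity of $\Phi$ forces $\varepsilon_\Phi(x)=+$ for every $x\in(0,1)$. Once this holds, $Z=Z_{\mathrm{sp}}$ and the two operations agree on every summand.

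\textbf{Step 1: no branch switching.} By Proposition~\ref{prop:generic-observable-form}, $\Phi(x)=\alpha(x)I+\beta(x)b_x$ on $(0,1)$. The eigenvalues of $\Phi(x)$ are $\alpha(x)\pm t\beta(x)$ with $t=\sqrt{1-x}>0$, and the truth projection has rank one. Hence exactly one of these eigenvalues is nonnegative and the other is negative. Both $\alpha(x)=\tfrac12\operatorname{Tr}\Phi(x)$ and $\beta(x)=\operatorname{Tr}(b_x\Phi(x))/(2(1-x))$ are continuous on $(0,1)$. Rank one forces $\beta(x)\ne0$, because otherwise $\Phi(x)$ would be scalar and its truth projection would be $0$ or $I$. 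In fact the condition is $|\alpha(x)|<|t\beta(x)|$ or $\alpha(x)=|t\beta(x)|$, and in either case $\beta(x)\neq0$. The positive branch is selected exactly when $\beta(x)>0$. Since $\beta$ is continuous and nonvanishing on the connected interval $(0,1)$, its sign is constant, so $\varepsilon_\Phi$ is constant.

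\textbf{Step 2: fixing the sign at the endpoint $x=0$.} This is where the boundary orientation from Lemma~\ref{lem:endpoint-orientation} enters. There $\Phi(0)=\operatorname{diag}(c_{10},c_{01})$ with $c_{10}<0\le c_{01}$. Suppose the negative branch held throughout. Then for every $x\in(0,1)$ the eigenvector of $\Phi(x)$ with the larger eigenvalue would be the one in $R_x^-$. As $x\to0^+$, $R_x^-\to\frac12(I-b_0)=\operatorname{diag}(1,0)$ by continuity of $b_x/t$ on $[0,1)$. By continuity of $\Phi$, the larger eigenvalue of $\Phi(0)$ would then be attained on the first slot, giving $c_{10}\ge c_{01}$. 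This contradicts $c_{10}<0\le c_{01}$.

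To make the limit argument rigorous, I would write $\Phi(x)=\alpha(x)I+\beta(x)b_x$ and let $x\to0^+$, using $b_x\to b_0=\operatorname{diag}(-1,1)$. The off-diagonal entries of $\Phi(x)$ are $\beta(x)\sqrt{x(1-x)}$, and the diagonal difference is $\Phi(x)_{22}-\Phi(x)_{11}=\beta(x)(2-2x)$. Continuity then gives $\beta(x)\to(c_{01}-c_{10})/2>0$. So $\beta>0$ near $0$, and by Step~1, $\beta>0$ on all of $(0,1)$. This shortcut avoids the eigenvector limit entirely and is the step I would write out carefully.

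\textbf{Conclusion and main obstacle.} With $\varepsilon_\Phi\equiv+$, the transversal produced in the proof of Theorem~\ref{thm:symbol-classification} is $[0,1]$, so $P\Rightarrow_\Phi Q=E^{Q-P}([0,1])=E^{Q-P}([0,\infty))$. I expect the main obstacle to be the endpoint argument. Step~1 alone allows a constant negative branch, which is itself a valid Borel transversal. Only the behavior at $x=0$, where $c_{10}<0\le c_{01}$ is inherited by continuity, rules that branch out. Note that the endpoint $x=1$ is useless here, since $b_1=0$ degenerates and gives no orientation.
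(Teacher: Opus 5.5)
Your proof is correct and essentially follows the paper's argument: the paper tracks the branch eigenvalues $\lambda_\pm(x)=\operatorname{Tr}(\Phi(x)R_x^\pm)=\alpha(x)\pm t\beta(x)$, shows $\{x:\lambda_+(x)\ge0\}$ is clopen in $(0,1)$, and fixes the branch near $x=0$ from $\lambda_-(0)=c_{10}<0$, which is equivalent to your tracking the sign of $\beta$ and its limit $(c_{01}-c_{10})/2>0$. One small slip: exactly one nonnegative eigenvalue means $-|t\beta(x)|\le\alpha(x)<|t\beta(x)|$, so your parenthetical case should read $\alpha(x)=-|t\beta(x)|$, not $\alpha(x)=|t\beta(x)|$; this does not affect $\beta(x)\ne0$, which you had already obtained correctly from the scalar-matrix argument, or the rest of the proof.
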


\begin{proof}
Let $x\in(0,1)$. By Theorem~\ref{thm:branch-reduction} and Proposition~\ref{prop:generic-observable-form}, $\Phi(x)b_x=b_x\Phi(x)$. Write $\Phi(x)=\lambda_+(x)R_x^++\lambda_-(x)R_x^-$. The functions $\lambda_\pm(x)=\operatorname{Tr}(\Phi(x)R_x^\pm)$ are continuous on $(0,1)$.

Conditions (E) and (F) say that exactly one of $\lambda_+(x)$ and $\lambda_-(x)$ is nonnegative at each $x$. Put $A_+=\{x\in(0,1):\lambda_+(x)\ge0\}$. By continuity, $A_+$ is closed. Its complement equals $\{x\in(0,1):\lambda_-(x)\ge0\}$ and is also closed. Thus $A_+$ is open and closed in the connected interval $(0,1)$.

When $0<x<1$, $R_x^\pm=\frac12(I\pm b_x/\sqrt{1-x})$. Hence $R_x^+$ and $R_x^-$ extend continuously to $x=0$. The positive eigenspace of $b_0$ corresponds to the second diagonal slot, and the negative eigenspace corresponds to the first. Therefore the formulas for $\lambda_+$ and $\lambda_-$ extend continuously to $x=0$. By Lemma~\ref{lem:endpoint-orientation}, $\lambda_+(0)\ge0$ and $\lambda_-(0)<0$. The strict inequality and continuity give $\lambda_-(x)<0$ for all sufficiently small positive $x$. Since exactly one eigenvalue is nonnegative, such $x$ belong to $A_+$. Hence $A_+$ is nonempty and must equal $(0,1)$.

The generic truth projection is therefore the positive spectral projection of $Q-P$ on every generic fiber. On the two zero difference sectors, (E) makes the truth projection equal to the identity. The Boolean endpoint sectors have the same orientation by Lemma~\ref{lem:endpoint-orientation}. Hence the global truth projection is $E^{Q-P}([0,\infty))$.
\end{proof}

The following example shows that a continuous branch switch is incompatible with the sign conditions required by (E) and (F).

\begin{example}[Continuous switching]
\label{ex:continuous-switching}
Consider $\psi(t)=t(t^2-1/4)(t^2-3/4)$. When $t>0$, its sign is positive on $(0,1/2)$, negative on $(1/2,\sqrt3/2)$, and positive on $(\sqrt3/2,1]$. The corresponding branch choice therefore changes from the positive branch to the negative branch and then back to the positive branch, while $\psi(1)>0$. At $t=1/2$ and $t=\sqrt3/2$, both $\psi(t)$ and $\psi(-t)$ vanish. Since $t=\sin\theta$ for the canonical line pair, these values correspond to angles $\pi/6$ and $\pi/3$, respectively. At either value of $t$, a generic qubit pair satisfies $E^{\psi(Q-P)}([0,\infty))=I$, so (E) fails.

A Borel selector can realize the same branch pattern by jumps. Let $T=(0,1/2)\cup[\sqrt3/2,1)$ and form $Z_T$ as in Corollary~\ref{cor:borel-cardinality}. Exactly one of $t$ and $-t$ belongs to $Z_T$ for every $0<t<1$. The resulting Borel operation satisfies (E) and (F) by Theorem~\ref{thm:borel-classification}.
\end{example}

By contrast, a Borel branch change requires no intermediate point at which (E) or (F) fails.

\subsection{Observable Reduction}
\label{subsec:observable}

Theorem~\ref{thm:continuous-uniqueness} characterizes the truth projection. Within the universal $C^*$-algebra, the same logical hypotheses imply a stronger statement about the self-adjoint element to which the spectral threshold is applied.

Recall from Subsection~\ref{subsec:symbols} that the Raeburn--Sinclair model represents $C^*(p,q)$ by continuous $M_2(\mathbb C)$-valued functions with diagonal endpoint values \cite{raeburn-sinclair}. We use this representation below.

\begin{theorem}[Continuous observable reduction]
\label{thm:observable-reduction}
Let $a=a^*\in C^*(p,q)$ and define $P\Rightarrow_aQ=E^{\pi_{P,Q}(a)}([0,\infty))$ for each pair of projections, where $\pi_{P,Q}$ is the universal representation with $\pi_{P,Q}(p)=P$ and $\pi_{P,Q}(q)=Q$. If $\Rightarrow_a$ satisfies (E), (CN), and (F), then there is a continuous real function $\psi:[-1,1]\to\mathbb R$ such that $a=\psi(q-p)$.
\end{theorem}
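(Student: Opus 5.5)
The plan is to transfer the problem to the Raeburn--Sinclair picture and read off $\psi$ fiberwise. Let $\Phi:[0,1]\to M_2(\mathbb C)_{\mathrm{sa}}$ be the continuous function representing $a$, so that $\Phi(0)$ and $\Phi(1)$ are diagonal. By the discussion after Definition~\ref{def:symbol-implication}, $\Phi$ is a continuous symbol for two projections with $A_\Phi(P,Q)=\pi_{P,Q}(a)$. Hence $\Rightarrow_a=\Rightarrow_\Phi$, and this operation satisfies (E), (CN), and (F).

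First I will handle the generic fibers. Proposition~\ref{prop:generic-observable-form} gives $\Phi(x)=\alpha(x)I+\beta(x)b_x$ for $x\in(0,1)$. Equivalently, $\Phi(x)=\lambda_+(x)R_x^++\lambda_-(x)R_x^-$ with $\lambda_\pm=\alpha\pm\gamma$, where $\gamma(x)=\beta(x)\sqrt{1-x}$. Since $\alpha=\frac12\operatorname{Tr}\Phi$ is continuous on $[0,1]$ and $\lambda_\pm(x)=\operatorname{Tr}(\Phi(x)R_x^\pm)$, the function $\gamma$ is continuous on $(0,1)$. As in the proof of Theorem~\ref{thm:continuous-uniqueness}, $R_x^\pm$ extend continuously to $x=0$. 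There they are the spectral projections of $b_0=\operatorname{diag}(-1,1)$: $R_0^+$ is the second diagonal slot and $R_0^-$ the first. Therefore $\lambda_+(x)\to c_{01}$ and $\lambda_-(x)\to c_{10}$ as $x\to0$.

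The main obstacle is the endpoint $x=1$, corresponding to the spectral value $0$ of $q-p$. Here the remark after Proposition~\ref{prop:generic-observable-form} warns that $\Phi(1)=\operatorname{diag}(c_{11},c_{00})$ might have $c_{11}\ne c_{00}$. I will show that continuity rules this out. Write $v_x=b_x/\sqrt{1-x}=\begin{pmatrix}-t&\sqrt x\\ \sqrt x&t\end{pmatrix}$ with $t=\sqrt{1-x}$, so $v_x\to\sigma=\begin{pmatrix}0&1\\1&0\end{pmatrix}$ as $x\to1$. Then $\gamma(x)v_x=\Phi(x)-\alpha(x)I\to D:=\Phi(1)-\alpha(1)I$. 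Since $v_x$ is unitary, $|\gamma(x)|=\|\Phi(x)-\alpha(x)I\|\to\|D\|$. If $\|D\|>0$, I pass to a sequence along which $\gamma$ converges to $\pm\|D\|$. Along it, $\gamma v_x\to\pm\|D\|\sigma$, which is off-diagonal, whereas $D$ is a nonzero diagonal matrix. This is a contradiction, so $D=0$. Consequently $\Phi(1)=\alpha(1)I$, $c_{11}=c_{00}$, and $\gamma(x)\to0$, so $\lambda_\pm(x)\to\alpha(1)$ as $x\to1$.

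Finally, I define $\psi$ on $[-1,1]$ by $\psi(t)=\lambda_+(1-t^2)$ and $\psi(-t)=\lambda_-(1-t^2)$ for $0<t<1$, together with $\psi(0)=\alpha(1)$, $\psi(1)=c_{01}$, and $\psi(-1)=c_{10}$. The limits computed above make $\psi$ continuous at $0$ and $\pm1$, and it is continuous elsewhere because $\lambda_\pm$ are. The symbol of $\psi(q-p)$ is $x\mapsto\psi(b_x)$. For $0<x<1$ this is $\psi(t)R_x^++\psi(-t)R_x^-=\Phi(x)$. At $x=0$ it is $\operatorname{diag}(\psi(-1),\psi(1))=\Phi(0)$, and at $x=1$ it is $\psi(0)I=\Phi(1)$. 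Injectivity of the Raeburn--Sinclair representation then gives $a=\psi(q-p)$.

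Note that the logical hypotheses enter only through Proposition~\ref{prop:generic-observable-form}. Continuity alone then handles the endpoint $x=1$ and gluing the two branches into one continuous function of $q-p$.
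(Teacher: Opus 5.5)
Your proof is correct and follows essentially the same route as the paper: you pass to the Raeburn--Sinclair model, use Proposition~\ref{prop:generic-observable-form} to write $\Phi(x)=\lambda_+(x)R_x^++\lambda_-(x)R_x^-$ on generic fibers, and extend $\psi$ to $\pm1$ and $0$ using continuity together with the diagonality of $\Phi(0)$ and $\Phi(1)$. The only variation is at $x=1$: you force $D=0$ via $|\gamma(x)|\to\|D\|$ and a subsequence, whereas the paper takes the limits $c_\pm$ of $\operatorname{Tr}(a_xR_x^\pm)$ directly and reads off $c_+=c_-$ from the off-diagonal entries of $a_1=c_+R^++c_-R^-$.
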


\begin{proof}
Write $a_x$ for the value of $a$ at $x$ in the Raeburn--Sinclair model. The matrix field $x\mapsto a_x$ is a continuous symbol for two projections. By the correspondence described in Subsection~\ref{subsec:symbols}, its symbol implication is $\Rightarrow_a$, which satisfies (E), (CN), and (F) by hypothesis. Proposition~\ref{prop:generic-observable-form} therefore gives $a_xb_x=b_xa_x$ on every generic fiber.

Put $t=\sqrt{1-x}$ for $0<x<1$. Let $R_x^\pm$ be the spectral projections of $b_x$ for $\pm t$. Define $\psi(t)=\operatorname{Tr}(a_xR_x^+)$ and $\psi(-t)=\operatorname{Tr}(a_xR_x^-)$. Since $a_x$ commutes with $b_x$, we can write $a_x=\psi(t)R_x^++\psi(-t)R_x^-$. These formulas show continuity on $(0,1)$ and $(-1,0)$. At $x=0$, the projections $R_x^\pm$ converge to the two diagonal coordinate projections. Since $a_0$ is diagonal, the same formulas extend $\psi$ continuously to $1$ and $-1$.

It remains to treat $t=0$, which corresponds to $x=1$. Calculating directly, we obtain $b_x/t\to\begin{pmatrix}0&1\\1&0\end{pmatrix}$. Thus $R_x^\pm$ converge to $R^\pm=\frac12\begin{pmatrix}1&\pm1\\\pm1&1\end{pmatrix}$. Since $a_x$ and $R_x^\pm$ converge, the limits $c_+=\lim_{t\downarrow0}\psi(t)$ and $c_-=\lim_{t\downarrow0}\psi(-t)$ exist. The endpoint value is $a_1=c_+R^++c_-R^-$. The universal model requires $a_1$ to be diagonal. Its two off-diagonal entries are both $(c_+-c_-)/2$, so $c_+=c_-$. Define $\psi(0)$ to be this common value. Then $\psi$ is continuous on $[-1,1]$ and $a_x=\psi(b_x)$ on every canonical fiber, including both endpoints. Hence $a=\psi(q-p)$.
\end{proof}

Conditions (E) and (F) permit continuous self-adjoint observables for two projections with components outside the commutant of $q-p$. Subsection~\ref{subsec:independence} presents an explicit family. Theorem~\ref{thm:observable-reduction} shows that (CN) excludes these components.

\subsection{Axiom Independence}
\label{subsec:independence}

The three conditions in the characterization have distinct roles. We now give witnesses showing that none follows from the other two within the class of continuous constructions arising from the universal algebra.

\begin{proposition}[Independence witnesses]
\label{prop:independence}
Conditions (E), (CN), and (F) are mutually independent for continuous constructions arising from the universal algebra generated by two projections.
\end{proposition}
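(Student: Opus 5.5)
The plan is to exhibit three continuous symbols, equivalently three self-adjoint elements of $C^*(p,q)$. Each should satisfy two of the conditions (E), (CN), (F) and violate the third. For the two witnesses that keep (CN), I would take $a=\psi(q-p)$ with $\psi$ continuous on $[-1,1]$. Then $P\Rightarrow_aQ=E^{Q-P}(Z_\psi)$ with $Z_\psi=\{t:\psi(t)\ge0\}$, so (CN) is automatic, and (E) and (F) can be read off from Propositions~\ref{prop:selector-e} and \ref{prop:selector-f}. Those criteria apply to any Borel $Z\subseteq[-1,1]$, not only to transversals. Only the witness violating (CN) requires a fiberwise argument.

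To fail (E) while keeping (CN) and (F), take $\psi(t)=t+\tfrac12$, so $Z_\psi=[-\tfrac12,1]$. Then $0,1\in Z_\psi$, $-1\notin Z_\psi$, and $Z_\psi$ contains at least one of $\pm t$ for every $0<t<1$. Proposition~\ref{prop:selector-f} therefore gives (F). For $0<t\le\tfrac12$ both $\pm t$ lie in $Z_\psi$, so Proposition~\ref{prop:selector-e} shows that (E) fails. Concretely, a generic qubit pair at angle $\pi/12$ has truth projection $I$. To fail (F) while keeping (CN) and (E), take $\psi(t)=t-4t^2(1-t^2)$. One checks $\psi(0)=0$, $\psi(1)=1$, and $\psi(-1)=-1$. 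For $t>0$ one has $\psi(-t)<0$, so at most one of $\pm t$ lies in $Z_\psi$, and Proposition~\ref{prop:selector-e} gives (E). At $t=\tfrac12$ we get $\psi(\tfrac12)=-\tfrac14<0$ and $\psi(-\tfrac12)<0$. Hence the generic qubit pair with $\sin\theta=\tfrac12$ has truth projection $0$, and (F) fails.

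To fail (CN) while keeping (E) and (F), take $a=(q-p)+s^3(1-s^2)$ with $s=p+q-1$. In the Raeburn--Sinclair model this is $a_x=b_x+\gamma(x)s_x$ with $\gamma(x)=x(1-x)$, because $s_x^2=xI$. The endpoint values are $a_0=b_0=\operatorname{diag}(-1,1)$ and $a_1=b_1=0$. This matches the orientation of Lemma~\ref{lem:endpoint-orientation}, so the four commuting sectors behave exactly as for $\Rightarrow_{\mathrm{sp}}$. On a generic fiber, $b_xs_x=-s_xb_x$ gives $a_x^2=(1-x+\gamma^2x)I$. Thus $a_x$ has eigenvalues $\pm r$ with $r>0$, and $E^{a_x}([0,\infty))$ has rank one. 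I would then transfer this to arbitrary pairs as in the proof of Proposition~\ref{prop:selector-e}. On the generic summand, $A^2$ is an invertible central scalar field, so $E^{A}([0,\infty))$ is fiberwise a rank-one projection field. It therefore vanishes, and equals $I$, only when the generic summand is zero. Combining this with the endpoint sectors yields (E) and (F). For the failure of (CN), use Lemmas~\ref{lem:complement-swap} and \ref{lem:symbol-covariance}. Since $v_x$ commutes with $b_x$ and anticommutes with $s_x$, we have $v_xa_xv_x=b_x-\gamma s_x$. The contrapositive pair therefore has truth projection equal to the positive spectral projection of $b_x-\gamma s_x$. For $0<x<1$ the off-diagonal term $\gamma s_x\neq0$ does not commute with $b_x$, so this projection differs from that of $b_x+\gamma s_x$. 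In the language of the proof of Theorem~\ref{thm:branch-reduction}, $R_x$ does not commute with $v_x$.

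I expect the main obstacle to be the (CN) witness. Its endpoint values must be chosen so that the perturbation vanishes at $x=1$; otherwise $s_1=\operatorname{diag}(1,-1)$ would make $c_{00}<0$ and break (E). The fiberwise rank-one argument must also be carried rigorously through the direct-integral form on a general generic summand.
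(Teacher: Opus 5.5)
Your proposal is correct, and its overall architecture matches the paper's. Two witnesses have the form $\psi(q-p)$ and are checked against Propositions~\ref{prop:selector-e} and \ref{prop:selector-f}. The third adds to $b$ a term that anticommutes with $b$ and vanishes at both endpoints. Your functions $t+\tfrac12$ and $t-4t^2(1-t^2)$ are simpler than the paper's $t+1-t^2$ and $t-2|t|\sin(\pi(1-t^2))$, and they work for the same reasons.

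The real difference is the (CN) witness.
\begin{itemize}
\item The paper uses $b+\lambda i bs$, that is, $q-p+i\lambda(qp-pq)$. Since $b_xs_x=0$ at $x=0,1$, no endpoint cutoff is needed.
\item The reflection $U=\operatorname{sgn}(S)$ from Lemma~\ref{lem:spectral-symmetry} reverses both $B$ and $iBS$. Hence $UA_\lambda U=-A_\lambda$, and with $\ker A_\lambda=0$ this gives $URU=I-R$ on the generic summand. No measurable decomposition is needed at all.
\item Your perturbation $s^3(1-s^2)$ commutes with $U$, so that reflection does not reverse your $A$. You therefore need the cutoff at $x=1$, which you correctly identified.
\item You also need a passage from fiberwise rank one to the global statement through the direct-integral form. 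That step is legitimate, and it is the same tool the paper invokes in Theorem~\ref{thm:symbol-classification}.
\item One wording point: $A^2$ is only fiberwise invertible, not boundedly invertible when the spectrum of $H$ accumulates at $1$. Only the fiberwise rank-one property is actually used, so this does no harm.
\end{itemize}

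If you want the paper's cleaner globalization for your own witness, replace $U$ by the polar part $W$ of the self-adjoint operator $iBS$ on the generic summand. This operator is injective there, since $(iBS)^2=B^2S^2$ is injective. $W$ anticommutes with both $B$ and $S$, so $WAW=-A$. You also have $\ker A=0$, because $A^2=B^2+S^6(I-S^2)^2\ge B^2$. The paper's argument then applies verbatim.

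Your argument for the failure of (CN) is the same as the paper's in substance: the complement swap fixes $b$ and flips $s$. The quickest justification that the two truth projections differ is this. The matrices $b_x\pm\gamma s_x$ are traceless self-adjoint with the same square. A common positive spectral projection would therefore force them to be equal, which contradicts $\gamma(x)s_x\neq0$.
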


\begin{proof}
To omit (E), define $\psi_E(t)=t+1-t^2$ and use $a_E=\psi_E(q-p)$. Put $Z_E=\{t\in[-1,1]:\psi_E(t)\ge0\}$. By functional calculus, $E^{\psi_E(Q-P)}([0,\infty))=E^{Q-P}(Z_E)$. At the endpoints, $\psi_E(0)=1$, $\psi_E(1)=1$, and $\psi_E(-1)=-1$. When $0<t<1$, $\psi_E(t)>0$, so at least one of $t$ and $-t$ belongs to $Z_E$. Proposition~\ref{prop:selector-f} establishes (F). Condition (CN) holds because the comparison operator $Q-P$ is unchanged by contraposition. For all sufficiently small positive $t$, $\psi_E(-t)=1-t-t^2\ge0$. Both $t$ and $-t$ then belong to $Z_E$, so Proposition~\ref{prop:selector-e} shows that (E) fails.

To omit (CN), put $b=q-p$, $s=p+q-I$, and $a_\lambda=b+\lambda ibs$ for a fixed real $\lambda\ne0$. Since $bs=-sb$, the element $ibs$ is self-adjoint, so $a_\lambda$ is self-adjoint. For a pair of projections $P,Q$, write $B=\pi_{P,Q}(b)=Q-P$, $S=\pi_{P,Q}(s)=P+Q-I$, and $A_\lambda=\pi_{P,Q}(a_\lambda)=B+\lambda iBS$. On a generic summand, let $U=\operatorname{sgn}(S)$. By Lemma~\ref{lem:spectral-symmetry}, $UBU=-B$. Since $U$ is a function of $S$, $USU=S$ as well. Hence $UA_\lambda U=-A_\lambda$. Using the anticommutation of $B$ and $S$, we obtain $A_\lambda^2=B^2(I+\lambda^2S^2)$. On the generic summand, $\ker B=0$, while $I+\lambda^2S^2$ is invertible. Thus $\ker A_\lambda=0$.

Let $R=E^{A_\lambda}([0,\infty))$ on a nonzero generic summand. Unitary covariance and $UA_\lambda U=-A_\lambda$ give $URU=E^{A_\lambda}(({-\infty},0])$. Since $\ker A_\lambda=0$, the latter projection is $I-R$. Therefore $R$ is neither $0$ nor $I$. On the four commuting sectors, $A_\lambda$ is $0$ on $\mathcal H_{11}$ and $\mathcal H_{00}$, $-I$ on $\mathcal H_{10}$, and $I$ on $\mathcal H_{01}$. If the global truth projection is $I$, the generic summand must vanish and $\mathcal H_{10}=0$, which is equivalent to $P\le Q$. Conversely, $P\le Q$ leaves only sectors on which the truth value is $I$. Thus (E) holds. If the global truth projection is $0$, the generic summand and the sectors $\mathcal H_{11},\mathcal H_{01},\mathcal H_{00}$ must vanish. Hence $P=I$ and $Q=0$. The converse is immediate, so (F) holds.

Under the complement swap $(P,Q)\mapsto(Q^\perp,P^\perp)$, the operator $B$ is fixed and $S$ changes sign. Thus the represented operator $A_\lambda$ changes to $A_{-\lambda}=B-\lambda iBS$. On a generic fiber, $B^2=(1-x)I$ and $S^2=xI$ with $0<x<1$, so $BS\ne0$ and $A_\lambda\ne A_{-\lambda}$. Both matrices are nonzero traceless self-adjoint matrices and have the same square. If $A$ is a nonzero traceless self-adjoint $2\times2$ matrix and $R_A$ is its positive spectral projection, then $A=\|A\|(2R_A-I)$. Hence two such matrices with the same norm and the same positive spectral projection must be equal. The positive spectral projections of $A_\lambda$ and $A_{-\lambda}$ are therefore different, so (CN) fails.

To omit (F), define $\psi_F(t)=t-2|t|\sin(\pi(1-t^2))$ and use $a_F=\psi_F(q-p)$. Put $Z_F=\{t\in[-1,1]:\psi_F(t)\ge0\}$. By functional calculus, $E^{\psi_F(Q-P)}([0,\infty))=E^{Q-P}(Z_F)$ for every pair of projections. At the endpoints, $\psi_F(0)=0$, $\psi_F(1)=1$, and $\psi_F(-1)=-1$. When $0<t<1$, $\psi_F(-t)=-t(1+2\sin(\pi(1-t^2)))<0$. Hence at most one of $t$ and $-t$ belongs to $Z_F$, so Proposition~\ref{prop:selector-e} establishes (E). Condition (CN) holds because the comparison operator $Q-P$ is unchanged by contraposition. Whenever $\sin(\pi(1-t^2))>1/2$, $\psi_F(t)<0$ as well. Then neither $t$ nor $-t$ belongs to $Z_F$, and Proposition~\ref{prop:selector-f} shows that (F) fails.
\end{proof}

The witness for (CN) also clarifies its role in the characterization. On a generic fiber, the $iBS$ term in $A_\lambda$ makes the positive spectral projection differ from both spectral branch projections of $Q-P$. Theorem~\ref{thm:observable-reduction} shows that contraposition excludes this additional component once (E) and (F) are retained.

\section{Variational Characterization}
\label{sec:variational}

The discussion now shifts from logical conditions and regularity to an independent extremal characterization and its Helstrom interpretation. Let $\mathcal H$ be finite dimensional throughout this section.

\subsection{Projection Optimization}
\label{subsec:projection-optimization}

The starting point is the optimization problem for a general self-adjoint operator.

\begin{lemma}[Projection maximizers]
\label{lem:projection-maximizers}
Let $A=A^*$, and let $A_+$ and $A_-$ denote its positive and negative parts. Let $R$ range over the projections on $\mathcal H$. The maximum of $\operatorname{Tr}(RA)$ is $\operatorname{Tr}(A_+)$. The maximizing projections are exactly those satisfying $E^A((0,\infty))\le R\le E^A([0,\infty))$.
\end{lemma}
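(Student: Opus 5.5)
We would work in an orthonormal eigenbasis of $A$, which exists because $\mathcal H$ is finite dimensional. Write the spectral decomposition $A=\sum_\lambda \lambda E_\lambda$, where $E_\lambda=E^A(\{\lambda\})$. Put $E_+=E^A((0,\infty))$, $E_0=E^A(\{0\})$ and $E_-=E^A((-\infty,0))$, so that $A_+=AE_+$ and $A_-=-AE_-$. For an arbitrary projection $R$ we expand
\[
\operatorname{Tr}(RA)=\sum_\lambda \lambda\operatorname{Tr}(RE_\lambda)=\sum_\lambda\lambda\operatorname{Tr}(E_\lambda RE_\lambda).
\]
Each weight $w_\lambda=\operatorname{Tr}(E_\lambda RE_\lambda)$ satisfies $0\le w_\lambda\le\operatorname{Tr}E_\lambda$, because $0\le E_\lambda RE_\lambda\le E_\lambda$. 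Bounding the weights from above for $\lambda>0$ and from below for $\lambda<0$ gives $\operatorname{Tr}(RA)\le\sum_{\lambda>0}\lambda\operatorname{Tr}E_\lambda=\operatorname{Tr}(A_+)$. The choice $R=E_+$ attains this bound, so the maximum is $\operatorname{Tr}(A_+)$.

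For the characterization of maximizers, equality in the bound forces two conditions. First, $\operatorname{Tr}(E_\lambda(I-R)E_\lambda)=0$ for every $\lambda>0$. Second, $\operatorname{Tr}(E_\lambda RE_\lambda)=0$ for every $\lambda<0$. The operator $E_\lambda(I-R)E_\lambda$ is positive and its trace vanishes, so it is zero. Hence $(I-R)E_\lambda=0$, using $\|(I-R)E_\lambda x\|^2=\langle E_\lambda(I-R)E_\lambda x,x\rangle$. Summing over $\lambda>0$ gives $(I-R)E_+=0$, that is, $E_+\le R$. In the same way $RE_-=0$, which means $R\le I-E_-=E^A([0,\infty))$.

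Conversely, suppose $E_+\le R\le E_+ + E_0$. Then $R=E_+ + R'$ with $R'\le E_0$. Since $AE_0=0$, we get $\operatorname{Tr}(RA)=\operatorname{Tr}(E_+A)=\operatorname{Tr}(A_+)$, so $R$ is a maximizer. Together with the previous paragraph, the maximizers are exactly the projections in the stated interval.

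The only step requiring care is the passage from the equality case to the order relations. It relies on the elementary fact that a positive operator with zero trace vanishes, together with the identity $\|(I-R)E_\lambda x\|^2=\langle E_\lambda(I-R)E_\lambda x,x\rangle$. We use these for each $\lambda$ separately and then sum the spectral projections. The remaining steps are routine.
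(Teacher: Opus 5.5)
Your proof is correct and follows essentially the same route as the paper. Both bound $\operatorname{Tr}(RA)$ by $\operatorname{Tr}(A_+)$ and characterize equality through the fact that a positive operator with zero trace vanishes; the paper applies this to the two blocks via $A_\pm^{1/2}(\cdot)A_\pm^{1/2}$, whereas you split further into the eigenprojections $E_\lambda$ and verify the converse direction explicitly.
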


\begin{proof}
Write $A=A_+-A_-$ with $A_+,A_-\ge0$ and orthogonal supports. Then $\operatorname{Tr}(RA)=\operatorname{Tr}(RA_+)-\operatorname{Tr}(RA_-)\le\operatorname{Tr}(A_+)$. Equality holds exactly when $\operatorname{Tr}((I-R)A_+)=0$ and $\operatorname{Tr}(RA_-)=0$. Since the operators $A_+^{1/2}(I-R)A_+^{1/2}$ and $A_-^{1/2}RA_-^{1/2}$ are positive, trace zero is equivalent to their vanishing. Thus $R$ acts as the identity on the positive spectral subspace and as zero on the negative spectral subspace. Its action on the kernel is arbitrary. This is equivalent to the stated interval of projections.
\end{proof}

We now apply the lemma to the projection difference.

\begin{theorem}[Variational implication]
\label{thm:variational-implication}
When $\mathcal H$ is finite dimensional and $P,Q$ are projections, the spectral sign implication is the largest projection among all projections $R$ that maximize $\operatorname{Tr}(R(Q-P))$.
\end{theorem}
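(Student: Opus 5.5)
The plan is to apply Lemma~\ref{lem:projection-maximizers} directly with $A=Q-P$, which is self-adjoint because $P$ and $Q$ are projections. The lemma identifies the set of maximizers of $R\mapsto\operatorname{Tr}(R(Q-P))$ over all projections $R$ as exactly the order interval
\[
E^{Q-P}((0,\infty))\le R\le E^{Q-P}([0,\infty)).
\]
Two things then need to be shown. First, the upper endpoint $E^{Q-P}([0,\infty))$ itself lies in this interval, so it is a maximizer. Second, every maximizer lies below it, so it is the largest maximizer in the projection order. Both facts are read off directly from the interval description. By Definition~\ref{def:spectral-implication}, the upper endpoint is $P\Rightarrow_{\mathrm{sp}}Q$, which gives the statement.

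The one point needing care is that ``largest'' should mean the greatest element in the projection order, not merely a maximal one. This is immediate because the lemma gives an explicit upper bound $E^{Q-P}([0,\infty))$ that dominates every maximizer and is attained. Finite dimensionality is used only to make the trace finite and to justify the lemma's trace-vanishing argument. It is worth remarking, though not required, that the maximum value is $\operatorname{Tr}((Q-P)_+)$. By Lemma~\ref{lem:exceptional-sectors}, the non-uniqueness of maximizers comes exactly from the zero sector $(P\wedge Q)\vee(P^\perp\wedge Q^\perp)$. This explains why the closed threshold, which the Boolean reduction in Proposition~\ref{prop:boolean-reduction} already singled out, is the one that selects the top of the interval.

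I do not expect a genuine obstacle: the whole content is carried by Lemma~\ref{lem:projection-maximizers}. If anything needs checking, it is the lemma's equivalence between ``$R$ is the identity on the positive spectral subspace and zero on the negative one'' and the stated order interval. A projection $R$ satisfies $R\ge E$ exactly when $RE=E$, and $R\le F$ exactly when $R(I-F)=0$. Applying these with $E=E^{Q-P}((0,\infty))$ and $F=E^{Q-P}([0,\infty))$ gives the interval. This is already taken as proved in the lemma.
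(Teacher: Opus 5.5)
Your proposal is correct and follows the paper's proof: apply Lemma~\ref{lem:projection-maximizers} with $A=Q-P$ and observe that the top of the maximizing interval, $E^{Q-P}([0,\infty))$, is both attained and dominates every maximizer, so it is the greatest maximizer and equals $P\Rightarrow_{\mathrm{sp}}Q$. Your side remark that the non-uniqueness sits exactly in the zero sector $(P\wedge Q)\vee(P^\perp\wedge Q^\perp)$ is also consistent with Lemma~\ref{lem:exceptional-sectors}.
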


\begin{proof}
Apply Lemma~\ref{lem:projection-maximizers} to $A=Q-P$. The maximal member of the interval of maximizing projections is $E^{Q-P}([0,\infty))$.
\end{proof}

The dual conjunction selects the opposite extremal optimizer in the corresponding comparison problem.

\begin{corollary}[Variational dual]
\label{cor:variational-dual}
When $\mathcal H$ is finite dimensional and $P,Q$ are projections, the projection $P*_{\mathrm{sp}}Q$ is the smallest projection maximizing $\operatorname{Tr}(R(P-Q^\perp))$.
\end{corollary}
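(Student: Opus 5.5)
The plan is to reduce the statement to Lemma~\ref{lem:projection-maximizers} applied to $A=P-Q^\perp=P+Q-I$, in parallel with the proof of Theorem~\ref{thm:variational-implication}. First, I note that $P-Q^\perp=P+Q-I=S$ is exactly the second comparison operator of Definition~\ref{def:comparison-operators}. By Proposition~\ref{prop:dual-formula}, $P*_{\mathrm{sp}}Q=E^{P+Q-I}((0,\infty))$. So the target projection is the strictly positive spectral projection of the operator appearing in the objective.

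Next, I apply Lemma~\ref{lem:projection-maximizers} with $A=P+Q-I$. In finite dimensions it gives that the maximizers of $\operatorname{Tr}(R(P-Q^\perp))$ over projections $R$ are exactly the projections with
\[
E^{A}((0,\infty))\le R\le E^{A}([0,\infty)).
\]
The smallest member of this interval is its lower endpoint $E^{A}((0,\infty))$. This endpoint is itself a maximizer and is dominated by every other maximizer. By the first step, it equals $P*_{\mathrm{sp}}Q$.

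As an optional consistency check, I can derive the same formula through the definition $P*_{\mathrm{sp}}Q=(P\Rightarrow_{\mathrm{sp}}Q^\perp)^\perp$. Theorem~\ref{thm:variational-implication} identifies $P\Rightarrow_{\mathrm{sp}}Q^\perp$ as the largest maximizer of $\operatorname{Tr}(R(Q^\perp-P))$. Now $\operatorname{Tr}((I-R)(P-Q^\perp))=\operatorname{Tr}(P-Q^\perp)+\operatorname{Tr}(R(Q^\perp-P))$, so $R\mapsto I-R$ maps maximizers of one problem bijectively onto maximizers of the other and reverses order. Largest therefore becomes smallest, which gives the claim.

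There is no real obstacle here. The only care needed is to keep the thresholds straight: the open set $(0,\infty)$ for the dual, and the fact that orthocomplementation turns the closed set $[0,\infty)$ for $Q^\perp-P$ into the open set $(0,\infty)$ for $P-Q^\perp$.
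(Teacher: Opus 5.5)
Your proposal is correct and matches the paper's proof: identify $P*_{\mathrm{sp}}Q=E^{P-Q^\perp}((0,\infty))$ via Proposition~\ref{prop:dual-formula}, then take the lower endpoint of the maximizer interval from Lemma~\ref{lem:projection-maximizers}. Your optional cross-check, which complements the largest maximizer from Theorem~\ref{thm:variational-implication}, is also valid but not needed.
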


\begin{proof}
By Proposition~\ref{prop:dual-formula}, $P*_{\mathrm{sp}}Q=E^{P-Q^\perp}((0,\infty))$. Lemma~\ref{lem:projection-maximizers} shows that this is the smallest maximizing projection.
\end{proof}

\subsection{Helstrom Interpretation}
\label{subsec:helstrom}

The variational result has a standard binary discrimination interpretation \cite{helstrom,holevo}. Let nonzero projections $P,Q$ have ranks $m,n$. Define the normalized subspace states $\rho_P=P/m$ and $\rho_Q=Q/n$, and choose priors $\pi_P=m/(m+n)$ and $\pi_Q=n/(m+n)$.

\begin{corollary}[Helstrom interpretation]
\label{cor:helstrom}
Under the preceding choice of subspace states and priors, $P\Rightarrow_{\mathrm{sp}}Q$ is the largest among the acceptance projections of the projective tests that are optimal in the corresponding Helstrom problem.
\end{corollary}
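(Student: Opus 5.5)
The plan is to reduce the Helstrom problem for projective tests to the trace maximization in Theorem~\ref{thm:variational-implication}. First I fix the convention. A projective test is a pair $\{R,I-R\}$ with $R$ a projection, where outcome $R$ means ``accept $Q$'' and outcome $I-R$ means ``accept $P$''. Thus $R$ is the acceptance projection for the hypothesis $\rho_Q$. The success probability is
\[
p_{\mathrm{succ}}(R)=\pi_P\operatorname{Tr}((I-R)\rho_P)+\pi_Q\operatorname{Tr}(R\rho_Q).
\]

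Next I substitute the specified states and priors. The choice of priors proportional to rank cancels the normalizations: $\pi_P\rho_P=P/(m+n)$ and $\pi_Q\rho_Q=Q/(m+n)$. Hence
\[
p_{\mathrm{succ}}(R)=\frac{1}{m+n}\bigl(\operatorname{Tr}P+\operatorname{Tr}(R(Q-P))\bigr)=\frac{1}{m+n}\bigl(m+\operatorname{Tr}(R(Q-P))\bigr).
\]
This is an increasing affine function of $\operatorname{Tr}(R(Q-P))$ with a constant independent of $R$. So the optimal projective tests are exactly the projections maximizing $\operatorname{Tr}(R(Q-P))$.

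Lemma~\ref{lem:projection-maximizers} identifies these maximizers as the interval $E^{Q-P}((0,\infty))\le R\le E^{Q-P}([0,\infty))$. Theorem~\ref{thm:variational-implication} says its largest member is $P\Rightarrow_{\mathrm{sp}}Q$, which gives the statement.

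There is no real obstacle. The only point needing care is stating the convention clearly. The acceptance projection must be the one associated with $Q$, the consequent. Optimality is taken over projective tests, though by the Helstrom theorem the optimal value over all POVMs is already attained by a projective test, namely $\operatorname{Tr}((\pi_Q\rho_Q-\pi_P\rho_P)_+)$. It is also worth remarking that ``largest'' is meaningful because the optimal set is an order interval, so a greatest element exists.
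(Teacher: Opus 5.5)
Your proposal is correct and follows essentially the same route as the paper: rewrite the success probability as $\pi_P+\operatorname{Tr}(R(Q-P))/(m+n)$, observe that optimizing it is the same as maximizing $\operatorname{Tr}(R(Q-P))$, and take the largest maximizer from Theorem~\ref{thm:variational-implication}. Your remark that optimality over all POVMs is attained by a projective test is a harmless addition; the corollary itself only needs projective tests.
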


\begin{proof}
The weighted difference is $\pi_Q\rho_Q-\pi_P\rho_P=(Q-P)/(m+n)$. If the projective test $R$ accepts $\rho_Q$, its success probability is $\pi_Q\operatorname{Tr}(R\rho_Q)+\pi_P\operatorname{Tr}((I-R)\rho_P)=\pi_P+\operatorname{Tr}(R(\pi_Q\rho_Q-\pi_P\rho_P))$. Maximizing this probability is therefore equivalent to maximizing $\operatorname{Tr}(R(Q-P))$. The largest optimizer is the projection described in Theorem~\ref{thm:variational-implication}.
\end{proof}

Consider a generic pair of line projections $P_0,P_\alpha$. Since both projections have rank one, the priors chosen above are equal. The eigenvalues of $Q-P$ are $\pm\sin\alpha$. The optimal success probability is therefore $(1+\sin\alpha)/2$. By Proposition~\ref{prop:signed-angle}, the acceptance projection is $P_{\pi/4+\alpha/2}$.

\section{Quantum Set Connection}
\label{sec:qst}

The final application concerns the threshold semantics of quantum set theory. We first place the spectral sign implication within Ozawa's framework and define the corresponding threshold truth value. We then specialize this construction to projection observables and compare its full truth criterion with spectral order for general observables.

\subsection{Threshold Semantics}
\label{subsec:qst-threshold}

Ozawa's framework accommodates nonpolynomial implications through local binary operations \cite{ozawa2021}. Such an operation must take its value on $P,Q$ in the sublogic generated by $P$ and $Q$ and be compatible with restriction to projections commuting with both arguments. A local operation is a quantized implication when it agrees with classical Boolean implication on commuting arguments. Proposition~\ref{prop:locality} establishes the two locality conditions for $\Rightarrow_{\mathrm{sp}}$. In particular, its first assertion places the value in $\operatorname{Proj}(W^*(P,Q))$, which is contained in the sublogic generated by $P$ and $Q$. Proposition~\ref{prop:boolean-reduction} establishes agreement with classical implication on commuting projections. Hence $\Rightarrow_{\mathrm{sp}}$ is a quantized implication. Its De Morgan dual is the conjunction defined in Definition~\ref{def:dual-conjunction}. With self-duality understood as in \cite{ozawa2021}, this choice makes the corresponding interpretation self-dual.

\begin{definition}[Threshold truth value]
\label{def:threshold-truth}
Let $X$ and $Y$ be self-adjoint operators affiliated with $\mathcal M$, and write $E^X(r)=E^X(({-\infty},r])$. Their spectral projections belong to $\operatorname{Proj}(\mathcal M)$. Following the threshold formula of \cite{ozawa2026} for the order truth value of quantum reals, define
\[
T_{\mathrm{sp}}(X,Y)=\bigwedge_{r\in\mathbb Q}\bigl(E^Y(r)\Rightarrow_{\mathrm{sp}}E^X(r)\bigr).
\]
\end{definition}

\subsection{Projection Observables}
\label{subsec:qst-projections}

We first specialize the threshold formula to projection observables.

\begin{proposition}[Projection order truth]
\label{prop:qst-projections}
For all projection observables $P,Q$, one has $T_{\mathrm{sp}}(P,Q)=P\Rightarrow_{\mathrm{sp}}Q$.
\end{proposition}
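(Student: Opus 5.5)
The plan is to compute the spectral projections $E^P(r)=E^P((-\infty,r])$ of a projection observable explicitly and then evaluate the meet over $r\in\mathbb Q$ term by term. For a projection $P$ the spectrum lies in $\{0,1\}$, so $E^P(r)=0$ for $r<0$, $E^P(r)=P^\perp$ for $0\le r<1$, and $E^P(r)=I$ for $r\ge1$. The same holds for $Q$. Hence, as $r$ ranges over $\mathbb Q$, the term $E^Q(r)\Rightarrow_{\mathrm{sp}}E^P(r)$ takes only three values, according to whether $r<0$, $0\le r<1$, or $r\ge1$. Each range contains rationals, so all three values actually occur in the meet.

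I will evaluate the three cases using the boundary laws of Proposition~\ref{prop:boundary-laws} together with contraposition. For $r<0$ the term is $0\Rightarrow_{\mathrm{sp}}0=I$. For $r\ge1$ it is $I\Rightarrow_{\mathrm{sp}}I=I$. For $0\le r<1$ it is $Q^\perp\Rightarrow_{\mathrm{sp}}P^\perp$, and Proposition~\ref{prop:contraposition} turns this into $P\Rightarrow_{\mathrm{sp}}Q$. The meet is therefore $I\wedge(P\Rightarrow_{\mathrm{sp}}Q)\wedge I=P\Rightarrow_{\mathrm{sp}}Q$.

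The argument is essentially routine. The only points needing care are the convention $E^X(r)=E^X((-\infty,r])$, which is what makes the middle range give $P^\perp$ and $Q^\perp$ rather than $P$ and $Q$, and the orientation of the arguments in Definition~\ref{def:threshold-truth}, where $E^Y(r)$ is the antecedent. With $X=P$ and $Y=Q$, the middle term is $E^Q(r)\Rightarrow_{\mathrm{sp}}E^P(r)=Q^\perp\Rightarrow_{\mathrm{sp}}P^\perp$. This matches the contrapositive form exactly, so (CN) from Proposition~\ref{prop:contraposition} is the one nontrivial ingredient. I expect no real obstacle beyond checking this orientation.
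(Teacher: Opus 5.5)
Your proposal is correct and follows essentially the same route as the paper: you compute $E^P(r)$ on the three ranges of $r$, note that the outer factors are $I$, and use Proposition~\ref{prop:contraposition} to identify the middle factor $Q^\perp\Rightarrow_{\mathrm{sp}}P^\perp$ with $P\Rightarrow_{\mathrm{sp}}Q$. Your explicit appeal to the boundary laws for the outer factors and your check of the argument orientation in Definition~\ref{def:threshold-truth} are just slightly more detailed versions of the paper's steps.
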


\begin{proof}
If $P$ is a projection, then $E^P(r)=0$ for $r<0$, $E^P(r)=P^\perp$ for $0\le r<1$, and $E^P(r)=I$ for $r\ge1$. All factors with $r<0$ or $r\ge1$ are $I$, while those with $0\le r<1$ all equal $Q^\perp\Rightarrow_{\mathrm{sp}}P^\perp$. Proposition~\ref{prop:contraposition} shows that this common value equals $P\Rightarrow_{\mathrm{sp}}Q$.
\end{proof}

\subsection{Spectral Order}
\label{subsec:qst-spectral-order}

The full truth criterion can also be verified directly. Olson's spectral order is defined by $X\le_{\mathrm{s}}Y$ exactly when $E^Y(r)\le E^X(r)$ for every real $r$ \cite{olson}.

\begin{proposition}[Spectral order criterion]
\label{prop:qst-spectral-order}
For self-adjoint operators $X,Y$ affiliated with $\mathcal M$, one has $T_{\mathrm{sp}}(X,Y)=I$ if and only if $X\le_{\mathrm{s}}Y$.
\end{proposition}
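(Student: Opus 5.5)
The plan is to reduce the statement to entailment (E) for $\Rightarrow_{\mathrm{sp}}$, as established in Theorem~\ref{thm:basic-laws}, applied pointwise in $r$. The only lattice fact needed is that a meet of projections equals $I$ exactly when every factor equals $I$: a meet is below each factor, and the meet of copies of $I$ is $I$.

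\textbf{Step 1.} Unpack the definition:
\[
T_{\mathrm{sp}}(X,Y)=\bigwedge_{r\in\mathbb Q}\bigl(E^Y(r)\Rightarrow_{\mathrm{sp}}E^X(r)\bigr)=I
\]
holds if and only if $E^Y(r)\Rightarrow_{\mathrm{sp}}E^X(r)=I$ for every rational $r$. By (E), each such equality is equivalent to $E^Y(r)\le E^X(r)$. So $T_{\mathrm{sp}}(X,Y)=I$ is equivalent to $E^Y(r)\le E^X(r)$ for all $r\in\mathbb Q$. Unboundedness of $X,Y$ causes no trouble here: only the projections $E^X(r),E^Y(r)\in\operatorname{Proj}(\mathcal M)$ enter, and (E) holds for arbitrary projections.

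\textbf{Step 2.} Pass from rational to all real thresholds, since Olson's order $X\le_{\mathrm{s}}Y$ quantifies over every real $r$. One direction is immediate: the real condition implies the rational one. For the converse, use right continuity of the spectral family. For real $r$, choose rationals $r_n\downarrow r$. Then $E^X(r_n)\downarrow E^X(r)$ strongly, and likewise for $Y$. If $E^Y(r_n)\le E^X(r_n)$ for all $n$, then for every vector $\xi$ we have $\langle E^Y(r_n)\xi,\xi\rangle\le\langle E^X(r_n)\xi,\xi\rangle$. Letting $n\to\infty$ gives $\langle E^Y(r)\xi,\xi\rangle\le\langle E^X(r)\xi,\xi\rangle$, which is $E^Y(r)\le E^X(r)$.

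The main point needing care is Step 2. It is a standard density and right-continuity argument, so I expect no real obstacle. I would also note that the orientation matches Olson's definition: $X\le_{\mathrm{s}}Y$ means $E^Y(r)\le E^X(r)$, which is exactly the consequent and antecedent order appearing in the threshold formula.
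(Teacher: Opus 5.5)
Your proof is correct and follows the paper's argument essentially step for step: you use (E) from Theorem~\ref{thm:basic-laws} to reduce $T_{\mathrm{sp}}(X,Y)=I$ to $E^Y(r)\le E^X(r)$ for all rational $r$, and then extend to all real thresholds by right continuity of the spectral families. The only cosmetic difference is that you pass to the limit through quadratic forms under strong convergence, whereas the paper writes $E^Y(\lambda)=\bigwedge_n E^Y(r_n)$ and compares the meets directly.
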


\begin{proof}
The meet defining $T_{\mathrm{sp}}(X,Y)$ equals $I$ exactly when every factor equals $I$. By condition (E) in Theorem~\ref{thm:basic-laws}, this is equivalent to $E^Y(r)\le E^X(r)$ for every rational $r$.

Suppose these inequalities hold. Let $\lambda\in\mathbb R$ and choose rationals $r_n\downarrow\lambda$. By right continuity of spectral families, $E^Y(\lambda)=\bigwedge_n E^Y(r_n)$ and $E^X(\lambda)=\bigwedge_n E^X(r_n)$. Hence $E^Y(\lambda)\le E^X(\lambda)$. Thus $X\le_{\mathrm{s}}Y$. The converse follows immediately by restricting the defining inequalities of the spectral order to rational values.
\end{proof}

The spectral order criterion agrees with the full truth characterization in Ozawa's treatment of order relations \cite{ozawa2026}. Together, Propositions~\ref{prop:qst-projections} and \ref{prop:qst-spectral-order} place the spectral sign implication within the threshold semantics. The first recovers the implication for projection observables, and the second characterizes full truth for general observables by the standard spectral order.

\section{Conclusion}
\label{sec:conclusion}

The nonnegative spectral sector of $Q-P$ defines an implication directly from operator theory. It agrees with material implication on commuting projections and satisfies the four minimal implicative conditions, contraposition, and the falsity condition. Its qubit value lies outside the ortholattice generated by the input projections, so the construction is nonpolynomial. Hardegree's transitivity condition and consequent monotonicity both fail in general. The failure of consequent monotonicity shows that the operation is not a right residual on $\operatorname{Proj}(M_2(\mathbb C))$.

The characterization separates algebraic and regularity effects. Conditions (E), (CN), and (F) force the truth projection on every generic fiber of a bounded Borel construction for two projections to equal one of the two spectral branch projections of $Q-P$. Under Borel regularity, every Borel transversal is realized. Under continuity, the branch choice is constant, and the endpoint values compatible with classical implication determine the positive branch. Within the universal $C^*$-algebra, the defining observable has the stronger form $\psi(q-p)$.

The variational result provides an independent interpretation. In finite dimensions, the implication is the largest projection maximizing $\operatorname{Tr}(R(Q-P))$. With normalized subspace states and priors proportional to rank, the same projection is the largest among the acceptance projections of the projective tests that are optimal in the Helstrom problem. The De Morgan dual is the smallest optimizer for the complementary comparison. On each principal angle plane of a generic pair, it has the known midpoint geometry for pairs of projections.

The results suggest several extensions. The measurable symbol class motivates the study of regularity assumptions intermediate between Borel measurability and continuity. The same spectral comparison can also be examined for effects or other unsharp propositions. A separate direction concerns general observables and the threshold operation in Section~\ref{sec:qst}, where the full truth value is controlled by spectral order but intermediate truth projections retain dependence on the chosen implication.

\end{document}